\newtheorem{theorem}{Theorem}[section]
\newtheorem{corollary}{Corollary}[theorem]
\newcommand{\twot}{invariant 2-torus} % Predrag, to distinguish from a PO
\newcommand{\twots}{invariant 2-tori}
\newcommand{\spt}{spatiotemporal}
\newcommand{\catlatt}{spatiotemporal cat}     % Predrag, experimental
\newcommand{\catLatt}{Spatiotemporal cat}     % Predrag, experimental
\newcommand{\Msr}{{\mu}}                % Predrag, experimental: measure
\newcommand{\dMsr}{{d\mu}}              % measure infinitesimal
\newcommand{\PV}{Percival-Vivaldi}
\newcommand{\GO}{Gutkin-Osipov}
\newcommand{\Aa}{\ensuremath{\bar{\A}}}
\newcommand{\A}{\ensuremath{\mathcal{A}}}       % alphabet
\newcommand{\Ai}{\ensuremath{\mathcal{A}_0}}    % alphabet interior (was inner)
\newcommand{\Ae}{\ensuremath{\mathcal{A}_1}}    % alphabet exterior (was outer)
\newcommand{\B}{\mathcal{B}}
\newcommand{\K}{\mathcal{K}}
\newcommand{\D}{\mathcal{D}}
\newcommand{\R}{\ensuremath{\mathcal{R}}}
\newcommand{\Pol}{\mathcal{P}}                  % Boris
\newcommand{\gd}{\mathsf{g}}
\newcommand{\gp}{\mathsf{g}^0}
\newcommand{\m}{\ensuremath{m}}     % Predrag experimental 2016-11-08
\newcommand{\Mm}{\ensuremath{\mathsf{M}}}
\newcommand{\Xx}{\ensuremath{\mathsf{X}}}
\newcommand{\MmR}{\ensuremath{{\Mm_\R}}}
\newcommand{\XxR}{\ensuremath{{\Xx_\R}}}
\newcommand{\Spshift}{\ensuremath{\mathcal{S}}}  %{\mathrm{S}} Predrag 2019-10-19
\newcommand{\Tshift}{\ensuremath{\mathcal{T}}}   %was {\mathrm{T}}
\newcommand{\reals}{\ensuremath{\mathbb{R}}}
\newcommand{\integers}{\ensuremath{\mathbb{Z}}}
\newcommand{\Zz}{\ensuremath{\mathbb{Z}^2}}
\newcommand{\ZLT}{\ensuremath{\R_{\scriptscriptstyle\mathrm{LT}}}} %Predrag 2019-10-19
\newcommand{\beq}{\begin{equation}}
\newcommand{\continue}{\nonumber \\ }
\newcommand{\eeq}{\end{equation}}
\newcommand{\ee}[1] {\label{#1} \end{equation}}
\newcommand{\bea}{\begin{eqnarray}}
\newcommand{\eea}{\end{eqnarray}}
\newcommand{\barr}{\begin{array}}
\newcommand{\earr}{\end{array}}
\newcommand{\rf}     [1] {~\cite{#1}}
\newcommand{\refref} [1] {\cite{#1}}
\newcommand{\refrefs}[1] {\cite{#1}}
\newcommand{\refeq}  [1] {(\ref{#1})}
\newcommand{\reffig} [1] {figure~\ref{#1}}
\newcommand{\reffigs} [2] {figures~\ref{#1} and~\ref{#2}}
\newcommand{\reftab} [1] {table~\ref{#1}}
\newcommand{\refsect}[1] {section~\ref{#1}}
\newcommand{\refsects}[2] {sections~\ref{#1} and \ref{#2}}
\newcommand{\refSect}[1] {Section~\ref{#1}}
\newcommand{\refappe}[1] {\ref{#1}} % Nonlinearity only
\newcommand\period[1]{{\ensuremath{T_{#1}}}}         %continuous cycle period
\newcommand{\cl}[1]{{\ensuremath{|#1|}}}  % the length of a periodic orbit, Ronnie
\newcommand\speriod[1]{{\ensuremath{L_{#1}}}}  %continuous spatial period
\newcommand{\statesp}{state space}
\newcommand{\Statesp}{State space}
\newcommand{\stateDsp}{state-space}
\newcommand{\StateDsp}{State-space}
\renewcommand{\statesp}{phase space}
\renewcommand{\Statesp}{Phase space}
\renewcommand{\stateDsp}{phase-space}
\renewcommand{\StateDsp}{Phase-space}
\newcommand{\admissible}{admissible}
\newcommand{\inadmissible}{inadmissible}
\newcommand{\obser}{\ensuremath{A}} % Boris: an observable from phase space to R^n
\newcommand{\ExpaEig}{\ensuremath{\Lambda}}
\newcommand{\Lyap}{\ensuremath{\lambda}}            %Lyapunov exponent
\newcommand{\mod}{\mbox{\rm mod}\,}
\newcommand{\block}[1]{\ensuremath{#1}} % PC 07sep2008: conflict with beamer
\newcommand{\brick}{block}
\newcommand{\Brick}{Block}
\newcommand{\prune}[1]{\ensuremath{\_{#1}\_}}        % fits into math env.
\newcommand{\Ssym}[1]{{\ensuremath{m_{#1}}}}    % Boris
\newcommand{\po}{periodic orbit}
\newcommand{\pS}{\ensuremath{{\cal M}}}          % symbol for state space
\newcommand{\ssp}{\ensuremath{x}}                % state space point
\newcommand{\Dn}[1]{\ensuremath{\textrm{D}_{#1}}}              % in DasBuch
\newcommand{\dmn}{-dimensional}  %  experimental 220ct2009
\newcommand{\ie}{{i.e.}}        % APS
\begin{document}
    \title[Spatiotemporal cat]
{Linear encoding of the spatiotemporal cat}
    \author{
B Gutkin$^1$,
P Cvitanovi{\'c}$^2$,
R Jafari$^2$,
A K Saremi$^2$
         and
L Han$^2$
    }\address{
$^1$Department of Applied Mathematics,
Holon Institute of Technology, 58102 Holon, Israel\\
$^2$School of Physics,
Georgia Institute of Technology,
Atlanta, GA 30332-0430, USA
    } \ead{borisgu@hit.ac.il}
    \vspace{10pt}
    \begin{indented}
    \item[]
November 15, 2020
    \end{indented}

\begin{abstract}
The dynamics of an extended, spatiotemporally chaotic system might appear
extremely complex. Nevertheless, the local dynamics, observed through a
finite spatiotemporal window, can often be thought of as a visitation
sequence of a finite repertoire of finite patterns. To make statistical
predictions about the system,  one needs to know how often a given
pattern  occurs. Here we address this fundamental question within a
spatiotemporal cat, a 1-dimensional spatial lattice of coupled cat maps
evolving in time. In spatiotemporal cat, any spatiotemporal state is
labeled by a unique 2-dimensional lattice of symbols from a finite
alphabet, with the lattice states and their symbolic representation
related linearly (hence ``linear encoding''). We show that the state of
the system over a finite spatiotemporal domain can be described with
exponentially increasing precision  by a finite pattern of symbols, and
we provide a systematic, lattice Green's function methodology to
calculate the frequency (i.e., the measure) of such states.
\end{abstract}

\pacs{02.20.-a, 05.45.-a, 05.45.Jn, 47.27.ed
% 02.20.-a  Group theory, mathematics
% 05.45.-a 	Nonlinear dynamics and chaos
% 05.45.Jn 	High-dimensional chaos
% 47.27.ed 	Dynamical systems approaches (turbulent flows)
      }

% Uncomment for keywords
\vspace{2pc}
\noindent{\it Keywords}:
chaotic field theory, many-particle systems, coupled map lattices,
periodic orbits, symbolic dynamics, cat maps

\submitto{\NL}

\section{Introduction}
\label{sect:intro}

While the technical novelty of this paper is in working out details of
the {\catlatt}, an elegant, but very special  model of many-particle
dynamics (or discretization of a classical $d$\dmn\ field theory), the
vision that motivates it is much broader. We address here the long
standing problem of how to describe, by means of discrete symbolic
dynamics, the {\spt} chaos (or turbulence) in spatially extended,
strongly nonlinear field theories.

One way to capture  the essential features of turbulent motions of a
physical flow is offered by coupled map lattice models, in which the
spacetime is coarsely discretized, with the dynamics of domains that
capture important small-scale spatial structures modeled by discrete time
maps (Poincar\'e sections of a single ``particle'' dynamics) attached to
lattice sites, and the coupling to neighboring sites consistent with the
translational and reflection symmetries of the problem. Here we shall
follow this path  by investigating the Gutkin and Osipov\rf{GutOsi15}
coupled cat maps lattice (``{\catlatt}'' for short, in what follows),
built from the {Thom-}Anosov-Arnol'd-Sinai cat maps (modeling the
Hamiltonian dynamics of individual ``particles'') at sites of a $1$\dmn\
spatial lattice, linearly coupled to their nearest neighbors.

The key insight (which applies to coupled-map lattices in general, and
PDEs modeled by them, not only the system considered here) is that a
$2$\dmn\ {\spt} pattern is best described by the corresponding $2$\dmn\
{\spt} symbol lattice rather than by  a one-dimensional temporal symbol
sequence, as one usually does when describing a finite coupled
``$N$-particle'' system. The remarkable feature of the {\catlatt}  is
that its every solution is uniquely encoded by a linear transformation to
the corresponding finite alphabet $2$\dmn\ symbol lattice, a {\spt}
generalization of the {linear code} for temporal evolution of a {cat
map}, introduced in the beautiful 1987 paper by Percival and
Vivaldi\rf{PerViv}.

Within the {\catlatt}, a  window into system dynamics is provided by
a finite {\brick}  of  symbols, and the central question is to understand
which symbol \brick s are {\admissible}, and what is the likelihood of a
given {\brick}'s occurrence. It was already noted in \refref{GutOsi15}
that two {\spt} orbits that share the same sub-{\brick} {shadow} each
other exponentially well within the  corresponding {\spt} window. This is
the key property of hyperbolic {\spt} dynamics that we explore in detail
in this paper. The  linearity of the {\catlatt} enables us to go far
analytically; lattice Green's function methods enable us to compute
explicitly the measures of a large set of {\spt}ly finite \brick s, and
give an algorithm for exact computation of the rest (which is
computationally feasible  for small \brick s).

We start by formulating our ``{\catlatt}'' and stating the
main  results of the paper.

\section{Model and overview of the main results}
\label{sect:overview}

\subsection*{\catLatt}

Consider a linear, \statesp\ (area) preserving map of a 2-torus
$\mathbb{T}^2=\reals^2/\integers^2$
onto itself
 \beq
 \left(\begin{array}{c}
   x_{t+1}  \\
   p_{t+1}
  \end{array} \right )=
  A \left(\begin{array}{c}
   x_t  \\
   p_t
  \end{array} \right )\quad \mod 1
\,,\qquad
A = \left (
\begin{array}{cc}
s-1 & 1 \\
s-2 & 1 \\
\end{array}
    \right )
\,,
\ee{eq:CatMapIntr}
where both $x_t$ and $p_t$ belong to the unit interval.  For integer
$s=\tr{A} > 2$ the map is referred to as a cat map\rf{ArnAve}.
It is a
fully chaotic Hamiltonian dynamical system, which, rewritten as a 2-step
difference equation in $(\ssp_{t-1},\ssp_{t})$  takes a particularly
simple form\rf{PerViv}
\beq
\ssp_{t+1}  -  s \, \ssp_{t} + \ssp_{t-1}
    =
-\Ssym{t}
\,,
\ee{eq:CatMapNewton1}
with the unique integer ``winding number'' $\Ssym{t}$  at every time step
$t$ ensuring that $\ssp_{t+1}$ lands in
the given covering partition of the unit torus.
While the dynamics
is linear, the nonlinearity comes through the $(\mod 1)$ operation,
encoded in $\Ssym{t}\in  \A$, where  \A\ is finite alphabet of possible
values for $\Ssym{t}$.

The {cat map} is generalized to the {\em {\spt}} cat map by
considering a 1\dmn\ spatial lattice, with field $\ssp_{n,t}$
at site $n$. If each site
couples only to its nearest neighbors $\ssp_{n\pm1,t}$, and if we require
(1) invariance under spatial translations, (2) invariance under spatial
reflections, and (3) invariance under the space-time exchange, we arrive
at the 2\dmn\ Euclidean {coupled} cat map lattice:
\beq
\ssp_{n,t+1} + \ssp_{n,t-1} - {2s}\,\ssp_{n,t} + \ssp_{n+1,t} + \ssp_{n-1, t}
     = -\Ssym{n,t}
\,.
\ee{eq:CatMapNewton2}
The temporal cat map \refeq{eq:CatMapNewton1}, and the \catlatt\
\refeq{eq:CatMapNewton2} can be brought into uniform notation by
converting the {\spt} differences to discrete derivatives. This yields
the discrete screened Poisson equation\rf{Dorr70,HuCon96} for the
{$2$}\dmn\ {\em \catlatt}
\bea
 (-\Box +{2(s-2)})\,\ssp_{z} &=& \Ssym{z}
\,,\qquad
        {
  \ssp_{z} \in  \mathbb{T}^{1}  \,,\quad
  \Ssym{z} \in \A               \,,\quad
          z\in \integers^{2}
        }
\,,
\continue
 && \A = \{-3, -2,\cdots,{2s}-2,{2s}-1\}
\,,
\label{LinearConn}
\eea
where the Euclidean spacetime  Laplacian is given by
\bea
\Box\,\ssp_t &\equiv& \ssp_{t+1} - 2\,\ssp_{t} + \ssp_{t-1}
    \label{LaplTime}\\
\Box\,\ssp_{n,t}
     &\equiv&
\ssp_{n,t+1} + \ssp_{n+1,t} - 4\,\ssp_{n,t} + \ssp_{n,t-1} + \ssp_{n-1, t}
    \label{LaplSpaceTime}
\eea
in $d=1$ and $2$ dimensions, respectively,
{
As in the $d=1$ case \refeq{eq:CatMapNewton1}, the alphabet $\A$ ranges
over all ``winding number'' $\Ssym{n,t}$ values needed
to ensure that the field $\ssp_{n,t}$ on every lattice site is
confined to the $\mod 1$ interval $[0,1)$.
The \catlatt\ is smooth and fully hyperbolic for
integer ${s}>2$.
    }

The key insight  is that  {\em {$2$}\dmn} {\spt} lattice of integers
\(
\{\Ssym{z}\} = \{\Ssym{z}, z\in \integers^{{2}}\}
\)
is the natural encoding of a {$2$}\dmn\ {\spt} state.
As the relation \refeq{LinearConn} between the lattice state
 \(
\{\ssp_{z}\} = \{\ssp_{z},  z\in \integers^{{2}}\}
 \)
and its encoding   \(\{\Ssym{z}\}\) is linear, we refer to \(\Ssym{z}\) as
the ``linear code'' {following} \refref{PerViv}, both for the cat map
\refeq{eq:CatMapNewton1} in one dimension, and for the \catlatt\
\refeq{LinearConn} in {two} dimensions.
Given a set of $\{\Ssym{z}\}$, the linearity of \refeq{LinearConn}
enables us to determine the corresponding lattice state $\{\ssp_{z}\}$ by
Green's function methods.

This paper builds explicit 2\dmn\ {\catlatt}   symbolic dynamics  using
winding numbers  \(\Ssym{z}\) and Green's functions with Dirichlet
boundary conditions.
The companion paper\rf{CL18} formulates the \po\ theory for $d$\dmn\
\catlatt\ using Green's functions with periodic boundary conditions.

\subsection*{Statement of the problem}

As \catlatt\ \refeq{LinearConn} is a Hamiltonian system, it possesses the
natural measure $\Msr$ (see \refappe{sect:HamiltonCatLatt}) invariant
under space and time translations.
Let
\(
\Xx=\{\ssp_{z} \in  \mathbb{T}^{1},\, z\in \integers^2\}
\)
be a {\spt}ly infinite  solution of $d=2$ \catlatt, % \refeq{LinearConn},
    {
generated by initial conditions generic with respect to $\Msr$
    }
in the fully hyperbolic  regime  $s>2$.
    {
By the linear relation between \Xx\ and
\Mm, symbolic representation of \Xx\ is given
by a unique {\brick}
\(
\Mm= \{\Ssym{z} \in \A \,,\; z\in \integers^2 \}
\,.
\)
    }
Assuming  now that only partial information is available,  over a finite
lattice domain $\R\subset \integers^2$, we would like to know  the
probability that    \Mm\   has  prescribed set of symbols  $\Mm_{\R}$
over  $\R$. Slightly rephrased, the  central question
studied  in this paper are the relative  frequencies $f(\Mm_{\R})$
of symbol {\brick}s within the symbolic representation of a generic
{\spt} pattern:

\bigskip

\textbf{Q1.} \textit{
  How often does a prescribed finite symbol {\brick} $\Mm_{\R}$  occur  in the
  symbolic representation \Mm\ of a generic state \Xx?
}

\bigskip

Our second question is about information stored in a finite symbol
{\brick} $\Mm_{\R}$. For the standard cat map symbolic dynamics based on
a finite Markov partition of the cat map \statesp, a {1\dmn} {\brick} of
symbols defines the corresponding trajectory up to an error which
decreases exponentially with the length of the symbol
{\brick}\rf{bowen,Rue76,BSTV97}. We would like to know whether the
corresponding result holds for $2$\dmn\ linear encoding:

\bigskip

 \textbf{Q2.}
\textit{
To what precision does the symbol \brick\ $\Mm_{\R}$ define
the local {\spt} pattern
\(
\XxR=\{\ssp_{z} \in  \mathbb{T}^{1},\, z\in\R\}
\)
?
 }

\bigskip

\subsection*{Main results}

\medskip
\noindent
Let   $\R=\{(n,t)\,|\,n=1,\dots,\ell_1,t=1,\dots\ell_2\}$  be a rectangular
domain of the lattice (an $[\ell_1\!\times\!\ell_2]$ \spt\ window), and
let \(\Mm_{\R}\) be a $[\ell_1\!\times\!\ell_2]$ \brick\ of symbols
from the alphabet $\A$. Given a generic solution \Xx\ of the equation
\refeq{eq:CatMapNewton2} and its symbolic representation
$\Mm=\{\Ssym{nt}\,|\,(n,t)\in \integers^2\}$  the  space (resp. time) shift
action on it  is given by
\beq
\Spshift \cdot\Mm =\{\Ssym{n+1,t}\,|\,(n,t)\in \integers^2\}
\,, \qquad   \Tshift \cdot \Mm =\{\Ssym{n,t+1}\,|\,(n,t)\in \integers^2\}
\,.
\ee{Shifts}
How often $\Mm_{\R}$ occurs within $\Mm$ is  then  defined
by the double limit
\beq
f(\Mm_{\R})= \lim_{T,L\to \infty} \frac{1}{L T} \sum_{t=1}^T \sum^L_{n=1}
\chi(\Spshift^n  \, \Tshift^t \cdot \Mm\,|\,\Mm_{\R}),
\ee{relFreqNew}
where
\(\chi(\Spshift^n\,\Tshift^t\cdot\Mm\,|\,\Mm_{\R})\)
is the characteristic function that equals to $1$ if symbols of
$\Spshift^n  \, \Tshift^t \cdot\Mm$   coincide with  $\Mm_{\R}$ over $\R$,
and equals to $0$ otherwise.

The $d=2$  \catlatt\  is
{
fully hyperbolic for ${s>2}$, see \refeq{qudreq}.
On the basis of our numerical simulations, we conjecture that the natural
measure $\Msr$, invariant under spatial and temporal shifts  $\Spshift$,
$\Tshift$, is uniquely ergodic, with the initial conditions for \Xx\
chosen to be generic with respect to $\Msr$.
    }
In this case the limiting frequencies  $f(\Mm_{\R})$ for
generic solutions $\Xx$ of \refeq{eq:CatMapNewton2}   are equal to the
\emph{measures} $\Msr( \pS_{\R})$ of the cylinder sets $ \pS_{\R}$, defined as
sets of {\spt} states $\Xx$ with the same $\Mm_{\R}$ \brick\ over
the domain \R. For this reason, we shall refer to the frequencies
estimated by \refeq{relFreqNew} as {measures} of $\Mm_{\R}$,
and denote them by $\Msr(\Mm_{\R})$ in what follows.
\bigskip

\noindent \textbf{Answer to Q1.}
The \catlatt\ admits a natural 2\dmn\ linear encoding with a
finite alphabet.
We find it helpful to split the alphabet into two parts,
\[
\A=\Ai\cup \Ae
\,,
\]
where the number of symbols in the exterior alphabet \Ae\ is fixed, and
the interior alphabet  \Ai\ is a full shift, with the number of symbols
in  \Ai\ growing linearly with $s$.
The following holds:

\begin{itemize}
\item   Any {\brick}  of symbols from  \Ai\ is {\admissible}.
\end{itemize}

For any general {\spt} ergodic system, relative frequencies $f(\Mm_{\R})$
defined by \refeq{relFreqNew} provide a numerical way to
estimate measures $\Msr(\Mm_{\R})$, by generating solutions on finite
$[\speriod{}\!\times\!\period{}]$ domains, compatible everywhere locally with the defining
equation \refeq{LinearConn}, and counting the number of times $\Mm_{\R}$
occurs within each such solution.
However, due to the linear relation between a {\spt} state \Xx\ and its
symbolic encoding \Mm, for the \catlatt\ one can do much better, and compute
measures $\Msr(\Mm_{\R})$ \emph{analytically and explicitly:}

\begin{itemize}

\item   Measures of \brick s $ \Mm_{\R}$ are given by rational numbers
and factorize  into products of constant and geometrical parts:
\[
\Msr(\Mm_{\R})= d_{\R}|\Pol(\Mm_{\R})|
\,.
\]
The constant $d_{\R}$ depends only {on domain \R}, independent of  the
symbolic content $M_{\R}$. The factor  $|\Pol(\Mm_{\R})|$ admits
geometrical interpretation as  the volume of polytope $\Pol(\Mm_{\R})$  in
the $|\partial \R|$\dmn\ Euclidean space, where $|\partial \R|$ is the
number of boundary points of the domain $\R$.   The  {polytope}
$\Pol(\Mm_{\R})$ is determined by the content of $ \Mm_{\R}$. For small
{$|\R|$}, $\Msr(\Mm_{\R})$ can be evaluated analytically, see
\refsects{sect:catFreqEval}{sect:catLattFreqEval}.

\item   If $\Mm_{\R}$  is composed only of  symbols from  \Ai,
then  $|\Pol(\Mm_{\R})|=1$ and $\Msr(\Mm_{\R})=d_{\R}$.
\end{itemize}

\bigskip

\noindent \textbf{Answer to Q2.}  The {\brick}  of  symbols $\Mm_{\R}$ defines the
{\spt} state \Xx\ over \R\ up to an error which decreases exponentially
with the size of the domain \R:

\begin{itemize}
\item The difference between any solution  $\ssp_{z_0} $ of \refeq{LinearConn}
for $z_0 \in \R$ and the  ``average coordinate''  $ \bar{x}({\Mm_\R})$,
determined solely by $\Mm_{\R}$, is bounded by
 \beq
  |\ssp_{z_0}-\bar{x}({\Mm_\R})| \leq C e^{-\nu \ell (z_0, \partial \R)}
  \,, \qquad \nu >0,
\ee{BGapprox}
where $\ell(z_0, \partial \R)$ is the minimal Euclidean distance between
$z_0$ and the  boundary $\partial \R$.
For explicit formulas for $\bar{x}({\Mm_\R})$ in terms of the {\brick} of
symbols ${\Mm_\R}$ see \refeq{inverseq} and \refeq{BGaverCatLattPt}.
\end{itemize}

\noindent Two remarks are in order.
First, all of  the above results hold also for the $d=1$ temporal cat map
if $s>2$, see \refsect{sect:catMap}. In this case, the domain \R\ is just
an interval in $\integers$ with two endpoints, $|\partial \R|=2$.
Second,  it is plausible that our results  hold for \catlatt\
\refeq{LinearConn} on  a lattice of dimension $d$,
{
provided that  {$s>2$}, \ie, the system is in the chaotic regime,
    }
but, in order to streamline the exposition, we discuss here only the 1-
and 2\dmn\ cases.

In principle, having answers to Q1 and Q2  allows  for a calculation
of the expectation values of
observables by means of symbolic dynamics. As $\Mm_{\R}$
defines positions of points in the center $z_0$ of domain \R\ with exponential
precision, any  observable $\obser(z)$  can be viewed as  a
function of $\Mm_{\R}$, $\obser(z_0) \approx \obser\left( \Mm_\R
\right)$, where  the  quality of  the approximation increases
exponentially  with the size of \R. In the limit of large domain size
$|\R|$,  one approximates the sum over states of the lattice with
exponentially increasing accuracy, and has for the average of $\obser$
\beq
\langle \obser \rangle
    =
\lim_{|\R| \to\infty} \sum_{\Mm_{\R}}
      \Msr(\Mm_{\R}) \obser\left( \Mm_\R \right)
\,,
\ee{averObserv}
where the sum is over all {\admissible} \brick s of symbols within \R. In
particular, for $\obser=-\frac{1}{|\R|}\log  \Msr(\Mm_{\R})$ the above
expression defines the {\spt} metric entropy of the system.

\bigskip

The paper is organized as follows:
\refSect{sect:catMap} is devoted to the {cat map} linear encoding,
introduced   in \refsect{sect:catLinSymDyn}.  Its  basic properties   and
the properties of the  corresponding phase space  partitions are
established in \refsect{sect:catLinPartit} and
\refsect{sect:catLinGreen}.
In \refsect{sect:catMfreq} we investigate the measures of {\admissible}
finite symbol \brick s, and show their factorization  into a geometric
part, and a constant part which depends only on the length of the
{\brick}. We evaluate explicitly the geometrical part of measures   for
short \brick s of symbols in \refsect{sect:catFreqEval}.
The key conceptual ingredient that underpins this calculation, hidden in
much algebra in what follows, are the transformations
\refeq{SingleCatJacobian} and \refeq{CoupledFactorization} from the
Hamiltonian initial state formulation to
the Lagrangian end points formulation. This replaces the exponentially
unstable Hamiltonian time evolution problem by a robustly convergent
Lagrangian boundary value problem.

In \refsect{sect:CCMs}, we extend these results to the \catlatt. We show
in \refsect{sect:CCMlinSymDyn}  that the system admits  a natural
2\dmn\ linear encoding with a finite alphabet,
and then  compute the measures of finite
{\spt} symbol \brick s  in \refsect{sect:CCMmeasBrick}.
In \refsect{sect:twots},  we use these  results to construct
sets of {\spt} \twots\ that fully  shadow each other.
Implementing this program requires extensive use of lattice Green's
functions, whose properties  are derived in \refappe{sect:Green}.
The  Hamiltonian formulation of the {\spt} cat map and the
metric entropy estimation are provided in  \refappe{sect:HamiltonCatLatt}
and \refappe{sect:metricEntropy}, respectively.
The results are summarized and some open questions discussed in the
\refsect{sect:summary}.

\section{Cat map}
\label{sect:catMap}

Before turning to the ``many-particle'' case, it is instructive to motivate
our formulation of the {\catlatt} by investigating the {cat map}
(\ie, a ``spatial lattice'' with only one site).
We start by a brief review of the physical origin of cat maps.

Phase space area-preserving maps that describe kicked rotors subject to a
discrete time sequence of angle-dependent impulses $F(x_{t})$,
$t\in\integers$,
\bea
x_{t+1} &=& x_{t}+p_{t+1} \qquad  \mod 1, \label{PerViv2.1b}
    \\
p_{t+1} &=& p_{t} + F(x_{t})             \label{PerViv2.1a}
\,,
\eea
play important role in the theory of chaos in Hamiltonian systems, from the
Taylor, Chirikov and Greene  standard map\rf{Lichtenberg92,Chirikov79}, to
the cat maps that we study here. Here $2\pi x$ is the  angle of the rotor,
$p$ is the momentum conjugate to the configuration coordinate $x$,
$F(x)=F(x+1)$ is periodic with period $1$, and the time step has been set to
$\Delta t= 1$.
Eq.~\refeq{PerViv2.1b} says that in one time step $\Delta t$ the
configuration trajectory starting at $\ssp_{t}$ reaches $\ssp_{t+1} =
\ssp_{t}+p_{t+1}\Delta t$, and \refeq{PerViv2.1a} says that at each kick
the angular momentum $p_{t}$ is accelerated to $p_{t+1}$ by the force
pulse $F(x_{t})\Delta t$. As the values of $x$ differing by integers are
identified, and the momentum $p$ is unbounded, the \statesp\ is a
cylinder. However, to analyse the dynamics, one can just as well
compactify the \statesp\ by folding the momentum dynamics onto a circle,
by adding ``$\mod 1$'' to \refeq{PerViv2.1a}. This reduces the dynamics
to a toral automorphism acting on a  $[0,1)\times [0,1)$ square of unit
area, with the opposite sides identified.

The simplest example of (\ref{PerViv2.1b},\ref{PerViv2.1a}) is a rotor subject to a
force
\(
 F(x) = Kx
\) %ee{PerViv2.5}
linear in the displacement $x$. The $\mod 1$ added to \refeq{PerViv2.1a}
makes this a discontinuous ``sawtooth,'' unless $K$ is an integer. In that
case the map (\ref{PerViv2.1b},\ref{PerViv2.1a}) is a continuous automorphism
of the torus, or a ``cat map''\rf{ArnAve}, a linear symplectic
map on the unit 2-torus \statesp,
\( %\beq
(x \mapsto A x
    \,%\qquad
|\,
x
  \in \mathbb{T}^2 =  \reals^2/\integers^2
    \,;\; %\qquad
A \in SL(2,\integers)
)
\,,
\) %\ee{LinSympMap}
with coordinates $x =(x_t,p_t)$
interpreted as the angular position variable and its conjugate momentum
at time instant $t$. Explicitly:
 \beq
 \left(\begin{array}{c}
 x_{t+1}  \\
   p_{t+1}
  \end{array} \right )=
  A \left(\begin{array}{c}
 x_t  \\
   p_t
  \end{array} \right )\quad \mod 1
    \,,  \qquad
 {A} =\left(\begin{array}{cc}
 a & c \\
 d & b
  \end{array} \right)
\,,
\ee{eq:CatMap}
where $a,b,c,d$ are any integers that satisfy
$\det A=1$, so that the map is symplectic (area preserving).

A cat map is a fully chaotic Hamiltonian dynamical system if its
stability multipliers
$(\ExpaEig\,,\; \ExpaEig^{-1})$, where
\beq
\ExpaEig=(s+\sqrt{(s-2)(s+2)})/2
\,,\qquad
\ExpaEig=e^{\Lyap}
\,,
\ee{StabMtlpr}
are real, with a positive Lyapunov exponent $\Lyap >0$. The eigenvalues are
functions of a single parameter $s=\tr{A}=\ExpaEig+\ExpaEig^{-1}$,
and the map is chaotic if and only if $|s| > 2$.
We shall refer here to the least unstable of the cat
maps \refeq{eq:CatMap}, with $s=3$, as the ``Arnol'd'', or
``Arnol'd-Sinai cat map''\rf{ArnAve,deva87}, and to general maps with integer
$s\geq 3$ as ``cat maps''.
Cat maps have been extensively analyzed as particularly simple examples
of chaotic Hamiltonian dynamics. They  exhibit ergodicity, mixing,
exponential sensitivity to variation of the initial  conditions (the
positivity of the Lyapunov exponent), and  the
positivity of  the  Kolmogorov-Sinai  entropy\rf{StOtWt06}.
Detailed
understanding of dynamics of cat maps is important also for the much richer
world of nonlinear hyperbolic toral automorphisms,
see \refrefs{Creagh94,GarSar04,SlBaJu16} for examples.

\subsection{Linear encoding}
\label{sect:catLinSymDyn}
%%%%%% ChaosBook convention start %%%%%%%%%%%%%%%%%%%%%
\renewcommand{\statesp}{state space}
\renewcommand{\Statesp}{State space}
\renewcommand{\stateDsp}{state-space}
\renewcommand{\StateDsp}{State-space}

Eqs.~(\ref{PerViv2.1b},\ref{PerViv2.1a}) are the discrete-time Hamilton's
equations, which induce temporal evolution on the 2-torus
$(\ssp_{t},p_{t})$ {\em phase space}. For the problem at hand, it pays to
go from the Hamiltonian $(\ssp_{t},p_{t})$ phase space formulation to the
Newtonian $(\ssp_{t-1},\ssp_{t})$ {\em \statesp} formulation\rf{PerViv},
with $p_t$ replaced by
\(
p_t = (\ssp_{t} - \ssp_{t-1})/\Delta t \,.
\)
Eq.~(\ref{PerViv2.1a}) then takes the 2-step difference form (the discrete
time Laplacian $\Box$ formula for the second order time derivative
${d^2}/{dt^2}$, with the time step set to $\Delta t=1$),
\beq
\Box \, \ssp_t \equiv \ssp_{t+1} - 2\ssp_{t} + \ssp_{t-1} = F(\ssp_{t})  \qquad  \mod 1
\,,
\ee{PerViv2.2}
\ie, Newton's Second Law: ``acceleration equals force.''
For a cat map, with force $F(x)$ linear in the displacement $x$, the
Newton's equation of motion \refeq{PerViv2.2} takes the form
\beq
(\Box  +2 - s)\,\ssp_{t} =-\Ssym{t}
\,,
\ee{OneCat}
with $\mod 1$ enforced by $\Ssym{t}$'s, integers from  the alphabet
\beq
\A=\{\underline{1},0,\dots s\!-\!1\}
\,,
\ee{catAlphabet}
necessary  to keep $\ssp_{t}$ for all times $t$ within the unit interval
$[0,1)$.
For the sake of notational convenience, we have introduced here the
symbol
{$\underline{m_t}$}
to denote $m_t$ with the negative sign, \ie, `$\underline{1}$' stands for
the symbol `$-1$'.

%%%%%%%%%%%%%%%%%%%%%%%%%%%%%%%%%%%%%%%%%%%%%%%%%%%%%%%%%%%%%%%%
  \begin{figure}
  \begin{center}  %%% 2016-12-25  see
                  %%% siminos/figsSrc/inkscape/CatMapStatesp.svg
  \setlength{\unitlength}{0.65\textwidth}
 %% \unitlength = units used in the Picture Environment
  \begin{picture}(1,0.81984366)%
    \put(0,0){\includegraphics[width=\unitlength]{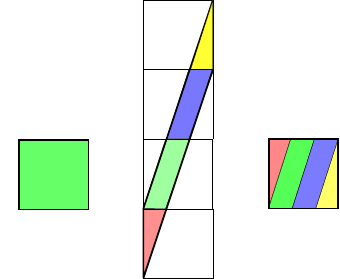}}%
    \put(-0.025,0.15){\color[rgb]{0,0,0}\makebox(0,0)[lb]{\smash{$(0,0)$}}}%
    \put(0.26669086,0.17307744){\color[rgb]{0,0,0}\makebox(0,0)[lb]{\smash{B}}}%
    \put(0.26669086,0.41839762){\color[rgb]{0,0,0}\makebox(0,0)[lb]{\smash{C}}}%
    \put(0.02137069,0.41839762){\color[rgb]{0,0,0}\makebox(0,0)[lb]{\smash{D}}}%
    \put(0.38935094,0.00135332){\color[rgb]{0,0,0}\makebox(0,0)[lb]{\smash{B}}}%
    %\put(0.32,0.17){\color[rgb]{0,0,0}\makebox(0,0)[lb]{\smash{$(0,0)$}}}%
    \put(0.64284833,0.60647641){\color[rgb]{0,0,0}\makebox(0,0)[lb]{\smash{C}}}%
    \put(0.64284833,0.79455521){\color[rgb]{0,0,0}\makebox(0,0)[lb]{\smash{D}}}%
    %\put(0.79004043,0.42657495){\color[rgb]{0,0,0}\makebox(0,0)[lb]{\smash{A}}}%
    %\put(0.79004043,0.17307744){\color[rgb]{0,0,0}\makebox(0,0)[lb]{\smash{B}}}%
    %\put(0.96994189,0.17307744){\color[rgb]{0,0,0}\makebox(0,0)[lb]{\smash{C}}}%
    %\put(0.97811923,0.42657495){\color[rgb]{0,0,0}\makebox(0,0)[lb]{\smash{D}}}%
    \put(0.02,0.27){\color[rgb]{0,0,0}\rotatebox{90.0}{\makebox(0,0)[lb]{\smash{$\ssp_{0}$}}}}%
    \put(0.39,0.27){\color[rgb]{0,0,0}\rotatebox{90.0}{\makebox(0,0)[lb]{\smash{$\ssp_{1}$}}}}%
    \put(0.76,0.27){\color[rgb]{0,0,0}\rotatebox{90.0}{\makebox(0,0)[lb]{\smash{$\ssp_{1}$}}}}%
    \put(0.1469525,0.173){\color[rgb]{0,0,0}\makebox(0,0)[lb]{\smash{$\ssp_{-1}$}}}%
    \put(0.51493279,0.174){\color[rgb]{0,0,0}\makebox(0,0)[lb]{\smash{$\ssp_{0}$}}}%
    \put(0.86655824,0.175){\color[rgb]{0,0,0}\makebox(0,0)[lb]{\smash{$\ssp_{0}$}}}%
    \put(0.21762677,0.55482852){\color[rgb]{0,0,0}\rotatebox{43.35476392}{\makebox(0,0)[lb]{\smash{stretch}}}}%
    \put(0.74915379,0.61465381){\color[rgb]{0,0,0}\rotatebox{-46.94301089}{\makebox(0,0)[lb]{\smash{wrap}}}}%
  \end{picture}%
\end{center}
   \caption{ \label{fig:CatMapStatesp}
(Color online)
The Newtonian $s=3$  Arnol'd cat map matrix $A'$ \refeq{eq:StateSpCatMap}
keeps the origin $(0,0)$ fixed, but otherwise stretches the unit square
into a parallelogram. Translations by $\Ssym{0}$ from alphabet
$\A=\{\underline{1},0,1,2\}=$
\{%
{\color{red}red},
{\color{green}green},
{\color{blue}blue},
{\color{yellow}yellow}%
\}
bring stray regions back onto the torus.
   }
 \end{figure}
%%%%%%%%%%%%%%%%%%%%%%%%%%%%%%%%%%%%%%%%%%%%%%%%%%%%%%%%%%%%%%%%

\subsection{Percival-Vivaldi linear encoding partition of the \statesp}
\label{sect:catLinPartit}

To interpret $\Ssym{t}$'s, consider the action of
the Newtonian cat map \refeq{OneCat} on a 2\dmn\ \statesp\ point
$(\ssp_{t-1},\ssp_{t})$,
\beq
 \left(\begin{array}{c}
 \ssp_{t}  \\
 \ssp_{t+1}
 \end{array} \right )=
 A' \left(\begin{array}{c}
 \ssp_{t-1}  \\
 \ssp_{t}
 \end{array} \right ) %\mbox{ mod } 1
 - \left(\begin{array}{c}
 0  \\
 \Ssym{t}
 \end{array} \right )
 \,,  \qquad
 {A'} =\left(\begin{array}{cc}
 0 & 1 \\
 -1 & s
 \end{array} \right)
\,.
\ee{eq:StateSpCatMap}
In Percival and Vivaldi\rf{PerViv}, this representation of cat map is
referred to as ``the two-configuration representation''.
As illustrated in
\reffig{fig:CatMapStatesp}, in one time step the area preserving
map $A'$ stretches the unit square into a parallelogram, and a
point $(\ssp_{0},\ssp_{1})$ within the initial unit square
in general lands outside it, in another unit square $\Ssym{t}$
steps away. As they shepherd such stray points back into the unit
torus, the integers $\Ssym{t}$ can be interpreted as ``winding
numbers''\rf{Keating91}, or ``stabilising impulses''\rf{PerViv}.
The $\Ssym{t}$ translations reshuffle the \statesp, thus partitioning it into
$|\A|$ regions $\pS_\Ssym{}$, $\Ssym{}\in\A$.

As illustrated by \reffig{fig:CatMapStatesp}, there are the two kinds of
pieces within the state  space partition: the parallelograms
$\pS_0,\dots,\pS_{s-2}$, and the two exterior half sized  triangles
$\pS_{\underline{1}}$, $\pS_{s-1}$, labeled by the $(s\!-\!1)$-letter
\emph{interior} alphabet $\Ai$, and the two-letter \emph{exterior}
alphabet $\Ae$, respectively. For integer $s\geq2$ these alphabets are
\beq
\A=\Ai\cup\Ae
    \,,\qquad
\Ai=\{0, \cdots, s\!-\!2\}
    \,,\qquad
\Ae=\{\underline{1}, s\!-\!1\}
\,.
\ee{1dCatAlphs}

Refinements of these partitions work very much like they do for the
baker's map and the Smale horseshoe, by peering further into the future
and the past, and constructing the intersections of the future and past
partitions\rf{DasBuch}. The ``$\ell$-th level'' of partition $\pS = \cup
\pS_b$ is labeled by the set of all {\admissible} {\brick s} $b$ of length
$\ell$, composed of the past $\ell-t-1$ steps, and future $t$ steps, with
`decimal point' denoting the present,
\[
b  =
 \Ssym{t-\ell+1}\cdots\Ssym{-1}\Ssym{0}.\Ssym{1}\Ssym{2}\cdots\Ssym{t}
\,.
\]
    {
For the cat map symbol \brick s $\Mm_{\R}$ is 1\dmn, and a domain \R\
consists of $\ell$ consecutive temporal lattice sites, so in this section
we shall denote $\Mm_{\R}$ by a \brick\ $b$ of length $\ell$, and refer to
the infinite length symbol \brick\ as `itinerary'.
    }

While an {\admissible} infinite itinerary defines a unique point in the
\statesp, a finite {\brick} $b$ determines
a \emph{cylinder set}
$\pS_b$, the set of all points
in $(x_0,x_1)$ plane having itineraries of the form
\[
\cdots a_{t-\ell-1} a_{t-\ell}\Ssym{t-\ell+1}\cdots\Ssym{-1}\Ssym{0}
.
\Ssym{1}\Ssym{2}\cdots\Ssym{t} a_{t+1}a_{t+2}\cdots
\,,
\]
with fixed $\Ssym{i}$'s, and arbitrary $a_{i}\in\A$.
How these {\brick s} partition the \statesp\ is best understood by
inspecting \reffig{fig:SingleCatPartit}.

%%%%%%%%%%%%%%%%%%%%%%%%%%%%%%%%%%%%%%%%%%%%%%%%%%%%%%%%%%%%%%%%%%
\begin{figure}
	\centering
	\includegraphics[width=0.95\textwidth]{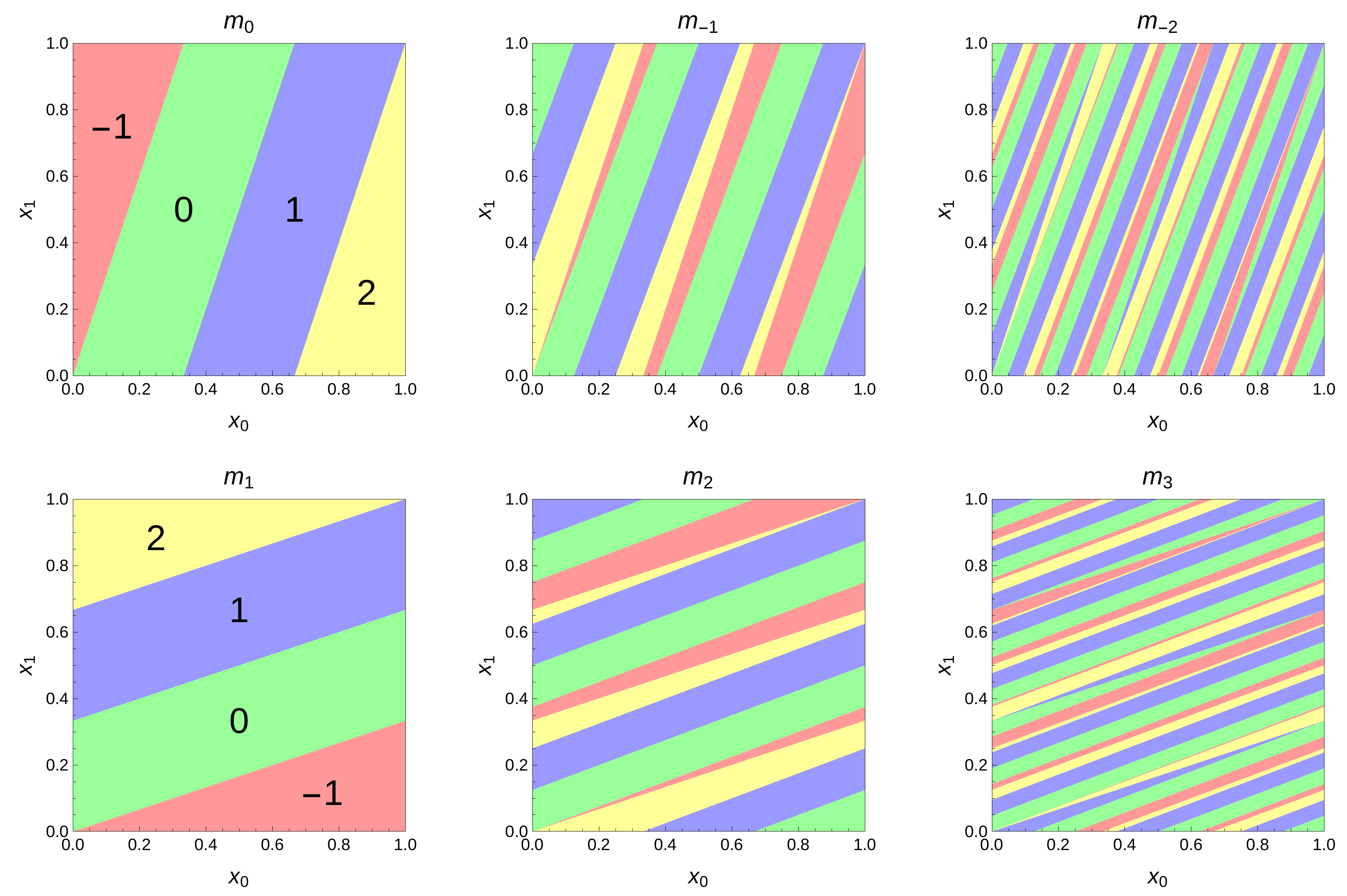}
\\
	\includegraphics[width=0.97\textwidth]{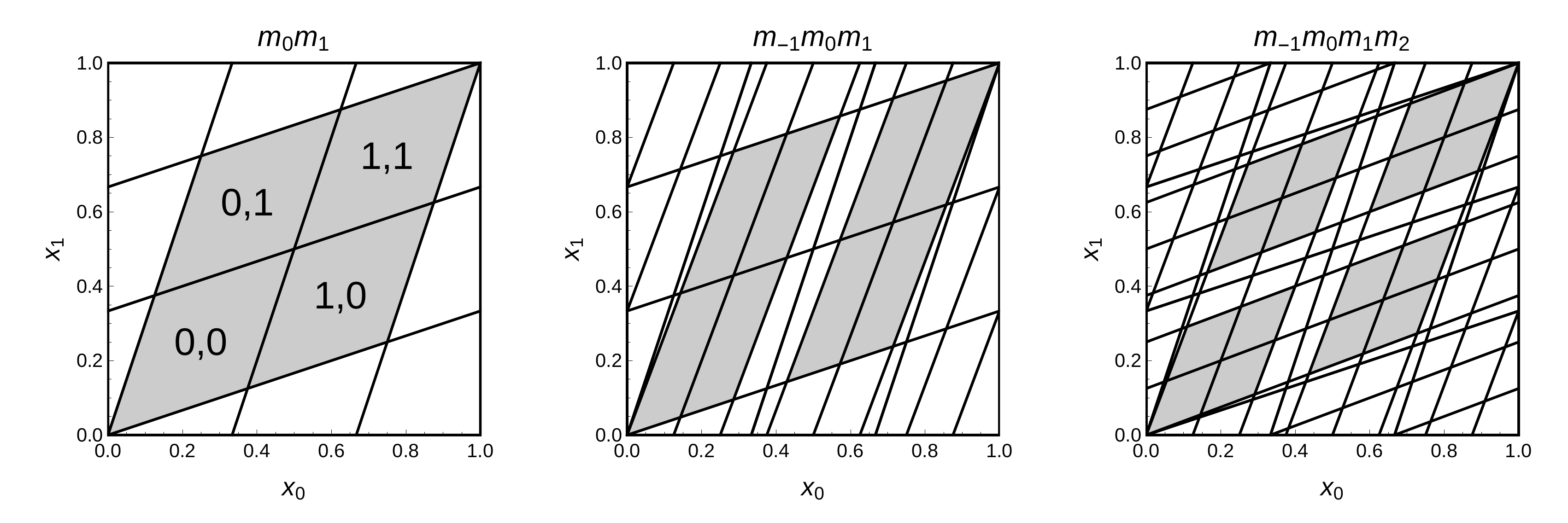}
	\caption{\label{fig:SingleCatPartit}
(Color online)
% Newtonian
Arnol'd cat map $(\ssp_{0},\ssp_{1})$  \statesp\ partition into
(a) 4 regions labeled by $\Ssym{0}.$ , obtained from
$(\ssp_{-1},\ssp_{0})$ \statesp\ by one iteration (the same as
\reffig{fig:CatMapStatesp}).
(b) 14 regions labeled by past {\brick} $\Ssym{-1}\Ssym{0}.$, obtained from
$(\ssp_{-2},\ssp_{-1})$ \statesp\ by two iterations.
(c) 44 regions, past {\brick} $\Ssym{-2}\Ssym{-1}\Ssym{0}.$
(d) 4 regions labeled by $.\Ssym{1}$ , obtained from
$(\ssp_{2},\ssp_{1})$ \statesp\ by one backward iteration.
(e) 14 regions labeled by future {\brick} $.\Ssym{1}\Ssym{2}$ , obtained from
$(\ssp_{3},\ssp_{2})$ \statesp\ by two backward iterations.
(f) 44 regions, future {\brick} $\Ssym{3}\Ssym{2}\Ssym{1}.$
Each color has the same total area ($1/6$ for $\Ssym{t} = \underline{1},
2$, and $1/3$ for $\Ssym{t} = 0, 1$).
\Statesp\ partition into
(g) 14 regions labeled by {\brick} $b = \Ssym{0}.\Ssym{1}$, the
intersection of one past (a) and
one future iteration (d).
(h) {\brick} $b = \Ssym{-1}\Ssym{0}.\Ssym{1}$, the intersection of two
past (b) and one future iteration (d).
(i) {\brick} $b = \Ssym{-1}\Ssym{0}.\Ssym{1}\Ssym{2}$, the intersection
of two past (b) and two future iterations (e).
Note that while some regions involving external alphabet (such as
\prune{22} in (g)) are pruned, the interior alphabet labels a horseshoe,
indicated by the shaded regions.
The first three covers of the horseshoe have areas (g) $4 \times
1/8$, (h) $8 \times 1/21$, and (i) $16 \times 1/55$.
All boundaries are straight lines with rational slopes.
	}
\end{figure}
%%%%%%%%%%%%%%%%%%%%%%%%%%%%%%%%%%%%%%%%%%%%%%%%%%%%%%%%%%%%%%%%%%

%%%%%%%%%%%%%%%%%%%%%%%%%%%%%%%%%%%%%%%%%%%%%%%%%%%%%%%%%%%%%
 \begin{table}
\begin{center}
\begin{tabular}{ c|cccc }
 & \underline{1} & 0 & 1 & 2\\
  \hline
 \underline{1} &  0  & 0.0208 &  0.0625 &  0.0833\\
 0 &  0.0208 &  0.1250 &  0.1250 &  0.0625 \\
 1 & 0.0625 & 0.1250 & 0.1250 &  0.0208\\
 2 & 0.0833 & 0.0625 &  0.0208&  0\\
\end{tabular}
%\approx
\quad
\begin{tabular}{ c|cccc }
 & \underline{1} & 0 & 1 & 2\\
  \hline
 \underline{1} &  0    & 1/48 &  1/16 &  1/12\\
 0 &  1/48 &  1/8 &  1/8 &  1/16 \\
  1 &  1/16 & 1/8  &  1/8 &  1/48\\
  2 & 1 /12 & 1/16 &  1/48 &  0\\
\end{tabular}
\end{center}
  \caption{\label{tab:RJ2letFreq}
The measures $\Msr(\Ssym{i}\Ssym{i+1})$ of the 16 distinct 2-symbol \brick s
$\Ssym{i} \Ssym{i+1}$ for the  $s=3$ Arnol'd cat map,
    (left)
obtained from a long-time ($\sim10^9$ iterations) numerical simulation
rounded off to  four significant  digits;
    (right)
the exact values given by \refeq{PairsFreq}, or read off as sub-partition
areas in \reffig{fig:SingleCatPartit}\,(g).
Column: \ $\Ssym{i}$.  Row: \ $\Ssym{i+1}$.
See \reffig{fig:SingleCatPartit} for a geometric picture of why {\brick
s}  $\underline{1}\underline{1}$ and $22$ are pruned.
  }
\end{table}
%%%%%%%%%%%%%%%%%%%%%%%%%%%%%%%%%%%%%%%%%%%%%%%%%%%%%%%%%%%%%

\subsection{From itineraries to orbits and back}
\label{sect:catLinGreen}

The power of the linear encoding for a cat map\rf{PerViv} is that one can
use  integers $\Ssym{t}$ to encode its \statesp\ trajectories. Since the
connection \refeq{OneCat}  between  sequences of $\Ssym{t}$ and
$\ssp_{t}$ is linear, it is straightforward  to go back and forth between
\statesp\ and symbolic representation of an orbit. In particular, if \(
\{\Ssym{t}\} \) is an {\admissible} itinerary, the corresponding
\statesp\ point  at the  $t$ time instant is given by the inverse of
\refeq{OneCat},
\beq
  \ssp_{t}=\sum_{t'=-\infty}^\infty \gd_{tt'} \Ssym{t'}
  \,, \qquad
  \gd_{tt'} =
       \left(\frac{1}{-\Box -2 +s}\right)_{tt'}
 \,.
\ee{Coord}
The matrix $\gd_{tt'}$ is the  Green's function for 1\dmn\
discretized heat equation\rf{PerViv,varcyc} given explicitly by
$\gd_{tt'}=\ExpaEig^{-|t-t'|}/(\ExpaEig-\ExpaEig^{-1})$,
$s=\ExpaEig+\ExpaEig^{-1}$, see \refeq{StabMtlpr} and \refappe{sect:Green}.

Although the recovery  of \statesp\ \po s from finite symbol \brick s is
straightforward for the linear encoding, it is not easy to
describe the grammar rules for which symbol \brick s are
{\admissible}\rf{PerViv87b}. For the linear encoding presented
here, there is no finite set of short pruned {\brick} grammar rules, in
contrast to linear encoding for the Adler--Weiss Markov {generating}
partition of the  cat map \statesp\ given in \refref{DasBuch}. An
itinerary $\dots \Ssym{-1}\Ssym{0}\Ssym{1}\dots$  is {\admissible} if and
only if each  of the corresponding \statesp\ orbit points $\ssp_t$ in
\refeq{Coord} is in the unit interval $[0,1)$. Therefore, there is an
infinite number of conditions to satisfy. All  these conditions, however,
are automatically satisfied if the symbols $\Ssym{t}$ belong to the
interior alphabet  $\Ai$ \refeq{1dCatAlphs}. Indeed, if
$0\leq\Ssym{t}\leq s\!-\!2$ for all $t$, then
\beq
0\leq \sum_{t=-\infty}^{\infty}
      \frac{\Ssym{t}\ExpaEig^{-|t|}}{\ExpaEig-\ExpaEig^{-1}}
 \leq  \sum_{t=-\infty}^{\infty}
      \frac{(\ExpaEig^{-1}+\ExpaEig-2) \ExpaEig^{-|t|}}{\ExpaEig-\ExpaEig^{-1}}
 =1
 \,,
\ee{innerShift}
and  all  $\ssp_{t}$  generated by \refeq{Coord} fall into $[0,1)$.
As a result,  the interior part of the lattice states,  $\Ai^\integers$ is
a full shift,  with any infinite sequence of $\Ssym{t}\in\Ai$ being
{\admissible}. All grammar rules (``pruning'' of {\admissible} \brick s)
necessarily involve symbols from the {exterior} alphabet $\Ae$.

\subsection{Measures of cylinder sets}
\label{sect:catMfreq}

\subsubsection{Numerics.}

The \statesp\ coordinates $(\ssp_{0},\ssp_{1})$ are, up to the linear
shift $p_t = \ssp_{t} - \ssp_{t-1} \to \ssp_{t}$, equivalent to the
Hamiltonian $(\ssp_{0},p_{0})$ {phase space coordinates}, and as the cat map is
invertible, ergodic, and area preserving, with the invariant measure
$\dMsr=dx_t dp_t=d\ssp_{t}\,d\ssp_{t-1}$ and uniform invariant density
$\rho(\ssp_0, \ssp_1) = 1$, the measure $\Msr({b})$ corresponding to a
{\brick} $b$ %=\Ssym{1}\Ssym{2}\cdots \Ssym{\ell}$
equals to the  area $|\pS_b|$ of a {\statesp} region $\pS_b$.
The $(\ssp_{0},\ssp_{1})$  \statesp\ is composed of a disjoint union of
regions $\pS = \cup \pS_b$ labelled by all
{\admissible} {\brick s} of a fixed length $|{b}|=\ell$, so the sum of all
measures $\Msr(b)$ equals the total area of the \statesp\ $|\pS|=1$,
\beq
   \sum_{|{b}|=\ell} \Msr({{b}})=1
  \,.
\ee{totMeasure} Area sums over
subpartitions, such as
\bea
 \sum_{\Ssym{1}} \Msr(\Ssym{1} \Ssym{2} \cdots \Ssym{\ell})
     &=& \Msr(\Ssym{2} \cdots \Ssym{\ell})
\continue
 \sum_{\Ssym{\ell}} \Msr(\Ssym{1} \Ssym{2} \cdots \Ssym{\ell})
     &=& \Msr(\Ssym{1} \cdots \Ssym{\ell-1})
     \,,
\label{MarginalFreq}
\eea
provide consistency checks for  computations.

By the ergodic  theorem, the relative frequency of  appearances of a
{\brick} ${b}$ in a generic ergodic trajectory equals $\Msr(b)$. This
allows for numerical estimates of $\Msr(b)$ by long ergodic
trajectories, as illustrated in \reftab{tab:RJ2letFreq}. For the problem
at hand, there is, in principle,  no need for such simulations, as the
areas $|\pS_{b}|$ of partitions for short \brick s $b$ can be evaluated
exactly using, for example, Mathematica geometric computation
tools\rf{Mathematica}. We have computed such tables for partitions up to
\brick\ length $|{b}|=12$, but the results are quickly too unwieldy and
unilluminating to tabulate here. We visualize instead the measures by
their areas in the $(\ssp_{0},\ssp_{1})$ plane, as illustrated in
\reffig{fig:SingleCatPartit}.

\subsubsection{Analytics.}

The number of $\pS_{b}$   grows exponentially with $|b|$, while  their
areas  $|\pS_{b}|$  shrink exponentially.
Furthermore, for  larger $|b|$, the domains $\pS_{b}$  split into
disjoint sets, making  it hard to determine their areas and the pruning
rules for longer \brick s.
Because of this, for the analytical  calculation of measures $\Msr(b)$,
the Lagrangian reformulation of the problem, with  fixed
boundary points $ \ssp_0$,  $\ssp_{\ell+1}$, turns out to be  more
powerful. Moreover, as we show in \refsect{sect:CCMmeasBrick},  the
Lagrangian formulation generalizes  in a natural way to the \catlatt\ in
any spatial dimension.

Let $\{\ssp_t\}$ be   a trajectory  generated by the cat map
and let  $\{\Ssym{t}\} $ be its symbolic  representation.
As we show in
\refappe{sect:Green}, the state $\ssp_t$ at time
$t\in \{1,\dots \ell\}$ can be
expressed through the {\brick}  ${b}=\Ssym{1}\Ssym{2}\dots
\Ssym{\ell}$ at the times  $1,\dots \ell$, and  the boundary
coordinates $(\ssp_0, \ssp_{\ell+1})$:
\beq
  \ssp_t=\sum_{t'=1}^{\ell} \gd_{tt'}\Ssym{t'}
          +\gd_{t1}\ssp_0+\gd_{t\ell}\ssp_{\ell+1}
  \,, \qquad t=1,\dots \ell,
\ee{inverseq}
where  $\gd$ is  the discrete Green's function  with the Dirichlet boundary
conditions at $t=0$ and $t=\ell+1$.

Explicitly,  $\gd_{tt'}$ can be expressed in terms of Chebyshev
polynomials of the second kind
$U_n(s/2)={\sinh[(n+1)\Lyap]}/{\sinh \Lyap}$:
\[
 \gd_{ij}= \left\{\begin{array}{lr}
        \frac{U_{i-1}(s/2)U_{\ell-j}(s/2)}{U_{\ell}(s/2)}, & \mbox{for } i\leq j\\
        \frac{U_{j-1}(s/2)U_{\ell-i}(s/2)}{U_{\ell}(s/2)}, & \mbox{for } i> j .
        \end{array}\right.
\]
The first term on the right hand side of  \refeq{inverseq},
\beq
  \bar{x}_i({b})=\sum_{j=1}^{\ell} \gd_{ij}\Ssym{j}
\,,
\ee{catMapAverCoord}
can be thought of as the ``approximate state'' at time $i$. Indeed, by
\refeq{inverseq} we have
\beq
 |\ssp_i-\bar{x}_i({b})|
 = \left|\frac{U_{\ell-i}(s/2)}{U_{\ell}(s/2)}\ssp_0
   + \frac{U_{i-1}(s/2)}{U_{\ell}(s/2)} \ssp_{\ell+1}\right|
   \leq \frac{\cosh (\frac{1}{2}(\ell+1)-i)\Lyap }{\cosh \frac{1}{2}(\ell+1)\Lyap}
\,.
\ee{error}
Hence the {\brick} ${b}$ determines  the lattice state at
the center
$i=\lfloor\ell/2\rfloor$  of the \brick\,  up to an exponentially small
error in $\ell$, of the order $e^{-\ell\Lyap/2}$.

The following theorem allows  for evaluation of symbol \brick s measures.
\begin{theorem}\label{catTheorem}
Let $b$ be a finite \brick\ of $\ell$ symbols.  The corresponding
measure is given by the product
\beq
 \Msr({b}) =d_\ell |\Pol_{b}|,  \qquad  d_\ell =
  {1}/{U_{\ell}({s}/{2})},
\ee{FreqDecomp}
where
 $|\Pol_{b}|$ is  the area of the polygon $\Pol_{b}$
defined  by the inequalities
\begin{eqnarray}
 & 0&\leq \bar{x}_i({b})
     +\frac{U_{\ell-i}(s/2)}{U_{\ell}(s/2)}\ssp_0
     +\frac{U_{i-1}(s/2)}{U_{\ell}(s/2)} \ssp_{\ell+1}<1
     \,,\qquad i=1,\dots ,\ell,
\label{SquareCut1} \\
 & 0&\leq \ssp_0 <1, \qquad  0\leq \ssp_{\ell+1} <1
 \,
\label{SquareCut2}
\end{eqnarray}
in the plane $(\ssp_0,\ssp_{\ell+1})$.
\end{theorem}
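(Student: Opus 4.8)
The plan is to compute the area $|\pS_b|$ of the cylinder set directly, exploiting the fact that once the symbol \brick\ $b=\Ssym{1}\cdots\Ssym{\ell}$ is fixed, the Newtonian cat map \refeq{eq:StateSpCatMap} reduces to a genuinely \emph{affine} map of the $(\ssp_0,\ssp_1)$ unit square (the $\mod 1$ nonlinearity is frozen out by the prescribed winding numbers). Because the invariant density is uniform, $\Msr(b)=|\pS_b|$, as recorded in the Numerics subsection, so it suffices to evaluate this area after passing from the Hamiltonian initial data $(\ssp_0,\ssp_1)$ to the Lagrangian endpoint data $(\ssp_0,\ssp_{\ell+1})$.

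First I would make the bijection $\pS_b\leftrightarrow\Pol_b$ explicit. A point $(\ssp_0,\ssp_1)\in[0,1)^2$ lies in $\pS_b$ iff its orbit realizes the prescribed winding numbers; by the defining property of $\Ssym{t}$ this is equivalent to the forward positions $\ssp_2,\dots,\ssp_{\ell+1}$ all landing in the fundamental domain $[0,1)$, which together with the state-space constraint $\ssp_0,\ssp_1\in[0,1)$ amounts to $\ssp_i\in[0,1)$ for every $i=0,\dots,\ell+1$. Solving the fixed-symbol recurrence \refeq{OneCat} with Dirichlet data $(\ssp_0,\ssp_{\ell+1})$ gives \refeq{inverseq}, i.e.
\[
\ssp_i=\bar{x}_i(b)+\frac{U_{\ell-i}(s/2)}{U_{\ell}(s/2)}\ssp_0+\frac{U_{i-1}(s/2)}{U_{\ell}(s/2)}\ssp_{\ell+1},
\]
the form already recorded in \refeq{error}. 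The conditions $0\le\ssp_i<1$ for $i=1,\dots,\ell$ are then \emph{verbatim} the inequalities \refeq{SquareCut1}, while $0\le\ssp_0,\ssp_{\ell+1}<1$ are \refeq{SquareCut2}; jointly they cut out the polygon $\Pol_b$ in the $(\ssp_0,\ssp_{\ell+1})$ plane. Since the fixed-symbol map is affine with non-vanishing linear part, it restricts to a bijection of $\pS_b$ onto $\Pol_b$.

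Next I would evaluate the Jacobian of the coordinate change $(\ssp_0,\ssp_1)\mapsto(\ssp_0,\ssp_{\ell+1})$, which is constant because the map is affine. Holding $\ssp_0$ fixed and reading the slope off the $i=1$ case above ($U_0=1$),
\[
\frac{\partial(\ssp_0,\ssp_1)}{\partial(\ssp_0,\ssp_{\ell+1})}=\frac{\partial\ssp_1}{\partial\ssp_{\ell+1}}\Big|_{\ssp_0}=\frac{U_0(s/2)}{U_{\ell}(s/2)}=\frac{1}{U_{\ell}(s/2)}=d_\ell.
\]
Equivalently, writing $\ssp_{\ell+1}$ through the $\ell$-fold iterate $M=(A')^{\ell}$ of \refeq{eq:StateSpCatMap} gives $\partial\ssp_1/\partial\ssp_{\ell+1}=1/M_{22}$, and the Chebyshev recurrence for powers of $A'$ yields $M_{22}=U_\ell(s/2)$ (checked directly for small $\ell$: $M_{22}=s=U_1(s/2)$, $s^2-1=U_2(s/2)$, and so on). The change-of-variables formula then gives $|\pS_b|=\int_{\Pol_b}d_\ell\,d\ssp_0\,d\ssp_{\ell+1}=d_\ell|\Pol_b|$, which is \refeq{FreqDecomp}.

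The routine inputs are the Green's-function solution \refeq{inverseq} (established in the appendix) and the elementary Chebyshev identity $M_{22}=U_\ell(s/2)$, which follows from the Cayley--Hamilton recursion $A'^{\,n+1}=sA'^{\,n}-A'^{\,n-1}$. The step demanding the most care is the bijection claim: one must verify that prescribing the \brick\ $b$ is \emph{exactly} the positional condition $\ssp_i\in[0,1)$ for all $i$, with no extra grammar constraint hidden in the interior, and that the affine endpoint map is nondegenerate (guaranteed by $U_\ell(s/2)\neq0$ for $s>2$) so that areas transform by the single constant factor $d_\ell$. As a consistency bonus, a pruned \brick\ yields an empty $\Pol_b$, so the formula returns $\Msr(b)=0$ automatically.
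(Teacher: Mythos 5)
Your proposal is correct and follows essentially the same route as the paper: identify $\pS_b$ via the Dirichlet Green's-function representation \refeq{inverseq} with the inequalities (\ref{SquareCut1},\ref{SquareCut2}), then factor out the constant Jacobian $d_\ell=\gd_{1\ell}=1/U_\ell(s/2)$ of the change of variables from $(\ssp_0,\ssp_1)$ to $(\ssp_0,\ssp_{\ell+1})$. Your added detail (the explicit bijection argument and the Cayley--Hamilton/Chebyshev check of $M_{22}=U_\ell(s/2)$) merely fleshes out steps the paper leaves implicit.
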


\begin{proof}
From   \refeq{inverseq}  the element  $\pS_b$ of the partition $\pS$   is
defined  by the inequalities \refeq{SquareCut1} and \refeq{SquareCut2}.
 In general, the inequalities  (\ref{SquareCut1}) ''cut out''  a polygon $\Pol_{{b}}$ of the unit square (\ref{SquareCut2}) in the $(\ssp_0,\ssp_{\ell+1})$ plane.  As a
result, the measure of ${b}$ is  given by the product
\beq
 \Msr({b})=|\pS_b| =d_\ell |\Pol_{b}|
\ee{FreqDecomp1}
of  the area $|\Pol_{b}|$ of the polygon $\Pol_{b}$ and the Jacobian
$d_\ell$ of the transformation of the invariant phase space measure $\dMsr
=d\ssp_0 dp_0$ to the Lagrangian end points measure $d\ssp_0 d
\ssp_{\ell+1}$. Since the Jacobian  of the transformation from $(\ssp_0
,p_0)$ to $(\ssp_0,\ssp_1)$ equals $1$, the value of  $d_\ell$  can be
evaluated as  the Jacobian  of the transformation from $d\ssp_0 d\ssp_1$
to $d\ssp_0 d \ssp_{\ell+1}$. By \refeq{inverseq}, we therefore get
\bea
  d_\ell
= {|\partial (\ssp_0, \ssp_1)}/{\partial (\ssp_0, \ssp_{\ell+1})|}=\gd_{1\ell}  =
  {1}/{U_{\ell}({s}/{2})}
\,.
\label{SingleCatJacobian}
\eea
\end{proof}

For \brick s composed of interior symbols only  the theorem yields a simple corollary:

\begin{corollary}
If  $\Ssym{i}\in \Ai
\,, \quad  i=1,\dots |{b}|$.
The corresponding measure is given by
\[
\Msr({b})
= {1}/{U_{|{b}|}({s}/{2})}, \qquad  \Ssym{i}\in \Ai
\,, \quad  i=1,\dots |{b}|
\]
and depends  only  on the length of the \brick\ ${b}$.
\end{corollary}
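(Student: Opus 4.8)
The plan is to feed the interior-symbol hypothesis into Theorem~\ref{catTheorem}, which already supplies the factorization $\Msr(b)=d_\ell\,|\Pol_b|$ with $d_\ell=1/U_\ell(s/2)$, and to show that the geometric factor degenerates to $|\Pol_b|=1$. Concretely, $\Pol_b$ is the region cut out of the unit square $[0,1)^2$ in the $(\ssp_0,\ssp_{\ell+1})$ plane by the $\ell$ constraints \refeq{SquareCut1}, and by \refeq{error} the left-hand side of the $i$-th constraint is exactly the orbit coordinate $\ssp_i$ of \refeq{inverseq}. Thus it suffices to prove that, when $0\le\Ssym{j}\le s-2$ for every $j$, every boundary pair $(\ssp_0,\ssp_{\ell+1})\in[0,1)^2$ automatically yields $0\le\ssp_i<1$ for all $i=1,\dots,\ell$; then none of the constraints \refeq{SquareCut1} is active, $\Pol_b$ is the full unit square, and $|\Pol_b|=1$.

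First I would record the one structural fact that powers everything: the Dirichlet Green's function is strictly positive for $s>2$. Indeed, from the Chebyshev representation $\gd_{ij}=U_{i-1}(s/2)U_{\ell-j}(s/2)/U_\ell(s/2)$ (for $i\le j$) together with $U_n(s/2)=\sinh[(n+1)\Lyap]/\sinh\Lyap>0$, each factor is positive, so $\gd_{ij}>0$. The lower bound then follows at once: in $\ssp_i=\sum_{j=1}^{\ell}\gd_{ij}\Ssym{j}+\gd_{i1}\ssp_0+\gd_{i\ell}\ssp_{\ell+1}$ every coefficient and every entry $\Ssym{j},\ssp_0,\ssp_{\ell+1}$ is nonnegative, hence $\ssp_i\ge0$.

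For the upper bound I would run the finite, Dirichlet analogue of the infinite-lattice estimate \refeq{innerShift}, using the comparison field $y_i:=1-\ssp_i$. Substituting the constant orbit $\ssp_i\equiv1$ (boundary included) into \refeq{OneCat} gives $(\Box+2-s)\cdot1=-(s-2)$, i.e.\ the constant orbit is the solution whose symbols are all $s-2$ and whose boundary values are both $1$. Subtracting this reference solution, $y_i$ solves the same \sPe\ with source $(s-2)-\Ssym{j}\ge0$ and boundary data $1-\ssp_0>0$, $1-\ssp_{\ell+1}>0$. Positivity of $\gd$ then gives $y_i=\sum_{j}\gd_{ij}[(s-2)-\Ssym{j}]+\gd_{i1}(1-\ssp_0)+\gd_{i\ell}(1-\ssp_{\ell+1})>0$, so $\ssp_i<1$, with strictness coming for free from the strictly positive boundary contribution.

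The only place needing care---the crux of the argument---is this comparison step, and it rests entirely on the strict positivity of all Green's-function coefficients: it is what converts the componentwise inequalities $\Ssym{j}\le s-2$ and $\ssp_0,\ssp_{\ell+1}<1$ into the pointwise bound $\ssp_i<1$ with no cancellation. Half-open versus closed endpoints is immaterial for the area $|\Pol_b|$, so once $0\le\ssp_i<1$ is established for every interior $i$, the constraints \refeq{SquareCut1} drop out, $\Pol_b$ fills the unit square, and Theorem~\ref{catTheorem} yields $\Msr(b)=d_\ell=1/U_{|b|}(s/2)$, a value depending only on the length $|b|=\ell$.
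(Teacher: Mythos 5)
Your proposal is correct and follows essentially the same route as the paper: reduce to Theorem~\ref{catTheorem} and show that for interior symbols the constraints \refeq{SquareCut1} are vacuous, so $|\Pol_b|=1$. The paper's own proof of this corollary simply asserts that the inequalities are always satisfied; the detail you supply (strict positivity of the Dirichlet Green's function plus comparison with the constant solution $\ssp\equiv 1$, $\Ssym{}\equiv s-2$) is exactly the argument the paper itself spells out for the analogous 2\dmn\ corollary, so you have filled the gap in the intended way rather than taken a different path.
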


\begin{proof}
If  all $\Ssym{i}$  belong to $ \Ai$, the inequalities
(\ref{SquareCut1})  are always satisfied, and  $\Pol_{{b}}$ are unit
squares of area $1$.  The proof then follows immediately   by eq.~\refeq{FreqDecomp}.
\end{proof}

In general,   the area  $|\Pol_{{b}}|$ in \refeq{FreqDecomp}  depends on the
particular \brick\ $b$; the Jacobian $d_\ell$ is the same for all ${b}$'s
of length $\ell$.
The view from $(\ssp_0, \ssp_{\ell+1})$ state space has a natural
interpretation of area as the relative measure to \brick\ of all interior
symbols, \ie, the geometrical factor $|\Pol_{b}|$. In other words, as
illustrated by comparing \reffig{fig:SingleCatPartit} with
\reffig{fig:PairSymbol}, the Hamiltonian and the Lagrangian partition areas
are the same  up to the overall Jacobian factor $d_\ell$.
The power of the Lagrangian reformulation is now evident: in contrast to the
exponentially shrinking and disjoint (for sufficiently large $|b|$)  $\pS_b$
of \reffig{fig:SingleCatPartit}, $\Pol_{{b}}$ are always  simply connected
convex polygons cut out of the unit square, see \reffig{fig:PairSymbol}.

%%%%%%%%%%%%%%%%%%%%%%%%%%%%%%%%%%%%%%%%%%%%%%%%%%%%%%%%%%%%%
\begin{figure}
\begin{center}
(a) \includegraphics[height=0.29\textwidth]{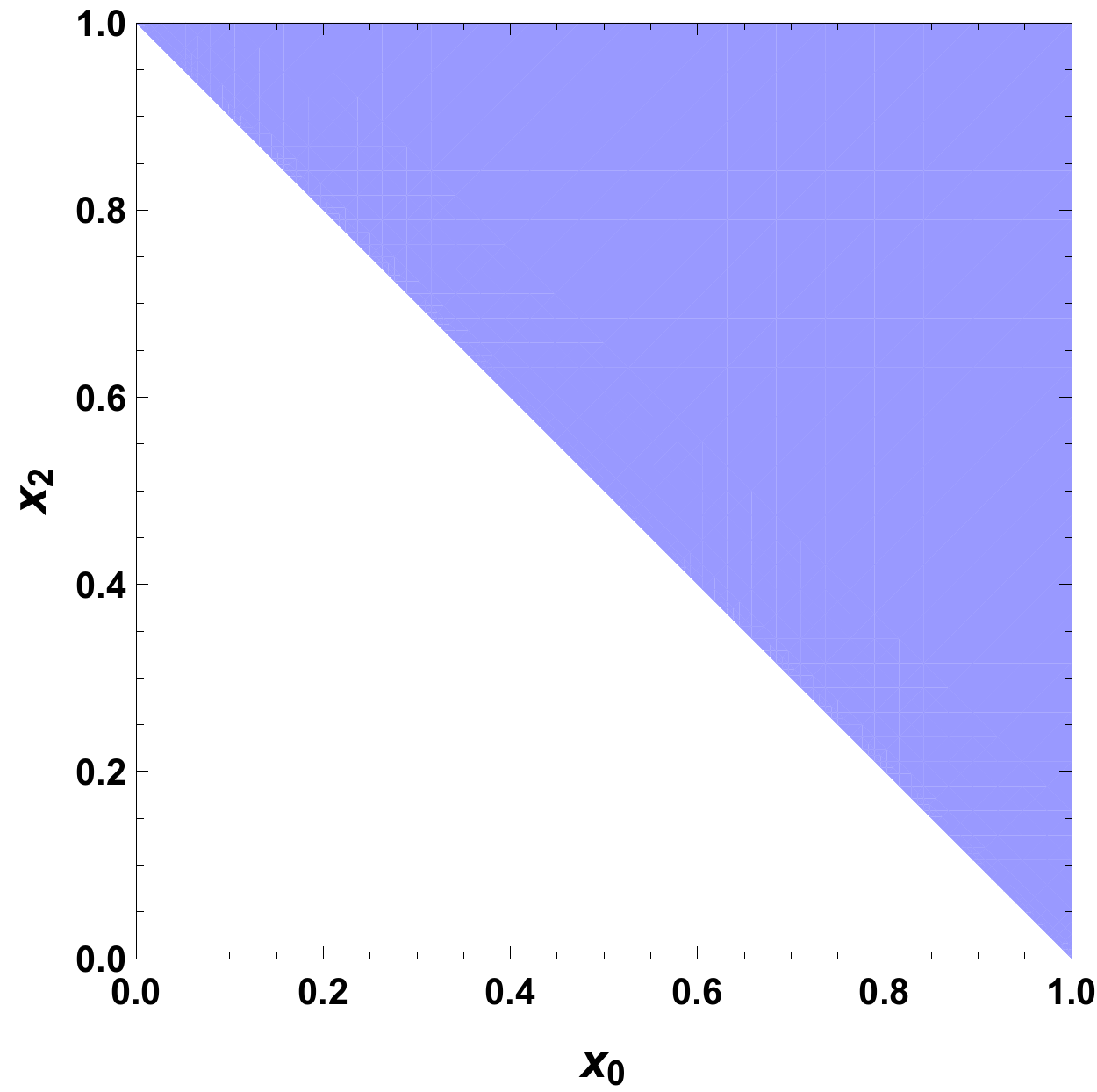}
    \hspace{0.05\textwidth} \includegraphics[height=0.29\textwidth]{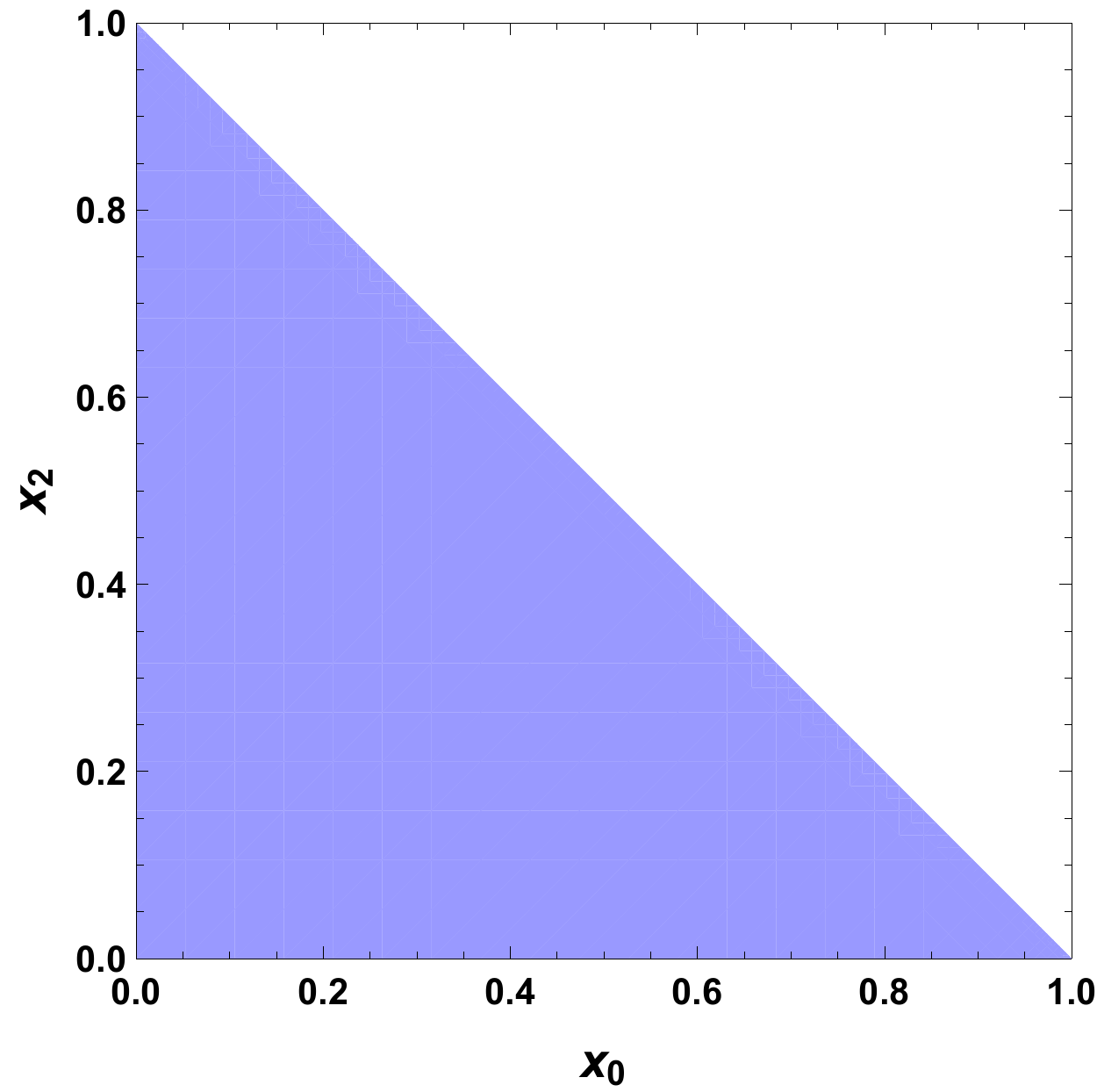}
%(a)~~\includegraphics[height=0.27\textwidth]{SingleSymbol}

\medskip

(b) \includegraphics[height=0.29\textwidth]{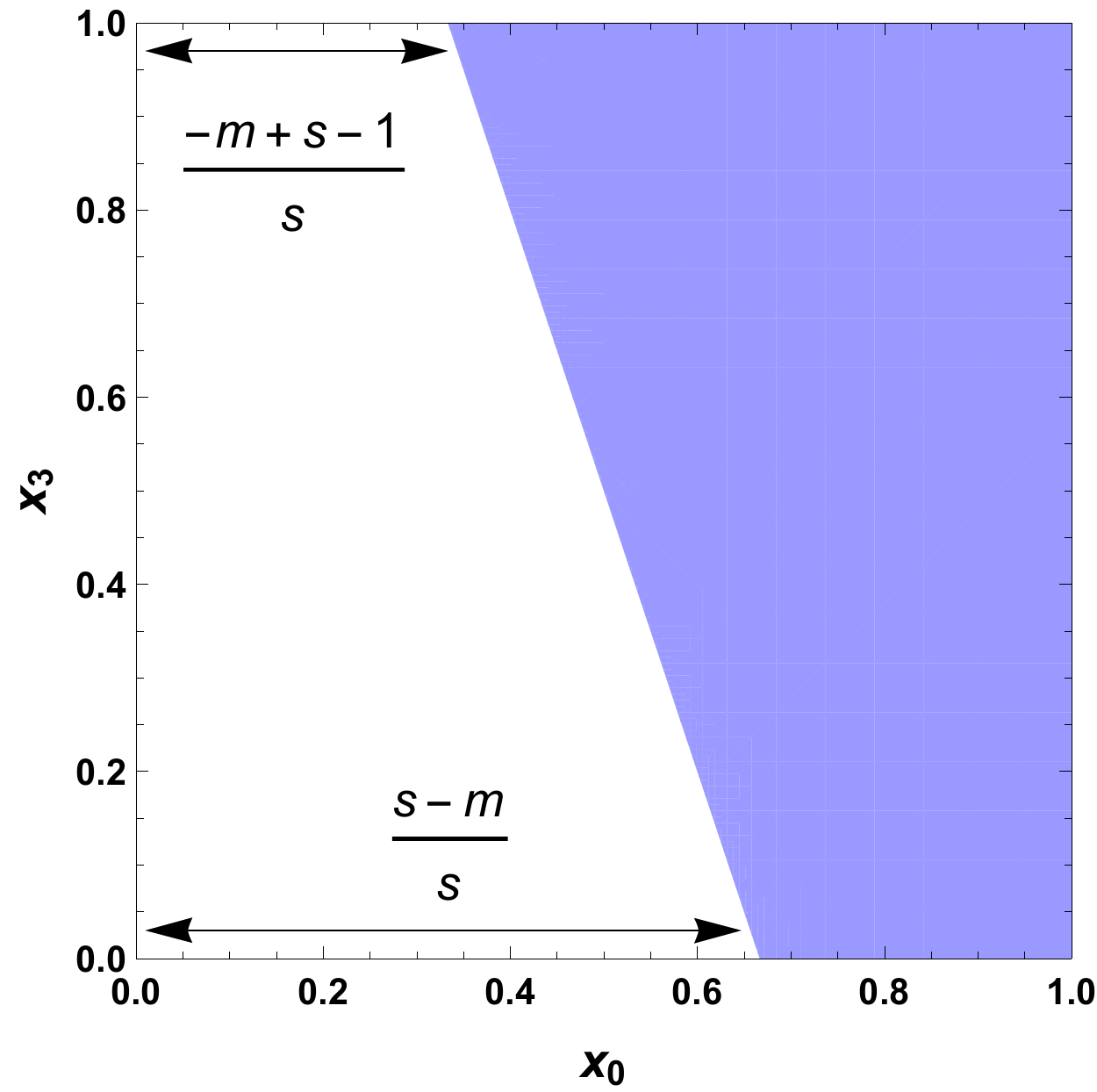}
    \hspace{0.02\textwidth} \includegraphics[height=0.29\textwidth]{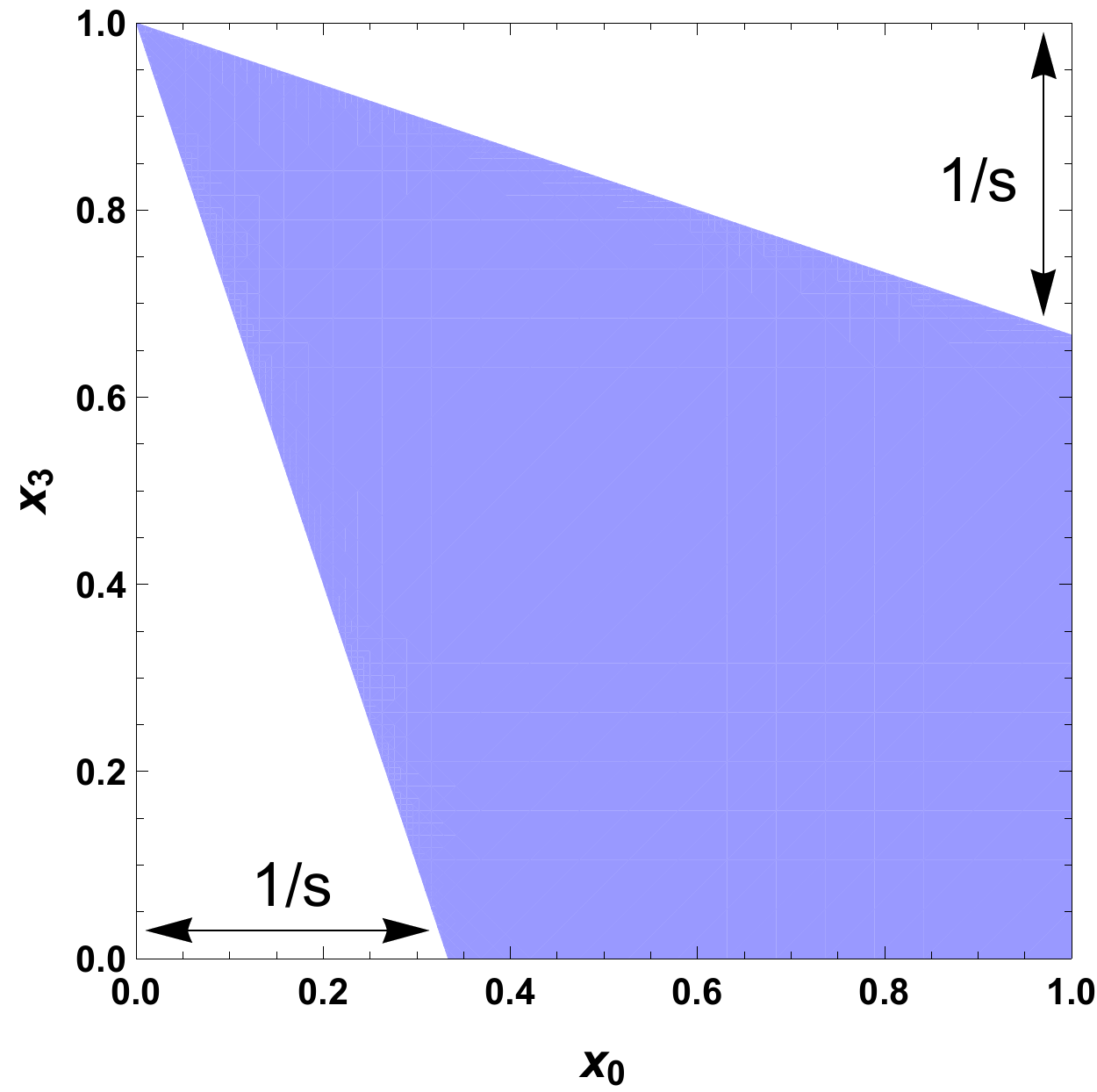}
    \hspace{0.02\textwidth} \includegraphics[height=0.29\textwidth]{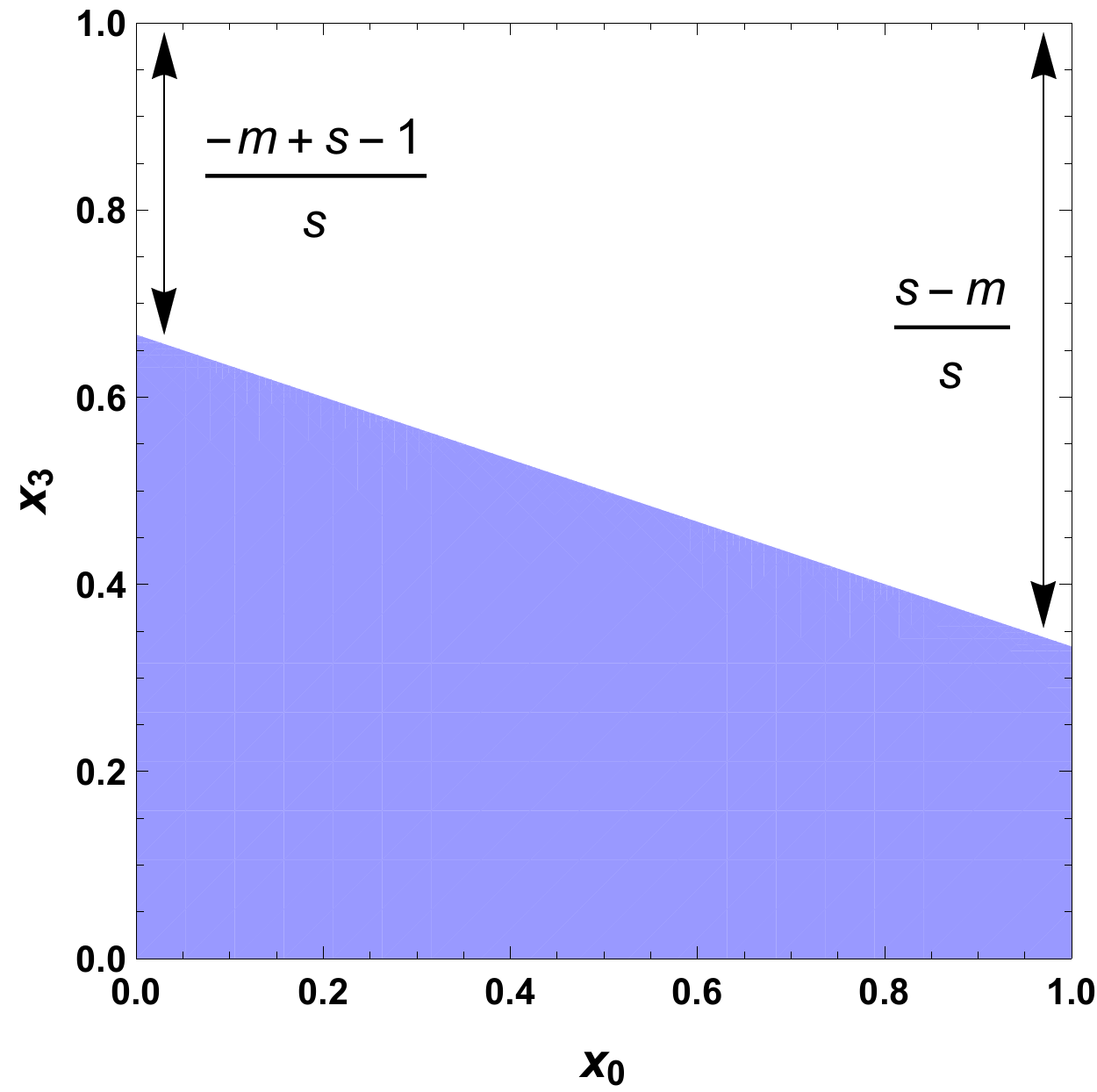}
%(b)~~\includegraphics[height=0.27\textwidth]{PairSymbol}
\end{center}
\caption[]{\label{fig:PairSymbol}
(a) Polygons $\Pol_{m}$ (shaded areas)  for a single symbol,
     Lagrangian $(\ssp_0, \ssp_{2})$ plane, exterior letters $m\in \Ae$:
     (left)  $m=\underline{1}$,
     (right) $m=s\!-\!1$.
(b) Polygons $\Pol_{\Ssym{1} \Ssym{2}}$ (shaded areas) in Lagrangian
    $(\ssp_0, \ssp_{3})$ plane for {\brick s} $\block{\Ssym{1} \Ssym{2}}$
    of length 2,
    $\Ssym{1} =\underline{1},  \Ssym{2}= m\in \Ai$ (left);
    $\Ssym{1} =\underline{1},  \Ssym{2}=s-1$ (middle);
    $\Ssym{1} =s-1, \Ssym{2}=m \in \Ai$ (right).
Repeated exterior alphabet symbols are pruned,
$\Pol_{\Ssym{}\Ssym{}}= \emptyset$ if $\Ssym{}\in\Ae$.
For \brick s composed of only interior symbols $m_1,m_2\in \Ai$,  $\Pol_{\Ssym{1} \Ssym{2}}$ is the
entire unit square (full shift, no pruning).
Note that $\Msr({{b}})= \Msr({\bar{{b}}})$ by the symmetry \refeq{symmetry}.
Compare also with the \statesp\ representation \reffig{fig:SingleCatPartit}.
}
\end{figure}
%%%%%%%%%%%%%%%%%%%%%%%%%%%%%%%%%%%%%%%%%%%%%%%%%%%%%%%%%%%%%

\subsection{Evaluation of measures}
\label{sect:catFreqEval}

Since all coefficients in  (\ref{SquareCut1})  are given by rational
numbers, the polygon areas $|\Pol_{{b}}|$ are rational too. The same
holds for the $d_\ell$ factor. As a result,   measures  $\Msr({{b}})$ are
always rational (see, for example, \reftab{tab:RJ2letFreq}). This allows
for their exact evaluation by integer arithmetic.  As the factor $d_\ell$
in \refeq{FreqDecomp} is known explicitly, the evaluation   of
$\Msr({b})$ relies on the knowledge of the areas $|\Pol_{{b}}|$ which can
be easily found  analytically for small $\ell$. Before working out
specific examples, we list {symmetry} properties of measures
$\Msr({b})$ valid for any \brick\ length $\ell$.

\paragraph{Symmetries.}
Symmetries of the cat map induce  invariance with respect to
corresponding symbol exchanges. Define $\bar{m}=s\!-\!2\!-\!m$ to be the
conjugate of symbol $m \in \A$. For example, the two exterior
alphabet \Ae\ symbols are conjugate to each other, as illustrated by
\reffig{fig:CatMapStatesp}.
If ${b}=\Ssym{1} \Ssym{2} \dots \Ssym{\ell}$ is a
\brick, and  $\bar{{b}}=\bar{m}_1 \bar{m}_2 \dots
\bar{m}_\ell$ its conjugate, then by  reflection symmetry of the cat
map we have  $|\Pol_{{b}}|= |\Pol_{\bar{{b}}}|$. Similarly, if
$b^*=\Ssym{l}\Ssym{l-1}\dots \Ssym{1}$, the time reversal invariance implies
$|\Pol_{{b}}|=|\Pol_{{b}^*}|$. Accordingly,
\brick s $ {b}$, $ {b}^*$, and $\bar{{b}}$ have the same
measure,
 \beq
 \Msr({{b}})= \Msr({\bar{{b}}})=\Msr({{b}^*})
 \,.
 \ee{symmetry}

\subsubsection{Example: Measure of \brick s of length $\ell=1$.}

The defining, single symbol \brick\ Lagrangian equation
\refeq{eq:CatMapNewton1} is the simplest example of the Lagrangian,
two-point boundary values formulation,
\[
  \ssp_1=\gd_{11}\Ssym{1}
          +\gd_{11}\ssp_0+\gd_{11}\ssp_{2}
\,,
\]
with $\ell=1$,  $\gd_{11}=1/s$
verifying the general Green's function formula \refeq{inverseq}.
For  single symbol $\Ssym{1}\equiv m$ the set of inequalities
(\ref{SquareCut1}) thus reduces to
 \[
   -m \leq \ssp_{0}+\ssp_{2} < s-m
 \,.
\]
This constraint is always fulfilled for interior symbols $m\in \Ai$. For
$m\in \Ae$, polygon $\Pol_{m}$ is the upper and lower triangle, respectively,
shown in \reffig{fig:PairSymbol}\,(a).
As a result, we have   $|\Pol_{m}|=1$ if $m\in \Ai$ and $|\Pol_{m}|=1/2$
if $m\in \Ae$, giving measures
\[
 \Msr(m)= \left\{\begin{array}{ll}
       1/s, & \mbox{for } m\in \Ai \\
        1/2s, & \mbox{for } m\in \Ae \,,
        \end{array}\right.
\]
which indeed add up to
{one after summation over all letters of the alphabet $\A$.}

 \subsubsection{Example: Measure of \brick s of length $\ell=2$:}
 \label{1Dblock2}

For {\brick s} $\Ssym{1} \Ssym{2}$, bounds (\ref{SquareCut1}) give four
inequalities
\bea
  -s\Ssym{1}-\Ssym{2} &\leq& s\ssp_{0}+\ssp_{3} < s^2-1 - s\Ssym{1}-\Ssym{2}\\
  -s\Ssym{2}-\Ssym{1} &\leq& s\ssp_{3}+\ssp_{0} < s^2-1 - s\Ssym{2}-\Ssym{1}
  \,,
  \nonumber
\eea
where we have used $U_2(s/2)=s^2-1$.
A constraint arises  whenever at least  one of the symbols belongs to
the exterior alphabet
\Ae. By the symmetry (\ref{symmetry}), it is sufficient to analyze the
case when $\Ssym{1}=\underline{1}$. If $\Ssym{2}\in
\Ai$, the {polygon} $\Pol_{\underline{1}m_2}$ is determined by (see
\reffig{fig:PairSymbol}\,(b)):
 \[
  s-\Ssym{2}\leq s\ssp_{0}+\ssp_{3},   \qquad 0\leq \ssp_{0},\ssp_{3} <1.
 \]
The area of the  resulting polygon is equal  to   $|\Pol_{\underline{1}
m_2}|=(1+2m_2)/2s$, where   $m_2\in \Ai$.
If both $\Ssym{1} \neq \Ssym{2}$ belong to \Ae, i.e.,
$\Ssym{1}=\underline{1}$, $\Ssym{2}=s\!-\!1$, then the corresponding
polygon is determined by the conditions:
 \[
  1\leq s\ssp_{0}+\ssp_{3},   \qquad  s\ssp_{3}
        +\ssp_{0} \leq s,   \qquad 0\leq \ssp_{0},\ssp_{3} <1.
 \]
with the corresponding area  $|\Pol_{\Ssym{1} \Ssym{2}}|=1-1/s$. Finally,
if both  $\Ssym{1}=\Ssym{2}$ belong to \Ae\  and they are equal, then
\brick\ is pruned,
$\Pol_{\Ssym{1} \Ssym{2}}=\emptyset$; there are two pruned \brick s of
length 2. In summary,
\beq
 \Msr(\Ssym{1} \Ssym{2})= \left\{\begin{array}{llll}
       1/s^2-1 & \mbox{for } \Ssym{1}, \Ssym{2}\in \Ai \\
        (1+2\Ssym{2})/2s(s^2-1) & \mbox{for } \Ssym{1}=\underline{1}, \Ssym{2}\in \Ai\\
        1/s(s+1) & \mbox{for } \Ssym{1}=\underline{1}, \Ssym{2}=s\!-\!1\\
        0 & \mbox{for } \Ssym{1}=\underline{1}, \Ssym{2}=\underline{1}.
        \end{array}\right.
\ee{PairsFreq}
The measures for the remaining  symbol combinations  can be obtained by the
symmetries, see  \refeq{symmetry} and \reftab{tab:RJ2letFreq} for the $s=3$
case.

\subsubsection{Pruning.}
\label{sect:catPruning}

As shown in \refeq{innerShift}, any \brick\ of \Ai\ symbols is
{\admissible}. If, on the other hand, one or more symbols from $\Ssym{}$ belong
to \Ae, such a \brick\ might be forbidden, with the {polygon} $\Pol_{\Ssym{}}$
defined  by (\ref{SquareCut1},\ref{SquareCut2}) empty, and thus $\Msr(\Ssym{})=0$.
An example is the pruned \brick s \prune{\underline{1}\underline{1}} and
\prune{22}, missing from \reffig{fig:SingleCatPartit}\,(g). While here we
do not attempt to solve the number-theoretic problem of determining the
number of pruned \brick s for arbitrary $\ell$, the count of pruning
rules given in \reftab{tab:RJpruning} indicates that for the linear {encoding}
the number of pruned \brick s grows exponentially with their length. Thus
the linear {encoding} is not a subshift of finite type, as  its grammar
consists of an infinity of arbitrarily long pruned (\ie, {\inadmissible})
{\brick s}.
While the shaded areas of \reffig{fig:SingleCatPartit}\,(g-h) are
accounted for by the complete Smale-horseshoe grammar of the interior
alphabet, the admissibility  rules for  {\brick}s involving  letters
from $\Ae=\{\underline{1},2\}$  are not known.
In \refref{CL18}, an {Adler--Weiss Markov {generating}
partition} symbolic dynamics for the \PV\ cat map
\refeq{eq:StateSpCatMap} is constructed, with complete, finite subshift
grammar. That, however, has no bearing on  the main thrust of this paper.

%%%%%%%%%%%%%%%%%%%%%%%%%%%%%%%%%%%%%%%%%%%%%%%%%%%%%%%%%%%%%%%%%%%%%%%%
%           \begin{table}\begin{center}
\Table{\label{tab:RJpruning}   % Nonlinearity macro
%	\caption{\label{tab:RJpruning}
$N_n$ is the total number of pruned {\brick s} of length  $n=\cl{b}$ for
the $s=3$ Arnol'd cat map. $\tilde{N}_n$ is the number of \emph{new}
pruned {\brick s} of length  $\cl{b}$, with all length  $\cl{b}$ {\brick
s} that contain shorter pruned {\brick s} already eliminated. Empirically
there is a single new pruning rule for each prime-number \brick\ length (it is
listed as two rules, but by the reflection symmetry there is only one).
}
%            \begin{minipage}[t][][b]{0.33\textwidth}
\begin{tabular}{rrr} % Nonlinearity style, see nonlinTips.tex
\br
  % after \\: \hline or \cline{col1-col2} \cline{col3-col4} ...
  $n$ & $N_n$ & $\tilde{N}_{n-1}$ \\
\hline  %\mr
  2 & 2 & 0 \\
  3 & 22 & 2 \\
  4 & 132 & 8 \\
  5 & 684 & 2 \\
  6 & 3164 & 30 \\
  7 & 13894 & 2 \\
  8 & 58912 & 70 \\
  9 & 244678 & 16 \\
  10 & 1002558 & 198 \\
  11 & 4073528 & 2 \\
  12 & 16460290 & 528 \\
  13 &   & 2 \\
\br
\end{tabular}
%            \end{minipage}\end{center}\end{table}\br
\endTable
%%%%%%%%%%%%%%%%%%%%%%%%%%%%%%%%%%%%%%%%%%%%%%%%%%%%%%%%%%%%%%%%%%%%%%%%

\section{\catLatt}
\label{sect:CCMs}

We now turn to the study of the {\catlatt} \refeq{LinearConn}, with cat maps
on sites (``particles'') coupled isotropically to their nearest neighbors on
a 2\dmn\ {\spt}ly infinite $\integers^2$ lattice.
The coupled map lattices (CML) were introduced in the mid 1980's as
models\rf{Kaneko83,Kaneko84} for studies of spatio-temporal chaos in
discretizations of dissipative PDEs. Later on, chains of coupled Anosov
maps were investigated in mathematically rigorous
settings\rf{BunSin88,PesSin88}. The conventional CML models start out
with chaotic on-site dynamics weakly coupled to neighboring sites, with
strong spacetime asymmetry. In order to establish the desired
statistical properties of CML, such as the continuity of their {SRB}
measures, \refrefs{BunSin88,PesSin88} and most of the subsequent
mathematical literature rely on the structural stability of Anosov
automorphisms under small perturbations.
Contrast this with the non-perturbative $2$\dmn\ \GO\rf{GutOsi15}
{\em \catlatt} \refeq{LinearConn}.
While this model has a Hamiltonian formulation (see
\refappe{sect:HamiltonCatLatt}), as in the {cat map} case of
\refsect{sect:catLinSymDyn}, it is instructive to write down its
equations of motion in the Lagrangian form:
\bea
 (-\Box +{2(s-2)})\,\ssp_{z} &=& \Ssym{z}
\,,\qquad
 z=(n,t)\in \integers^{2}
\,,
\continue
  \ssp_{z}\in [0,1), \quad  && \Ssym{z}\in \A
  = \{-3, -2,\cdots,{2s}-2,{2s}-1\}
\,,
\label{CoupledCats}
\eea
with $\Box $ being the discrete spacetime Laplacian \refeq{LaplSpaceTime}
on $\integers^2$. The map is space $\leftrightarrow$ time symmetric and
has the temporal and spatial dynamics strongly coupled. Furthermore, it is
smooth and fully hyperbolic for any integer $|s|>{2}$. In what follows we
will assume positive $s>{2}$.

In this paper, we focus on learning how to enumerate
{\admissible} \catlatt\ {\spt} patterns, compute their measures, and identify
their recurrences (shadowing of a large \twot\ by smaller \twots).

\subsection{Linear encoding}
\label{sect:CCMlinSymDyn}

The symbols $\Ssym{z}$ from the set
$\A=\{\underline{3},\underline{2}, \cdots,{2s}\!-\!2, {2s}\!-\!1\}$
on the right hand side of \refeq{CoupledCats} are necessary to keep
$\ssp_{z}$ within the interval $[0,1)$, with {$\underline{m_z}$}
standing here for $m_z$ with the negative sign, \ie, `$\underline{3}$'
stands for symbol `$-3$'. As we now show,
\(
{
\Mm= \{\Ssym{z} \in \A \,,\; z\in \integers^2 \}
     }
\)
can be used as a 2\dmn\ symbolic representation (code) of the
lattice system states.

Since \refeq{CoupledCats} is a linear equation, any of its solutions
\(\Xx= \{x_{z} \in [0,1) \,,\; z\in \integers^2 \}\) can be
uniquely recovered from the corresponding code \Mm.
By inverting \refeq{CoupledCats} we obtain
\begin{equation}
  \ssp_{z}=\sum_{z'\in\integers^2}\gd_{z z'} \Ssym{z'}, \qquad  \gd_{z z' }
       =\left(\frac{1}{-\Box +{2(s-2)}}\right)_{zz'}
       \,,
\label{GreenFuncCoupled}
 \end{equation}
where $\gd_{z z'}$ is the Green's function for the 2\dmn\ discretized heat
equation, see \refappe{sect:Green}. A symbol \brick\ $\Mm$ is
{\admissible} if and only if all $\ssp_{z}$ given by
\refeq{GreenFuncCoupled} fall into the interval $[0,1)$.

As for the {cat map}, we split the ${2s}+3$ letter alphabet $\A=\Ai\cup\Ae$
into the interior \Ai\ and exterior \Ae\ alphabets
\beq
  \Ai=\{0,\dots,{2({s}-2)}\},   \quad
  \Ae=
\{\underline{3},\underline{2},\underline{1}\}\cup
\{{2s}\!-\!3,{2s}\!-\!2,{2s}\!-\!1\}
\,.
\ee{2dCatLattAlph}
For example, for $s={5/2}$ the interior, respectively exterior alphabets are
\beq
  \Ai=\{0,1\},   \quad
  \Ae=\{\underline{3},\underline{2},\underline{1}\}\cup \{2,3,4\}
\,.
\ee{2dCatLattAlph5}
If all $\Ssym{z}\in \Mm$ belong to \Ai, $\Mm$ is
{\admissible}, i.e., $\Ai^{\integers^2}$ is a full shift.
Indeed, by the positivity of Green's function (see \refappe{sect:Green})
it follows immediately that $0 \leq \ssp_{z}$, while the condition
$\sum_{z'\in \Zz}\gd_{zz'}={1/2({s}-2)}$ implies that $\ssp_z\leq 1$,
with the equality saturated only if $\Ssym{z}={2({s}-2)}$, for all
$z\in \Zz$.

The key advantage of linear encoding is illustrated already by the $d=2$
case. While the size of the alphabet \Aa\ based on a Markov partition
grows exponentially with the ``particle number'' $L$, the number of
letters \refeq{LinearConn} of the linear encoding \A\ is finite and the same
for any $L$, including the $L\to\infty$ \catlatt. For the linear encoding an
\twot\ is encoded by a doubly periodic $d=2$ {\brick} $\Mm$ of symbols
from a small alphabet, rather then by a $1$\dmn\ temporal string of
symbols from the exponentially large (in $L$) alphabet \Aa.

\subsection{Finite symbol \brick s}
\label{sect:CCMmeasBrick}

Let $\R\subset\Zz$ be a {rectangle} on $\Zz$ and
let $\MmR=\{\Ssym{z}| z\in \R\}$ be a symbol \brick\ defined on \R. We
now show that $\MmR$ determines approximate positions of the points
$\ssp_z$, $z\in\R$, within the domain \R. To start with we define the
(exterior) boundary $\partial\R$ of $\R$ as a set of points adjacent to
$\R$. More precisely, $z=(n,t)$ belongs to $\partial\R$ if and only if
$z\notin \R$ but one of the four neighboring points $(n\pm 1,t)$, $(n,
t\pm 1)$ belongs to $\R$, see \reffig{fig:block2x2}(a). Let then
$\gd_{zz'}$ be the corresponding Dirichlet Green's function which
vanishes at the boundary $\partial\R$.
By the lattice Green's identity (see \refappe{sect:Green2Dident}) any
solution of the equation \refeq{CoupledCats} satisfies
\beq
 \ssp_z=\sum_{z'\in \R}\gd_{zz'} \Ssym{z'}
 + \sum_{z''\in\partial \R}\gd_{z\bar{z}''}\ssp_{z''}
 \,, \qquad  z\in \R
 \,,
\ee{DirichletGreenEquation}
with $\bar{z}''$ being the unique adjacent point of $z''\in \partial \R$
within the domain $\R$. Here, the first term
\beq
\bar{x}_z:= \sum_{z'\in \R}\gd_{zz'} \Ssym{z'}
\ee{BGaverCatLattPt}
can be viewed as the ``approximate {\spt} state''
$\bar{x}({\Mm_\R})$ at the point $z$. Importantly, it  is
determined solely by $\Mm_{\R}$. From \refeq{DirichletGreenEquation} it
follows that the difference $|\ssp_z -\bar{x}_z|$ is bounded by
\[
|\ssp_z -\bar{x}_z|
= \sum_{z''\in\partial \R}\gd_{z\bar{z}''}\ssp_{z''}
  \leq  |\partial \R|\,  \gd_{z\bar{z}''_{0}}
\,,
\]
with $\bar{z}''_{0}$ being the boundary point of $\R$
(\ie, adjacent to $\partial \R$), where the function $\gd_{z
\bar{z}''}$ attains its maximum value along $\partial \R$ (for a fixed
$z$).
For an illustration,
consider a $[\ell_1\!\times\!\ell_2]$ rectangular domain
\begin{equation}
\R^{[\ell_1\times\ell_2]}=
            \{(i, j)|\,i=0,\cdots,\ell_1\!-\!1,\;j=0,\cdots,\ell_2\!-\!1\}
\,,\label{Rectangular}
\end{equation}
with  $\ell_1$,  $\ell_2$ even (see \reffig{fig:block2x2}\,(a)),  and take
the point $z$ at  the rectangular center.
As the  Green's function
$\gd_{z\bar{z}''}$  decays exponentially with $|z-\bar{z}''|$
(see \ref{sect:Green2D}), the distance $|\ssp_z -\bar{x}_z|$ is  of the
order $e^{-\nu\ell_{\min}}$ for a large
$\ell_{\min}=\min{\{\ell_1/2,\ell_2/2\}}$, where the exponent
$\nu$ is defined by  $\cosh \nu =s/{2}$.

We determine next the measure $\Msr(\MmR)$ of the cylinder set
corresponding to $\MmR$. Take
\(
\R = \R^{[\ell_1\times\ell_2]}
\)
to be a  rectangular domain \refeq{Rectangular}. In what follows it is
convenient to distinguish   points in the interior of \R\  from the
points  which belong to the boundary $\partial \R$. While in principle
the boundary \statesp\ points $\ssp_{z''}\in\partial \R$ are labelled by
the symbol pair $z''=(n,t)$, we find it more convenient to label them by
a single index that indicates their position along the border, $\ssp_{i}
= \ssp_{z''}$, where $i$ runs from $1$ to
$\cl{\partial\R}=2(\ell_1+\ell_2)$. For examples, see
\reffig{fig:block2x2} and \refsects{exam:block1x1}{exam:block2x2}. Both
the boundary \statesp\ points $\ssp_i, i=1,\dots \cl{\partial \R}$ and
the internal points $\ssp_{z}, z\in\R$ must lie within the unit interval.

\begin{theorem}\label{catLattTheorem}
Given a \brick\ of symbols $\MmR$ on rectangle $\R$, the measure
$\Msr(\MmR)$ can be factorized into product
\beq
 \Msr(\MmR)=d(\R)\,|\Pol(\MmR)|
 \,,
\ee{CoupledFactorization}
where  $|\Pol(\MmR)|$ is the volume of the $\cl{\partial \R}$-dimensional
polytope $\Pol(\MmR)$, defined by the following inequalities
\begin{eqnarray}
& 0 & \leq  \ssp_{i}<1, \qquad i=1,\dots, \cl{\partial \R}
            \,,\label{polytop1}\\
& 0 & \leq  \bar{x}_z + \sum_{i=1}^\cl{\partial \R}
    \gd_{z\bar{z}_i}\ssp_{i}<1, \qquad z\in\R
\label{polytop}
\,
\end{eqnarray}
and   the factor $d(\R)$ depends only on the sizes $\ell_1,\ell_2$  of
\R, but not on the symbolic content of $\MmR$.
\end{theorem}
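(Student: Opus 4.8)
The plan is to transcribe the one–dimensional argument of Theorem~\ref{catTheorem} to the lattice, with the two Lagrangian end points $(\ssp_0,\ssp_{\ell+1})$ replaced by the full boundary array $\{\ssp_i\}_{i=1}^{\cl{\partial\R}}$ on $\partial\R$, and the single Jacobian $\gd_{1\ell}$ of \refeq{SingleCatJacobian} replaced by a $\cl{\partial\R}$-dimensional Jacobian $d(\R)$. The factorization \refeq{CoupledFactorization} will then come from writing the cylinder-set measure as a boundary integral, transporting the flat Hamiltonian (Liouville) measure to the boundary coordinates, and reading off the polytope volume as the remaining geometric factor.

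First I would fix the symbol content $\MmR$ and invoke the lattice Green's identity \refeq{DirichletGreenEquation} to express every interior coordinate as an affine function $\ssp_z=\bar{x}_z+\sum_i \gd_{z\bar z_i}\ssp_i$ of the boundary values, with the symbol dependence carried entirely by the offset $\bar{x}_z=\bar{x}_z(\MmR)$ of \refeq{BGaverCatLattPt} and the linear part $\gd_{z\bar z_i}$ independent of $\MmR$. The requirement that every coordinate of an admissible configuration lie in $[0,1)$ then splits exactly into the boundary conditions $0\le\ssp_i<1$ and the interior conditions $0\le\bar{x}_z+\sum_i\gd_{z\bar z_i}\ssp_i<1$, i.e.\ into the inequalities \refeq{polytop1} and \refeq{polytop} carving $\Pol(\MmR)$ out of the unit cube in the $\cl{\partial\R}$-dimensional boundary space. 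Exactly as in $d=1$, once these hold the prescribed integers $\Ssym{z}$ are automatically the winding numbers of the resulting field, so the admissible boundary configurations fill out precisely $\Pol(\MmR)$ with the interior coordinates slaved to them.

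Second I would write $\Msr(\MmR)$ as an integral over the boundary coordinates and reduce the theorem to the single claim that the invariant measure $\Msr$, pushed forward to $\{\ssp_i\}$, equals a constant $d(\R)$ times Lebesgue measure. The natural measure of \refappe{sect:HamiltonCatLatt} is flat in the Hamiltonian Cauchy data of the window; since for fixed $\MmR$ the passage from those data to $\{\ssp_i\}$ is the affine map above, its Jacobian is a single determinant of Dirichlet Green's-function entries, the two-dimensional analogue of $d_\ell=\gd_{1\ell}=1/U_\ell(s/2)$. This Jacobian is genuinely constant: it is independent of $\MmR$ because the symbols enter \refeq{DirichletGreenEquation} only through the additive offset $\bar{x}_z$, and it depends on $\R$ only through $\ell_1,\ell_2$ because $\gd_{zz'}$ is translation invariant and $\R$ is a rectangle. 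Combining the two steps yields $\Msr(\MmR)=\int_{\Pol(\MmR)} d(\R)\,\prod_i d\ssp_i=d(\R)\,\cl{\Pol(\MmR)}$, which is \refeq{CoupledFactorization}.

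I expect the main obstacle to be this second step. Unlike the $d=1$ orbit space, which is already two-dimensional and coincides with the cylinder set, the spatiotemporal orbit space is infinite (of size $2L$ on a width-$L$ lattice), so the boundary array $\{\ssp_i\}$ is a genuine projection and one must justify the marginalization of $\Msr$ onto $\partial\R$ and the constancy of the resulting density rather than simply changing variables within a two-torus. Concretely, the work is to exhibit Hamiltonian coordinates adapted to the $[\ell_1\times\ell_2]$ window in which $\Msr$ is manifestly flat, to verify that the change of variables to $\{\ssp_i\}$ has a block-triangular, $\MmR$-independent Jacobian, and to evaluate that Jacobian as the advertised Green's-function determinant; the polytope identification of the first step and the symmetry simplifications are then routine lifts of the one-dimensional construction and are carried out explicitly for small windows in \refsect{sect:catLattFreqEval}.
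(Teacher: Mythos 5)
Your proposal follows essentially the same route as the paper: the Dirichlet Green's identity \refeq{DirichletGreenEquation} turns the admissibility conditions into the polytope inequalities \refeq{polytop1}--\refeq{polytop}, and $d(\R)$ arises as the Jacobian of the $\MmR$-independent linear map between the boundary array $\{\ssp_i\}$ and a set of $2(\ell_1+\ell_2)$ Hamiltonian (two-consecutive-time-slice) coordinates carrying the flat invariant measure, exactly the transformation the paper uses. Your explicit flagging of the marginalization onto $\partial\R$ is a point the paper's proof passes over quickly, but it does not change the argument.
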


\begin{proof}
Since for all interior points $z\in \R$ one has  $0 \leq x_z<1$,
\refeq{DirichletGreenEquation} implies that  the {\admissible} set of
boundary points  $\ssp_i, i=1,\dots ,\cl{\partial \R}$  satisfy
inequalities \refeq{polytop1}  and \refeq{polytop}. Essentially, the
inequalities    \refeq{polytop}  cut   out the polytope $\Pol(\MmR)$ out
of the $\cl{\partial \R}$\dmn\ unit hypercube defined by
\refeq{polytop1}.  As a result, the measure    $\Msr(\MmR)$ is given by
the product of the  $\Pol(\MmR)$ volume and the Jacobian $d(\R)$ of
the linear transformation  between boundary  coordinates \refeq{polytop1}
and the  set  of $2(\ell_1+\ell_2)$  coordinates
\[
\{(x_{n  t_0}, x_{n t_1}) \, | \, n
 = -\lfloor\ell_2/2\rfloor,\dots ,-\lfloor\ell_2/2\rfloor+\ell_1+\ell_2-2\}
\]
at the  two  consecutive   times $t_0=\lfloor\ell_1/2\rfloor$, $t_1=t_0+1$.
Since  the Jacobian of this transformation is independent of any
particular \brick\ $\MmR$,  the
factor $d(\R)$ depends only on \R, but  not
on its symbolic content.
\end{proof}

As was the case for the single cat map theorem~\ref{catTheorem}, the
theorem yields  a simple result  for symbol  \brick s composed only
of the interior alphabet symbols:

\begin{corollary}
If all symbols in $\MmR$ belong to the interior alphabet \Ai\
\refeq{2dCatLattAlph}, then
\beq
  \Msr(\MmR)={d(\R)}
\ee{msrMmR}
is independent of the symbolic content of $\MmR$.
\end{corollary}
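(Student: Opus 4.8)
The plan is to reduce the corollary to Theorem~\ref{catLattTheorem} and then show that the geometric factor $\cl{\Pol(\MmR)}$ collapses to $1$ whenever every symbol of $\MmR$ is interior. By \refeq{CoupledFactorization} we already have $\Msr(\MmR)=d(\R)\,\cl{\Pol(\MmR)}$, with $d(\R)$ depending only on the shape of $\R$; so it suffices to prove that the polytope $\Pol(\MmR)$ cut out by \refeq{polytop1} and \refeq{polytop} is, up to a set of measure zero, the entire $\cl{\partial\R}$\dmn\ unit hypercube. Concretely, I would show that for $\Ssym{z}\in\Ai$ the interior constraints \refeq{polytop} are redundant: they hold automatically for every choice of boundary coordinates $\ssp_i\in[0,1)$.

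First I would note that the quantity constrained in \refeq{polytop} is nothing but the physical coordinate $\ssp_z$ of \refeq{DirichletGreenEquation}, so the claim is simply that interior symbols force $0\le\ssp_z<1$ on all of $\R$ once the boundary values lie in $[0,1)$. The lower bound is immediate: the Dirichlet Green's function is positive, $\gd_{zz'}\ge0$ (see \refappe{sect:Green}), and both $\Ssym{z'}\ge0$ and $\ssp_i\ge0$, so every term of \refeq{DirichletGreenEquation} is nonnegative.

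The upper bound is the real content, and the key step is a normalization identity for the Dirichlet Green's function. Applying the Green identity \refeq{DirichletGreenEquation} to the constant solution $\ssp_z\equiv1$ --- which solves \refeq{CoupledCats} with $\Ssym{z}\equiv 2(s-2)$ and boundary data $\ssp_i\equiv1$ --- yields
\[
2(s-2)\sum_{z'\in\R}\gd_{zz'}+\sum_{i=1}^{\cl{\partial\R}}\gd_{z\bar{z}_i}=1\,,
\qquad z\in\R\,.
\]
This is the finite-domain counterpart of the infinite-lattice sum rule $\sum_{z'}\gd_{zz'}=1/(2(s-2))$ used in \refsect{sect:CCMlinSymDyn}, the boundary term now carrying the missing harmonic weight. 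Bounding $\Ssym{z'}\le 2(s-2)$ and $\ssp_i\le1$ termwise in \refeq{DirichletGreenEquation} and invoking positivity then gives $\ssp_z\le 1$. Hence for interior bricks \refeq{polytop} imposes nothing beyond \refeq{polytop1}, the polytope $\Pol(\MmR)$ is the full unit hypercube, $\cl{\Pol(\MmR)}=1$, and \refeq{msrMmR} follows from \refeq{CoupledFactorization}.

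The main obstacle I anticipate is justifying this sum rule and upper bound cleanly on the finite lattice with Dirichlet conditions: one must verify that the constant function is correctly handled as a boundary-driven solution, and that the \emph{Dirichlet} Green's function (not merely the whole-lattice kernel) is positive. Equivalently, one can recast the upper bound as a discrete maximum principle for $-\Box+2(s-2)$ applied to $w_z=1-\ssp_z$, which satisfies $(-\Box+2(s-2))\,w_z=2(s-2)-\Ssym{z}\ge0$ on $\R$ with $w_i=1-\ssp_i>0$ on $\partial\R$; either route reduces to the positivity established in \refappe{sect:Green}.
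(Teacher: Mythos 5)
Your proposal is correct and follows essentially the same route as the paper: the factorization of Theorem~\ref{catLattTheorem}, positivity of the Dirichlet Green's function for the lower bound, and the sum rule $1=2(s-2)\sum_{z'\in\R}\gd_{zz'}+\sum_{z''\in\partial\R}\gd_{z\bar{z}''}$ obtained from the constant solution $\ssp_z\equiv1$, $\Ssym{z}\equiv2(s-2)$ for the upper bound, concluding that $\Pol(\MmR)$ is the full unit hypercube. Your closing remark recasting the upper bound as a discrete maximum principle for $w_z=1-\ssp_z$ is a nice equivalent packaging but not a different argument.
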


\begin{proof}
Note that the inequalities \refeq{polytop} are satisfied if all symbols
from $\MmR$ belong to the interior alphabet \Ai\
\refeq{2dCatLattAlph}. This follows from the positivity of the Green's
function $\gd_{zz'}$, and the identity
\[
  1 = 2(s-2)\sum_{z'\in \R}\gd_{zz'}
      + \sum_{z''\in\partial \R}\gd_{z\bar{z}''}, \qquad  z\in \R
\,,
\]
obtained by substituting the \catlatt\ \refeq{CoupledCats}  constant
field solution $\ssp_z=1$, $\Ssym{z}=2(s-2)$ into the Green's function
\refeq{DirichletGreenEquation} - see discussion following
\refeq{2dCatLattAlph5}.
As a result, for any {\brick} $\MmR$ of interior symbols   $\Pol({\MmR})$
is just a hypercube with   $|\Pol({\MmR})|=1$, and  \refeq{msrMmR}
follows immediately.
\end{proof}

\subsection{Evaluation of measures}
\label{sect:catLattFreqEval}

The evaluation of measures $\Msr(\MmR)$ for the \catlatt\   boils down to
the evaluation of  the polytope volumes $|\Pol({\MmR})|$, determined by
the inequalities \refeq{polytop1} and \refeq{polytop}. By the rationality
of every element $\gd_{zz'}$, $|\Pol({\MmR})|$ is given by a rational
number for any $\MmR$. This allows for  exact evaluation of
$|\Pol({\MmR})|$ by integer arithmetic. Once the volumes are found for
all {\admissible} \brick s $\MmR$, the constant factor $d(\R)$ can be
extracted from the normalization condition, by  summing up all volumes:
\beq
 1/d(\R) = \sum|\Pol({\MmR})|
\,.
\ee{CoupledPrefactor}
We were unable to derive any explicit formulas for $d(\R)$. However, its
asymptotic form  in the limit of large domains can be related to the
{\spt} metric entropy, as discussed in \refappe{sect:catLattEntropy}.

Before looking at specific examples of measure calculation we
  list  the symmetry  properties of  the \catlatt\ measures
$\Msr(\MmR)$.

\paragraph{Symmetries.}
Besides the invariance under shifts in time and space directions, \catlatt\
\refeq{CoupledCats} is separately invariant under the space and time
reflections $n\to -n$, $t\to -t$, as well as under exchange
$n\longleftrightarrow t$ of space and time.
{\catLatt} thus has all the symmetries of the square lattice:
\begin{itemize}
  \item 2 discrete translation symmetries
  \item
the  group $D_4$ composed of  rotations by $k\pi/2$, $k=1,2,3$ and
reflection across $x$-axis, $y$-axis, diagonal $a$, diagonal $b$:
\beq
C_{4v} = \Dn{4} = \{
E, C_{4z}^+, C_{4z}^-, C_{2z},
\sigma_{y}, \sigma_{x},
\sigma_{da},\sigma_{db},
\}
\,.
\ee{eq:C4v}
\end{itemize}
In the international crystallographic notation\rf{Dresselhaus07}, this
point group  is referred to as $p4mm$.
In addition,  the transformation
\[
x_{nt}\to 1-x_{nt},
  \qquad
\Mm=\{\Ssym{nt}\}\to \bar{\Mm}=\{\bar{m}_{nt} \}, \quad \bar{m}_{nt}
   = {2(s-2)}-\Ssym{nt}
\,,
\]
leaves eq.~\refeq{CoupledCats} invariant. All together, the measure is
invariant under
\[
 \Msr(\MmR)=\Msr(\sigma\circ\MmR), \qquad  \Msr(\MmR)=\Msr(\bar{\Mm}_\R),
\]
where $\sigma$ is an element of space group $p4mm$.
As an  example,
consider ${s=7/2}$ \catlatt, with alphabets \refeq{2dCatLattAlph}
\beq
  \Ai=\{0,1,2,3\},   \quad
  \Ae=\{\underline{3},\underline{2},\underline{1}\}\cup \{4,5,6\}
\,.
\ee{2dCatLattAlph7}
By the \Dn{4} symmetries $\Msr(\MmR)=\Msr(\sigma\circ\MmR)$ the measures of
the following eight {\brick s} are equal:
    \[
        \left[\begin{array}{cc}
 1  & 2 \\
 3 & 4\\
 5 & 6
              \end{array}\right], \qquad  \left[\begin{array}{ccc}
 2  & 4 & 6 \\
 1 & 3 & 5
              \end{array}\right], \qquad  \left[\begin{array}{cc}
 6  & 5 \\
 4 & 3\\
 2 & 1
              \end{array}\right],\qquad
      \left[\begin{array}{ccc}
 5  & 3 & 1 \\
 6 & 4 & 2
              \end{array}\right]
              \]
     \[
        \left[\begin{array}{cc}
 2  & 1 \\
 4 & 3\\
 6 & 5
              \end{array}\right], \qquad  \left[\begin{array}{ccc}
 6  & 4 & 2 \\
 5 & 3 & 1
              \end{array}\right], \qquad  \left[\begin{array}{cc}
 5  & 6 \\
 3 & 4\\
 1 & 2
              \end{array}\right],\qquad
      \left[\begin{array}{ccc}
 1  & 3 & 5 \\
 2 & 4 & 6
              \end{array}\right]
\,.
\]
In addition, the measures of \brick s such as
  \[
        \left[\begin{array}{cc}
 2  & 1 \\
 4 & 3\\
 6 & 5
              \end{array}\right]
 \quad\Leftrightarrow\quad
        \left[\begin{array}{cc}
 1  & 2 \\
 \underline{1} & 0\\
 \underline{3} & \underline{2}
              \end{array}\right]
\]
(eight additional {\brick s} in all) are equal by
$\Msr(\MmR)=\Msr(\bar{\Mm}_\R)$ symmetry.

While for a {cat map} $\cl{\partial \R}$ is always $2$, \ie, the boundary
of interval $\R$ consists of the two end points, for the {\catlatt}  the
number  $\cl{\partial \R}$ of  boundary points grows with the domain
size. The complexity of $|\Pol({\MmR})|$ calculation for a \catlatt\ thus
grows with $|\R|$, as well. We illustrate this with calculations for $\R
= [1\!\times\!1]$ and $[2\!\times\!2]$  symbol \brick s.

\subsubsection{Example: $\R = [1\!\times\!1]$
               measure.}
\label{exam:block1x1}

Consider a $\R=[1\!\times\!1]$ {\spt} domain, with a single symbol
{\brick} $\Mm$, together with the four \statesp\ points
$\ssp_i=\ssp_{z}\in\partial\R$ comprising its boundary, \reffig{fig:block2x2}\,(b).
We need to evaluate the volume of the 4\dmn\ polytope $\Pol(m)$ for
each $m\in \A$.   $\Pol(m)$ is contained with the hypercube
\[
 0 \leq \ssp_{i} <1,  \qquad i =1,2,3,4,
\]
bounded by the inequalities
\beq
 -m\leq \ssp_{1}+\ssp_{2}+ \ssp_{3} + \ssp_{4} <{2s}-m
 \,.
\ee{hyperplane}
The polytope volume $|\Pol(m)|$ depends on $m$. For the
interior letters $m\in \Ai$ the hyperplane  \refeq{hyperplane}  does not
intersect the hypercube and the volume $|\Pol(m)|=1$. For
\(
m\in\{\underline{3},\underline{2},\underline{1}\}
\)
in the exterior $\Ae$ alphabet \refeq{2dCatLattAlph}, the corresponding
volumes $|\Pol(m)|$ are $1/4!$,   $1/2$, and $23/4!$, respectively.  The
normalization condition \refeq{CoupledPrefactor}  then yields
{$d=1/(2s)$}.
Thus the measures for the symbols from the exterior alphabet \Ae\ are
\bea
\Msr(\underline{3})&=&\Msr({2s\!-\!1})= {1}/{{(2\cdot4!\,s)}}
    \continue
\Msr(\underline{2})&=&\Msr({2s}\!-\!2)= {1}/{(4s)}
    \continue
\Msr(\underline{1})&=&\Msr({2s\!-\!3})=
                            {1}/{(2s)}-{1}/{{(2\cdot4!\,s)}}
    \continue
\Msr(m)&=&{1}/{{2s}} \mbox{ for the  }{2s-3}
               \mbox{ interior letters } m\in \Ai
\,,
\label{exactBlock1x1}
\eea
with the total measure satisfying
\(
\sum_{\Ssym{}} \Msr(\Ssym{}) =1
\,.
\)
The numerical estimates of \reffig{fig:RJsymbol} confirm these analytic
results.

\begin{figure}	
(a)\;\includegraphics[width=0.33\textwidth]{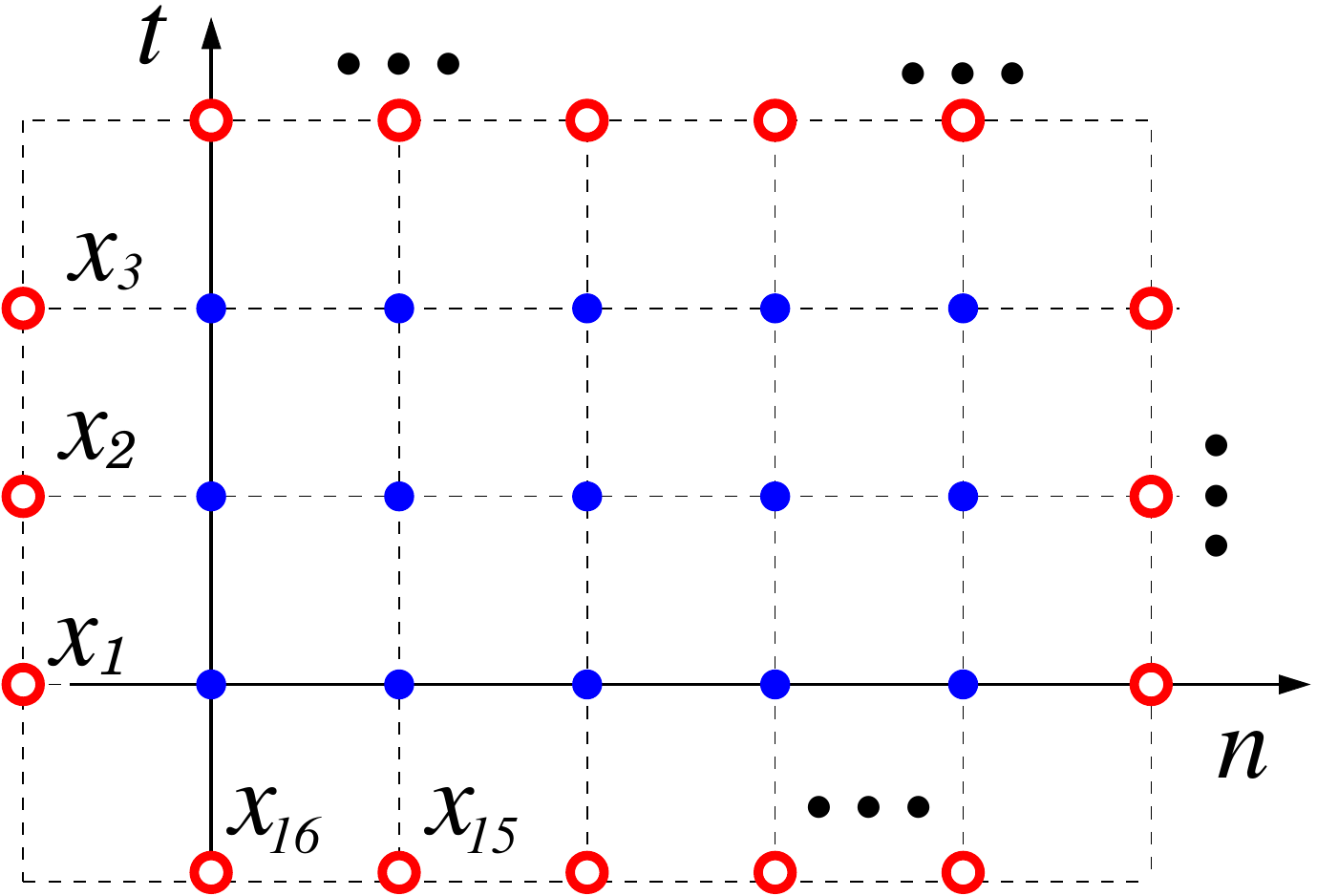}
\hspace{2mm}
(b)\;\includegraphics[width=0.23\textwidth]{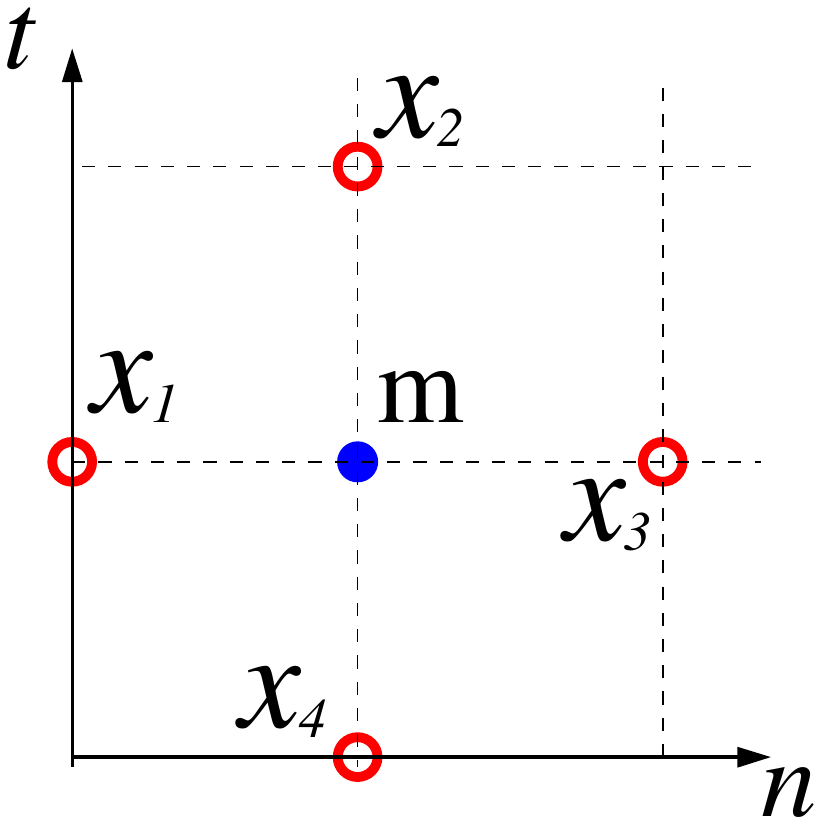}
  \hspace{2mm}
(c)\;\includegraphics[width=0.23\textwidth]{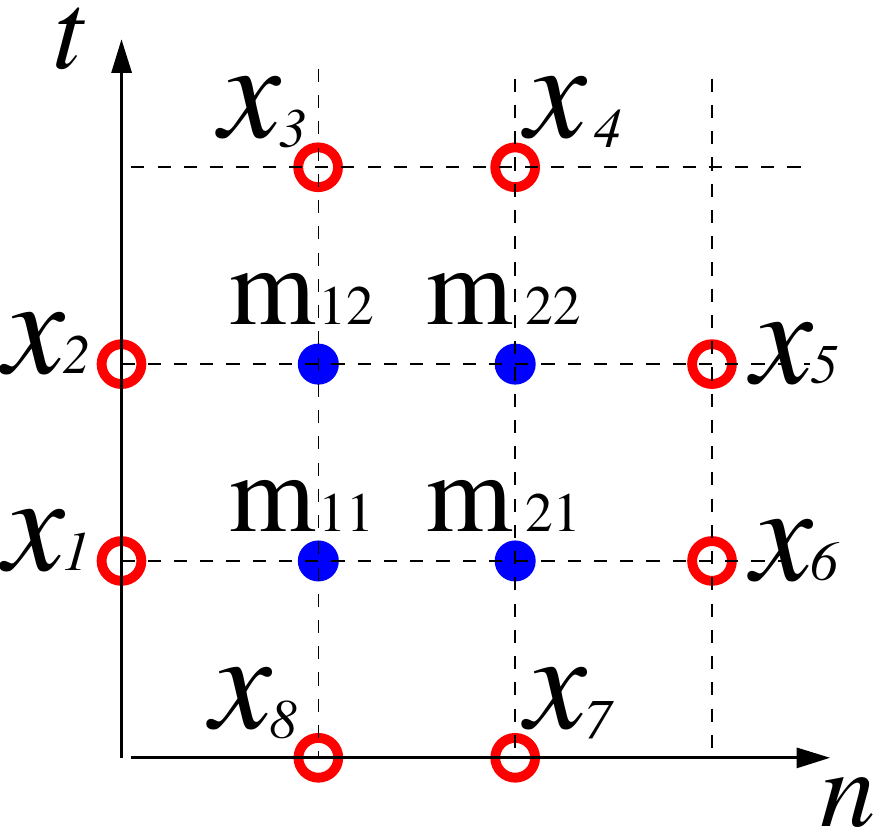}
	\caption{\label{fig:block2x2}
(Color online)
(a) A $[5\!\times\!3]$ domain $\R$ on the
2\dmn\ lattice. The red (open) circles indicate the boundary
$\partial \R$. For each  point $x_i\in \partial\R $, there exists a unique
adjacent point within the domain $\R$.
{\Brick s} $\MmR$ for
(b) $\R = [1\!\times\!1]$,  and
(c) $\R = [2\!\times\!2]$  domains, together with the corresponding
boundary points
$\partial\R=\{\ssp_{1}, \ssp_{2},  \ssp_{3}, \ssp_{4}\}$  and
$\partial\R=\{\ssp_{1}, \ssp_{2},  \cdots, \ssp_{8}\}$, respectively.
   }
\end{figure}

\subsubsection{Example: $\R = [2\!\times\!2]$ measure.}
\label{exam:block2x2}

For the {\brick}
\beq
\Mm=
        \left[\begin{array}{c}
\Ssym{12} \Ssym{22}   \\
\Ssym{11} \Ssym{21}
              \end{array}\right]
\ee{eq:block2x2}
the 8\dmn\ polytope $\Pol_{\Mm}$ is  parametrized by the boundary
$\partial \R$   points, \reffig{fig:block2x2}\,(c).  They  satisfy $0
\leq \ssp_{i} <1,  i=1,\dots 8$, supplemented by the four inequalities:
\beq
 0< P_k(x_1,\dots,x_8) \leq {8\,s(s^2-1)}, \qquad k=1,2,3,4,
\ee{eq:block2x2Inequalities}
where
\begin{eqnarray*}
\fl P_1=
{2(2s^2-1)}(\ssp_1 +\ssp_{8}+\Ssym{11})
 +{2s}(\ssp_3+\ssp_2+\ssp_6+\ssp_7+\Ssym{12}+\Ssym{21})+2(\ssp_4+\ssp_5+\Ssym{22})\nonumber\\
\fl P_2={2(2s^2-1)}(\ssp_2 +\ssp_3+\Ssym{12})
 +{2s}(\ssp_1+\ssp_{8}+\ssp_4+\ssp_5+\Ssym{11}+\Ssym{22})+2(\ssp_7+\ssp_6 +\Ssym{21})\nonumber\\
\fl P_3={2(2s^2-1)}(\ssp_7 +\ssp_6+ \Ssym{21})
 +{2s}(\ssp_1+\ssp_{8}+\ssp_4+\ssp_5 +\Ssym{22}+\Ssym{11})+2(\ssp_2+\ssp_3+\Ssym{12})\nonumber\\
\fl P_4={2(2s^2-1)}(\ssp_4 +\ssp_5 +\Ssym{22})
 +{2s}(\ssp_3+\ssp_2+\ssp_6+\ssp_7 + \Ssym{12}+\Ssym{21})+2(\ssp_1+\ssp_{8} +\Ssym{11})
\,.
\end{eqnarray*}

These inequalities lead to analytical expressions for $\Pol_{\Mm}$
volumes. A general $\Pol_{\Mm}$ volume is a four-dimensional integral,
whose calculation is lengthy and unilluminating, so we skip it here.
Instead, we evaluate $|\Pol_{\Mm}|$ for cases where some of the
inequalities (\ref{eq:block2x2Inequalities}) are satisfied  for all
points of the hypercube  $\Pol_0=\{x_1, \dots,x_8 \in [0,1)\}$. If all
letters of $\Mm$ belong to the interior alphabet, then all inequalities
hold, and $|\Pol_{\Mm}|=1$. Another easy case is the one where  three out
of four inequalities hold for all points in $\Pol_0$. For example,
consider
\[
       \Mm= \left[\begin{array}{cc}
        {s}-2    & {s}-2 \\
        \underline{2} & {s}-1
              \end{array}\right]
\]
for an even ${s>2}$. Then only the first inequality,
$0<P_1(x_1,\dots,x_8)$, is a non-trivial one, while the rest are
satisfied for all points in $\Pol_0$. Since the center of the hypercube
$\{x_i=1/2, i=1,\dots,8\}$ belongs to  the hyperplane
$P_1(x_1,\dots,x_8)=0$, the polytope $\Pol_{\Mm}$ has the same volume as
half of  the hypercube $\Pol_0$, \ie,  $|\Pol_{\Mm}|=1/2$.

It is also possible to find all {\inadmissible} symbol {\brick s}, with
$|\Pol_{\Mm}|=0$. For {\inadmissible} symbol {\brick s}, one of the
inequalities (\ref{eq:block2x2Inequalities}) must be violated for all
points of the hypercube $\Pol_0$. In particular, any combination of
symbols for $[2\!\times\!2]$ {\brick} that satisfies condition
\[
(2+\Ssym{11}){(2s^2-1)}+(4+\Ssym{12}+\Ssym{21}){s}+ (2+\Ssym{22})\leq 0
\]
is forbidden, \ie, $|\Pol_{\Mm}|=0$. This implies that symbol {\brick s}
  \[
  \left[\begin{array}{cc}
        \underline{2} & \underline{2}  \\
        \underline{2} &  \underline{2}
              \end{array}\right],  \qquad
        \left[\begin{array}{cc}
        \Ssym{12} & \Ssym{22}  \\
        \underline{3} &  \Ssym{21}
              \end{array}\right]
\]
are  {\inadmissible} if either $\Ssym{12}+\Ssym{21}\leq{2s}-6$ and
arbitrary  $\Ssym{22}$, or
$\Ssym{12}+\Ssym{21}={2s}-5,\Ssym{22}\leq{s}-3$.
Other forbidden $[2\times2]$ {\brick}s  are obtained by
application of the symmetry operations.

\subsubsection{Numerics.}
\label{sect:2Dnumerics}

\begin{figure}	
\begin{center}
(a)\;\includegraphics[width=0.35\textwidth]{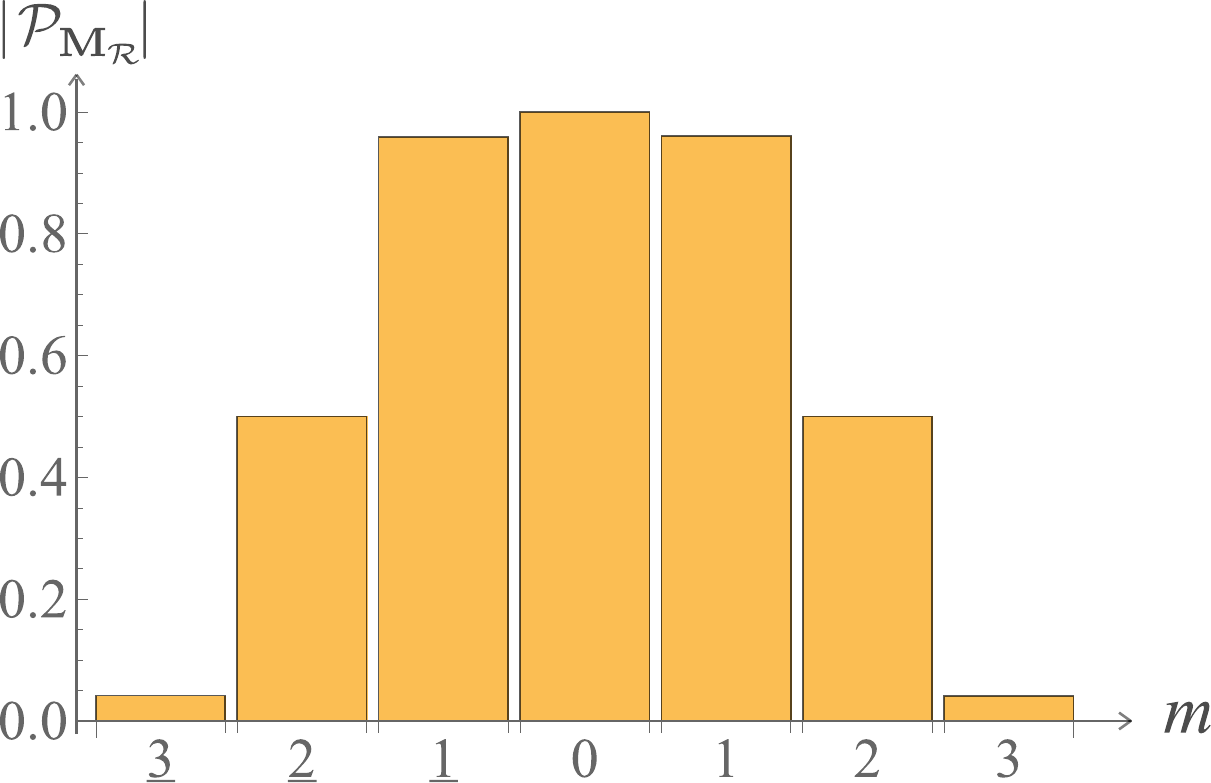}
  \hspace{4mm}
(b)\;\includegraphics[width=0.35\textwidth]{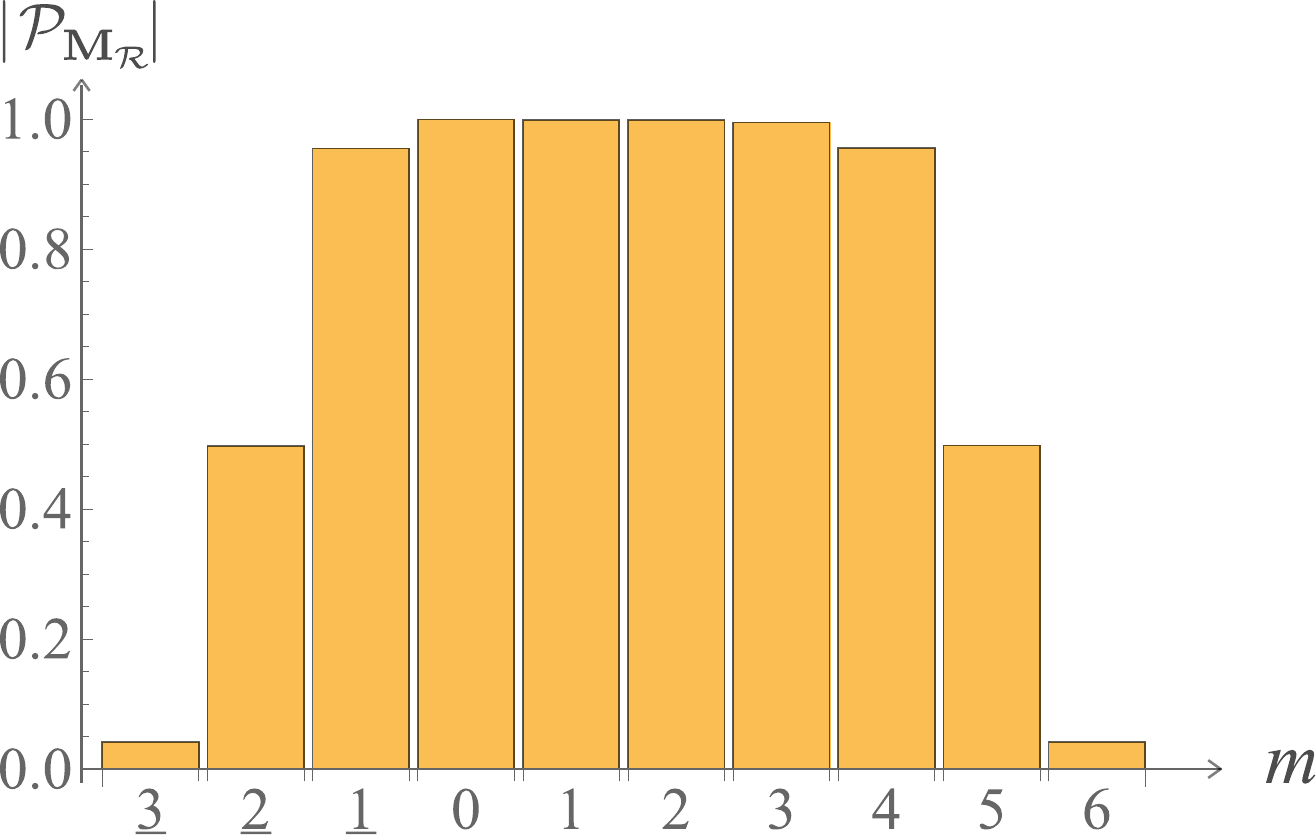}\\
  \vspace{4mm}
(c)\;\includegraphics[width=0.36\textwidth]{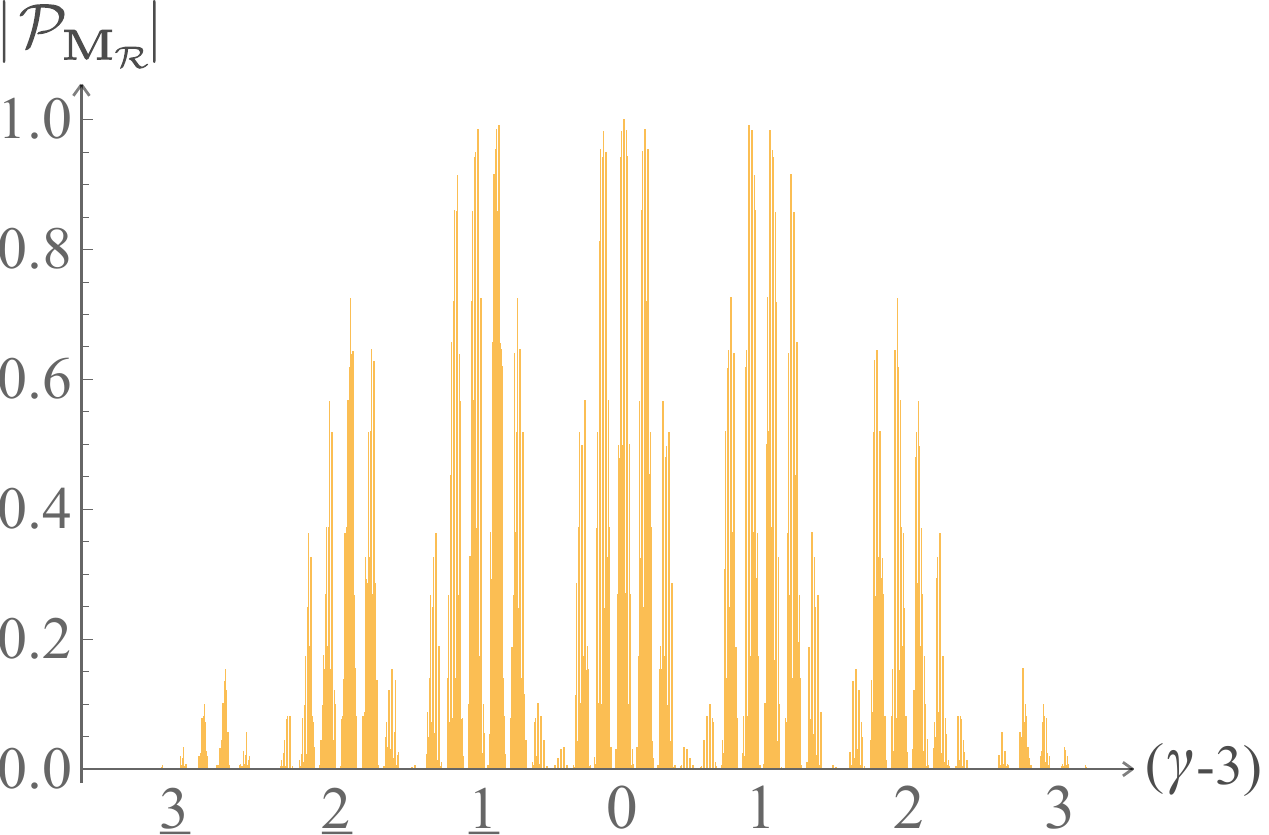}
  \hspace{4mm}
(d)\;\includegraphics[width=0.36\textwidth]{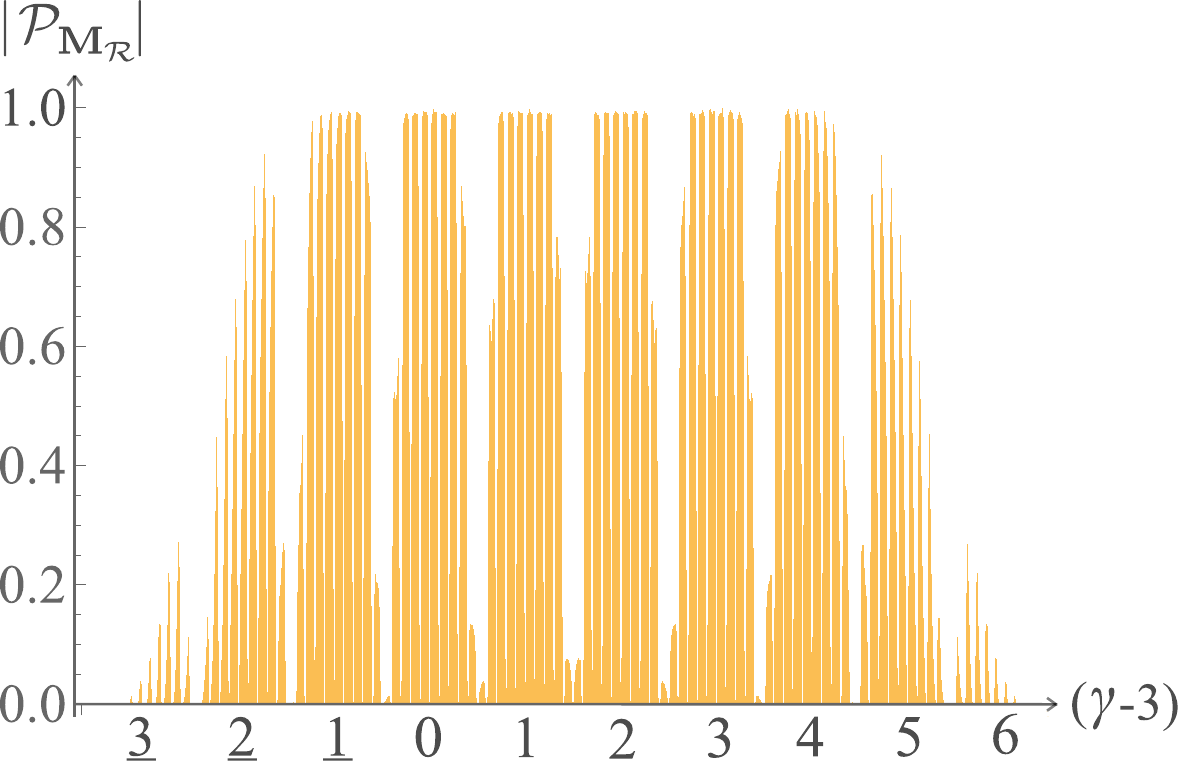}
\end{center}
\caption{\label{fig:RJsymbol}
Relative frequencies $|\Pol_{\Mm_\R}| = f({\Mm_\R})/{d_\R}$ of {\brick s}
$\MmR$ obtained from long-time numerical simulations ($\sim10^7$
iterations, rounded off to two significant  digits) are in agreement with
the exact formulas of \refeq{exactBlock1x1}.
    (a) $\R=[1\!\times\!1]$ domain, \reffig{fig:block2x2}\,(b), ${s=2}$.
The interior alphabet \refeq{2dCatLattAlph} consists of only one letter
$\Ai=\{0\}$, with the corresponding polytope of maximal volume,
$|\Pol(0)|=1$.
    (b) $\R=[1\!\times\!1]$, ${s=7/2}$.
The maximum frequencies are attained for the four letters
\refeq{2dCatLattAlph7} from the interior alphabet $\Ai=\{0,1,2,3\}$.
The {\brick s} $\MmR$
over the $\R=[2\!\times\!2]$ {\spt} domain, \reffig{fig:block2x2}\,(c),
can be represented as numbers  \refeq{block2x2bases=3}
in base ${2s}+3$.
    (c) For $\R=[2\!\times\!2]$,  ${s}=2$,
the only combination of interior symbols that attains the maximum measure
is $\Ssym{11}=\Ssym{12}= \Ssym{21}=\Ssym{22}=0$.
    (d) For $\R=[2\!\times\!2]$, ${s=7/2}$,
the interior alphabet \Ai\ \refeq{2dCatLattAlph7} has $4$ letters, so
there are $4^4$ \brick s that attain the maximum measure.
    }
\end{figure}

The  volumes  of $\Pol_{\Mm}$ evaluated analytically are found to be
consistent with the measure  of a given {\brick} \Mm\ obtained by
numerical simulations of trajectories with random initial conditions.

While in the $\R=[1\!\times\!1]$ case it was possible to plot the single
symbol {\brick} $\Mm$ measures $\Msr(\Ssym{j})$ along a single, integer
$j$ labelled axis, as in \reffig{fig:RJsymbol}\,(a) and (b), the $\R =
[2\!\times\!2]$ has four sites
\(
z\in\{11,12,21,22\}
\,.
\)
A way to map the array \refeq{eq:block2x2} onto a line is to write it as
\beq
\gamma(\MmR) = \gamma_1.\gamma_2 \gamma_3 \gamma_4
\ee{block2x2bases=3}
in base ${2s}+3$, where
$\gamma_k\in\{0,1,\cdots,{2s}+2\}$ are the symbols $\Ssym{ij}$
shifted into nonnegative integers,
\[
(\gamma_1,\gamma_2,\gamma_3,\gamma_4)
= (\Ssym{11}+3,\Ssym{12}+3,\Ssym{21}+3,\Ssym{22}+3)
\,.
\]
Estimates of the corresponding measures $\Msr(\MmR)$  from long-time
numerical Hamiltonian simulations, on a spatially periodic domain of
extent $L=20$, are displayed in this way in \reffig{fig:RJsymbol}\,(c)
and (d), and are in full agreement with the available analytical data. In
particular, whenever all symbols $\Ssym{ij}$ belong to the interior
alphabet, the numerics is consistent with relative frequency
$|\Pol_{\MmR}|=1$.

\section{Families of {\twots}}
\label{sect:twots}

%%%%%%%%%%%%%%%%%%%%%%%%%%%%%%%%%%%%%%%%%%%%%%%%%%%%%%%%%%%%%%%%%%%%%%%
\begin{figure}
\begin{center}
(a) \includegraphics[width=0.45\textwidth]
{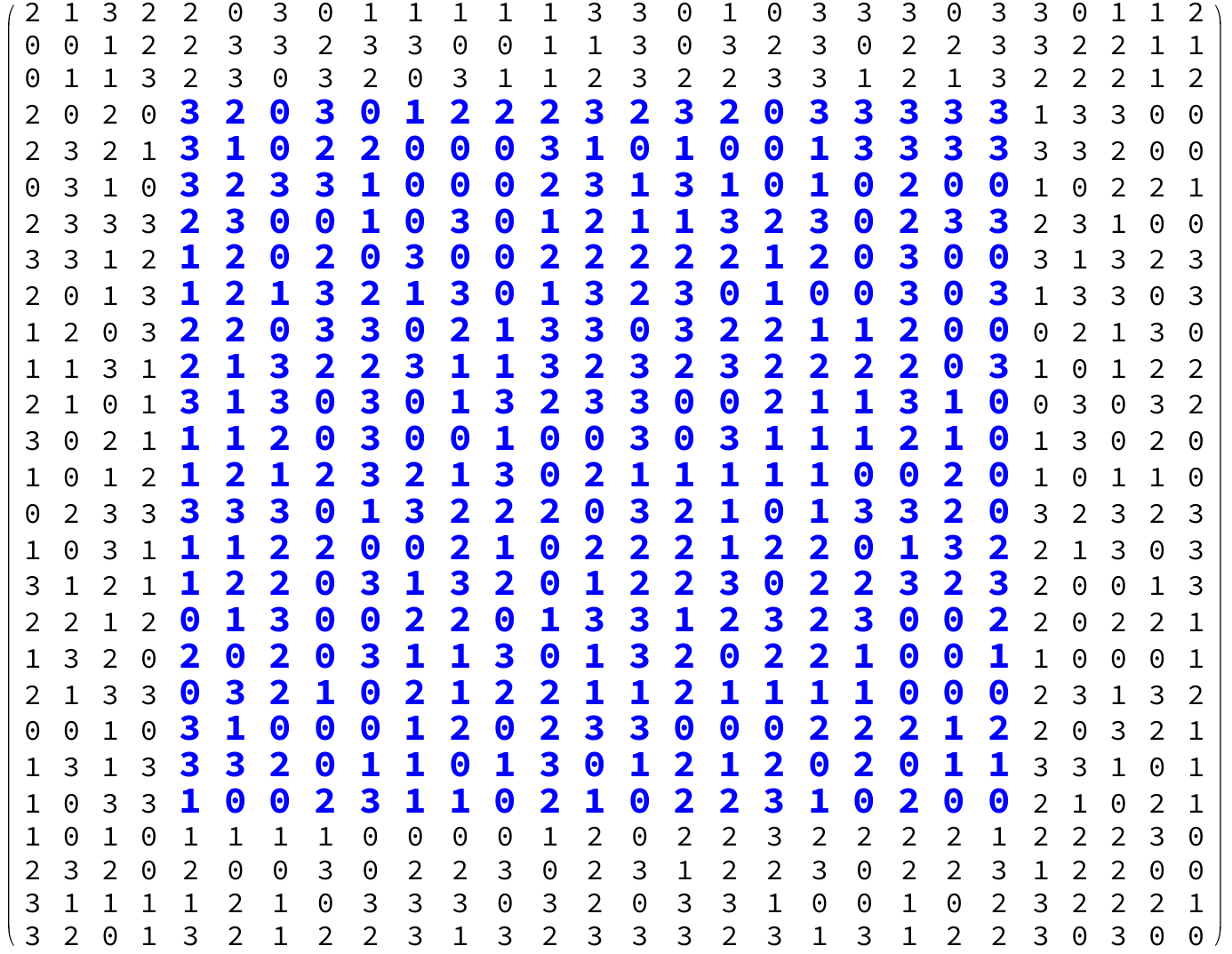} %\hspace{0.7cm} %color: {AKSs7colrBorderM1}
(b) \includegraphics[width=0.45\textwidth]
{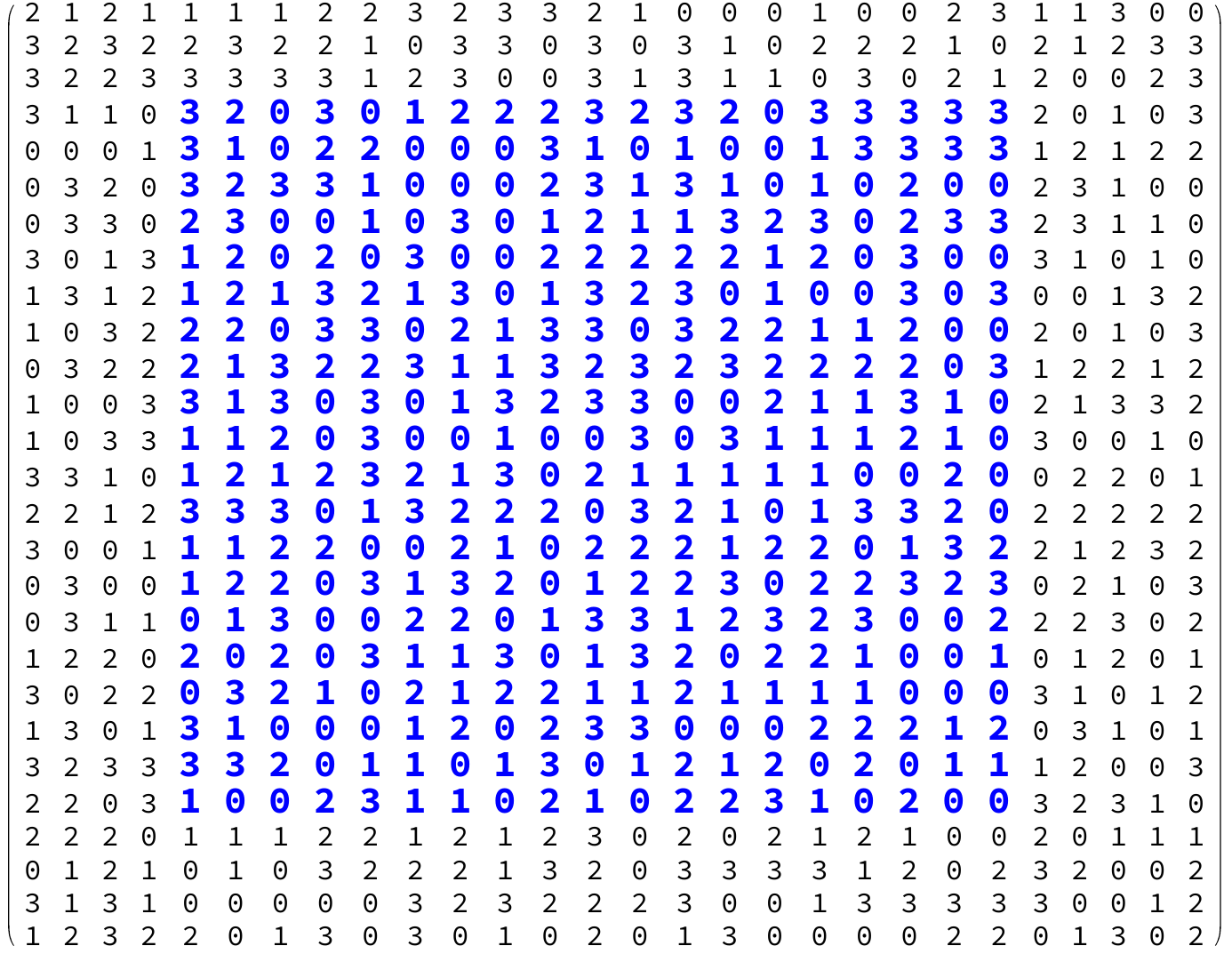}                 %color: {AKSs7colrBorderM2}
\end{center}
\caption[]{
(Color online)
{\Brick s} $\Mm$, $\Mm'$ encoding  two $[L\!\times\!T] = [28 \times 27]$
{\twots} $\Xx$, $\Xx'$ that shadow each other within the $[19\times 20]$
domain $\R\subset \ZLT$ (bold/blue).
The symbols  over the $\R$ are  drawn randomly from the ${s=7/2}$ interior
alphabet $\Ai=\{0,1,2,3\}$ and are the same for both symbolic {\brick s}
$\Mm$, $\Mm'$. The symbols outside $\R$, also drawn randomly from \Ai,
differ for $\Mm$, $\Mm'$.
}
\label{fig:BGcloseActSymb}  %in blog, color: {fig:AKScloseActSymb}
\end{figure}
%%%%%%%%%%%%%%%%%%%%%%%%%%%%%%%%%%%%%%%%%%%%%%%%%%%%%%%%%%%%%%%%%%%%%%%

%%%%%%%%%%%%%%%%%%%%%%%%%%%%%%%%%%%%%%%%%%%%%%%%%%%%%%%%%%%%%%%%%%%%%%%
\begin{figure} % PC 2019-08-27 replaces \reffig{fig:AKScloseActSp}
\begin{center}
(a) \includegraphics[width=0.45\textwidth]
{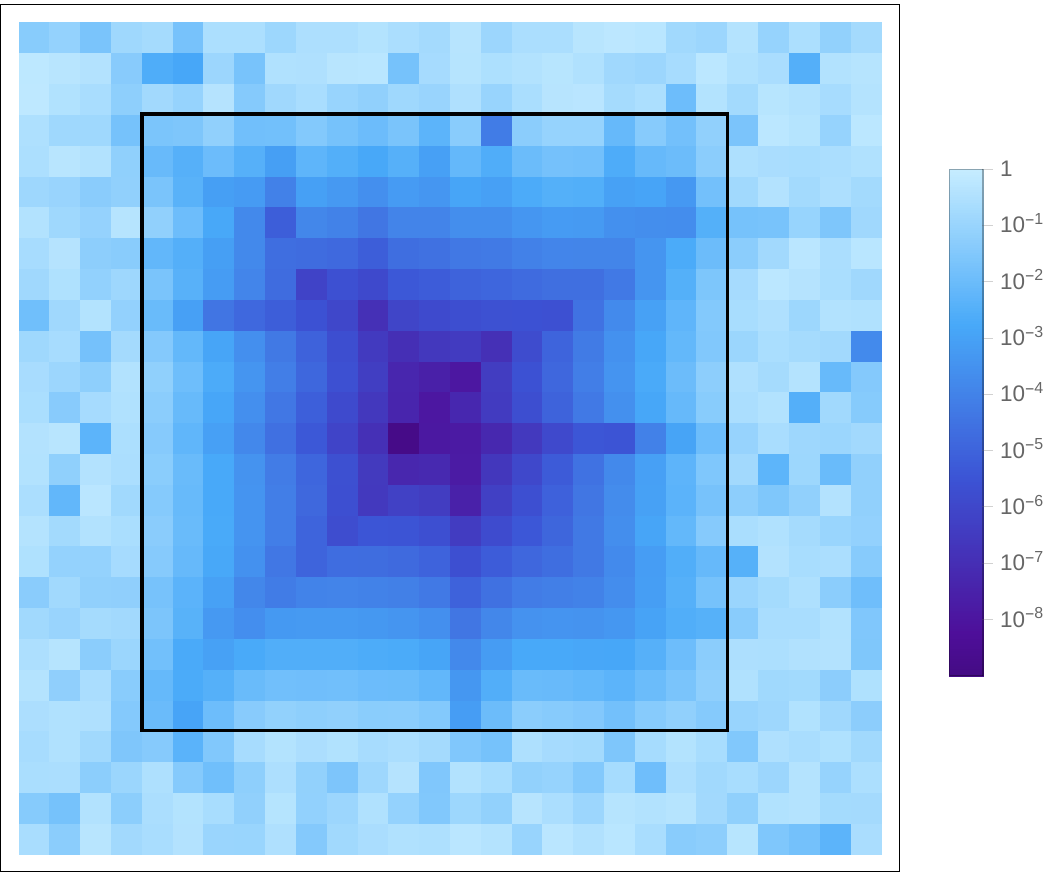} % elsewhere {AKSs7distM1M2} %\hspace{0.7cm}
(b) \includegraphics[width=0.45\textwidth]
{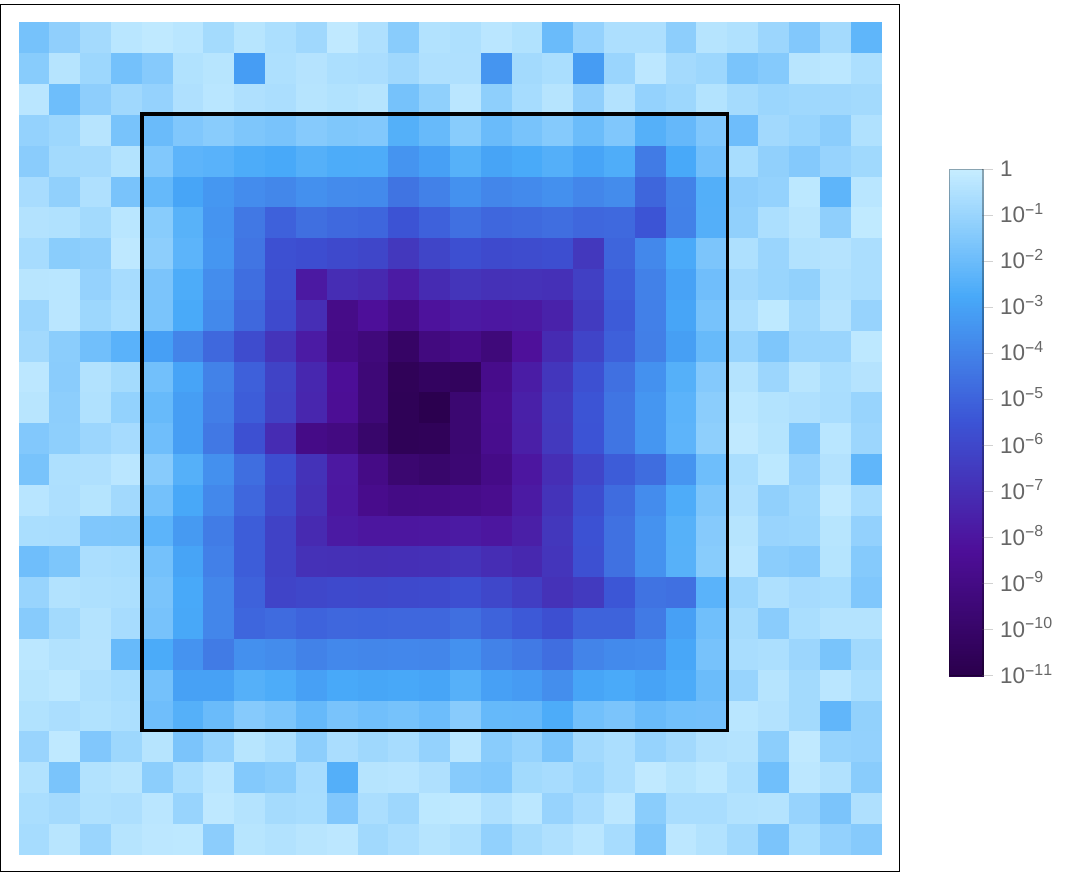} % elsewhere {AKSs7distM1M2log}
\end{center}
\caption[]{\label{fig:AKSLPS12}
(Color online)
The plots of the logarithm of the site-wise distance
\(|x_{z}-x'_{z}|\) of the states $\Xx$, $\Xx'$ of
\reffig{fig:BGcloseActSymb} for
(a)
${s=7/2}$  and
(b) ${s=13/2}$
illustrate the exponential fall-off of the site-wise distances as one
approaches the center of the shared symbol domain ${\R}$.  The
exponential fall-off towards the center of ${\R}$ is faster for $s=13$.
Outside of the shared domain \R\ the distances are of the order $1$.
}
\end{figure}
%%%%%%%%%%%%%%%%%%%%%%%%%%%%%%%%%%%%%%%%%%%%%%%%%%%%%%%%%%%%%%%%

Linear encoding makes it an easy task to obtain \catlatt\
\refeq{CoupledCats} solutions. Of particular interest are the doubly
periodic solutions (the \spt\ analogs of $d\!=\!1$ cat map \po s), the
\twots\ with periods $L$ and $T$,
\[
          \ssp_{nt} = \ssp_{n+L,t+T},
\qquad    n=1,\cdots ,L,\;\; t=1,\cdots,T
\,,
\]
invariant under  spatial $\Spshift^L$  and temporal $\Tshift^T$ shifts.

Since the interior alphabet \Ai\ corresponds to a full shift
dynamical system,  any $[L\!\times\!T]$ {\brick} of interior symbols
$\Mm= \{\Ssym{z}\in\Ai|z\in \ZLT \}$
over the rectangular \spt\ domain
\[
\ZLT=\{z=(n,t)| n=1,\cdots,L,\;\; t=1,\cdots,T\}
\,,
\] is {\admissible} and generates an \twot\ solution of \refeq{CoupledCats}.
The corresponding {\spt} state  $\Xx =\{\ssp_z, z\in \ZLT\}$ (restricted to the domain $\ZLT$)
is obtained
by taking the inverse of \refeq{CoupledCats}:
\begin{equation}
   \ssp_z=\sum_{z'\in \ZLT} \gp_{zz'} \Ssym{z'}, \qquad    \Ssym{z'}\in \Ai
\,,
\end{equation}
where $\gp_{zz'}$ is the Green's function with periodic boundary
conditions, see \rf{CL18}.
We next use the obtained solutions
to test shadowing  properties of \twots.

\subsection{Partial shadowing}
\label{sect:partShade}

As the first application, we show  in \reffig{fig:BGcloseActSymb} two
$[L\!\times\!T]$ \brick s $\Mm$, $\Mm'$ composed of interior symbols
$m_z\in \Ai$, which coincide within a   rectangular  domain
$\R\subset\ZLT$. In  \reffig{fig:AKSLPS12}, we  show the distances between
the corresponding {\spt} states $\Xx$, $\Xx'$. In agreement with the
results of \refsect{sect:CCMmeasBrick}, the distances between
$\ssp_z\in\Xx$ and $\ssp_z'\in\Xx'$ shrink exponentially as $z$
approaches the center of the domain \R. In other words, \Xx\ and $\Xx'$
shadow each other within the domain \R.

%%%%%%%%%%%%%%%%%%%%%%%%%%%%%%%%%%%%%%%%%%%%%%%%%%%%%%%%%%%%%%%%%%%%%%%
\begin{figure}
\begin{center}
 \includegraphics[width=0.48\textwidth]{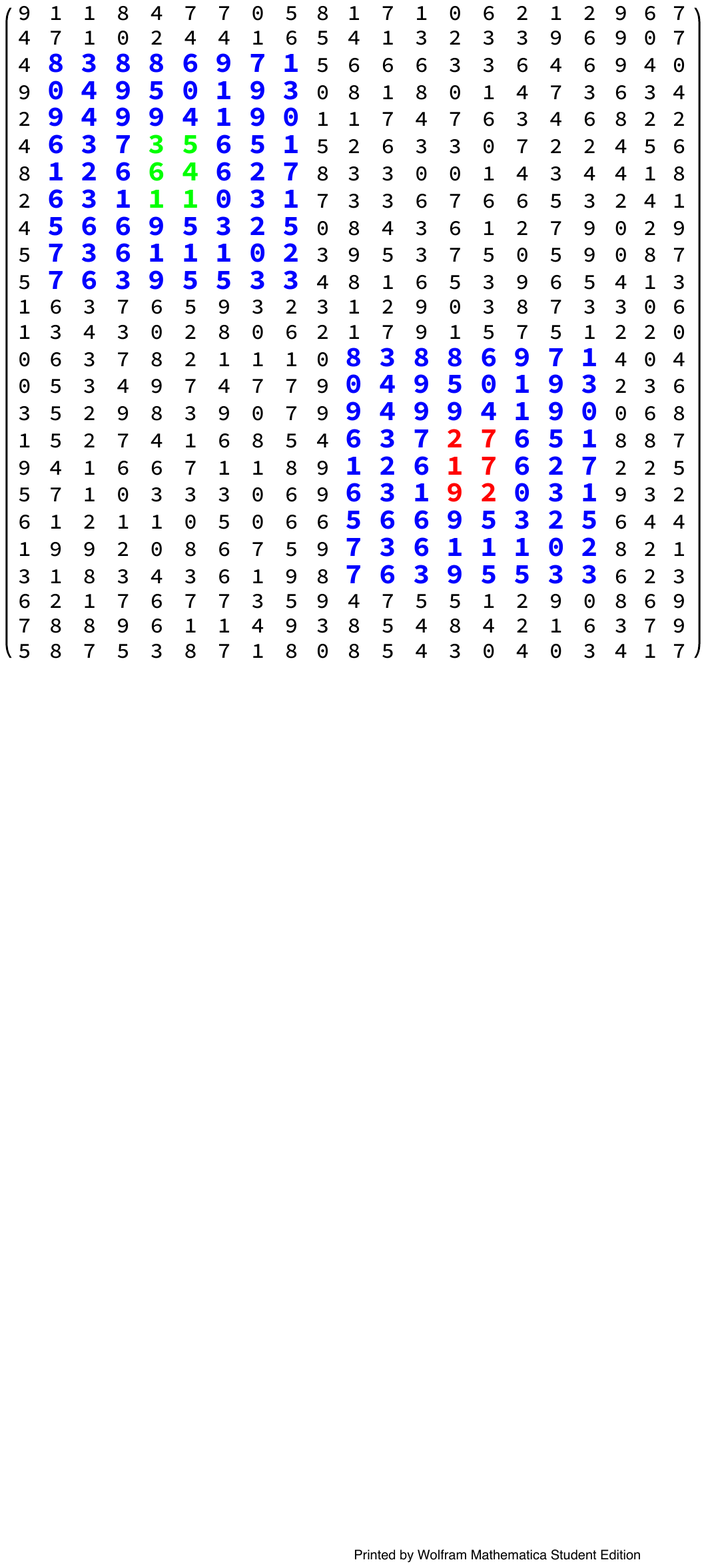}
 \includegraphics[width=0.48\textwidth]{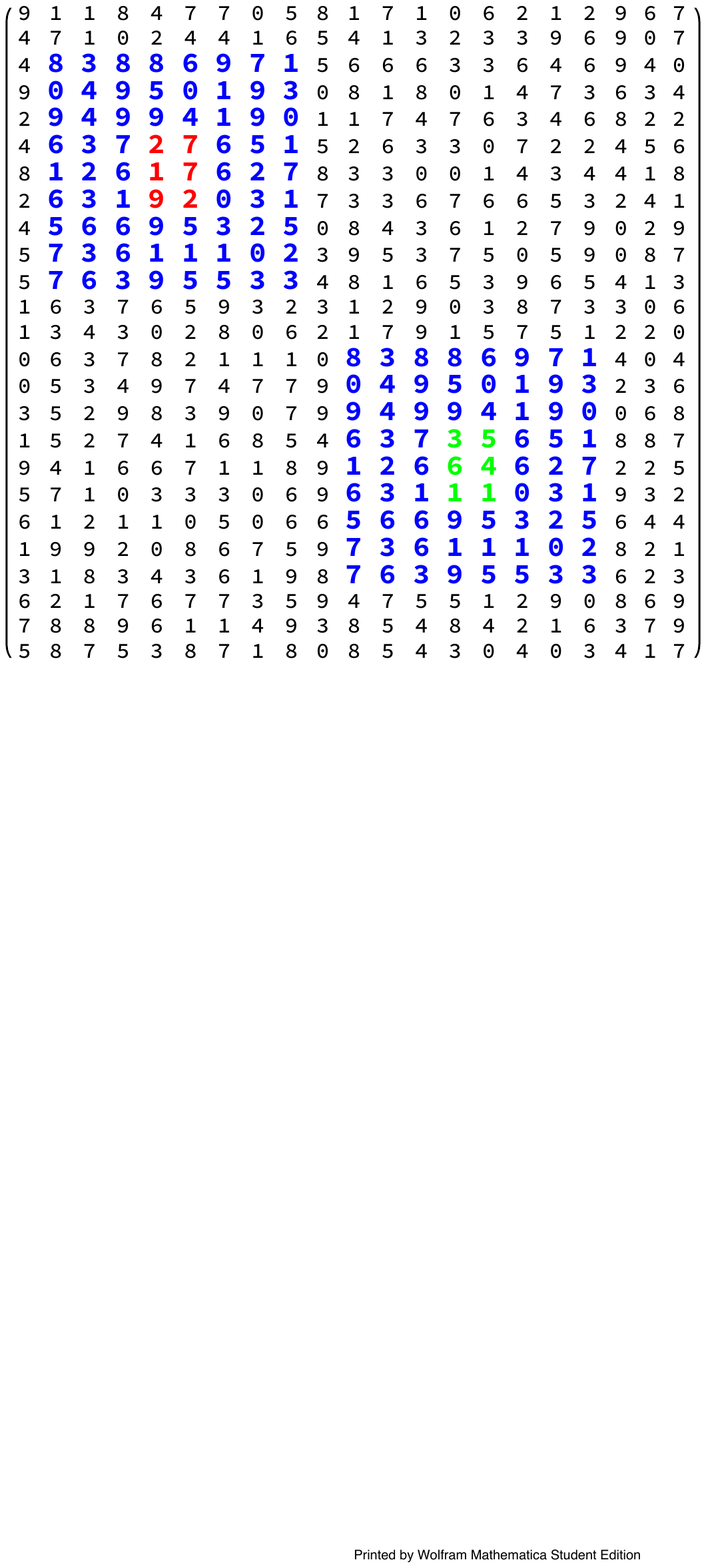}
\end{center}
\caption[]{\label{fig:AKSs13TwoBlock}
(Color online)
Symbol \brick s $\Mm_1$, $\Mm_2$ of two $[25\times25]$ \twots\ $\Xx_1$,
$\Xx_2$, with
annular
(encounter) domains $\R_1$ and $\R_2$ indicated
in blue (bold).
The symbols are drawn randomly from the interior alphabet $\Ai$ for
${s=13/2}$. The two \brick s $\Mm_1$, $\Mm_2$ are related by the permutation
of symbol \brick s over  the interior domains $A_1$, $A_2$ (red and green,
respectively). Any $[3\times3]$ symbol {\brick} $\Mm^{[3\times 3]}$
appears the same number of times in both $\Mm_1$ and $\Mm_2$, and the two
\twots\ $\Xx_1$ and $\Xx_2$ shadow each other at every point.
    }
\end{figure}
%%%%%%%%%%%%%%%%%%%%%%%%%%%%%%%%%%%%%%%%%%%%%%%%%%%%%%%%%%%%%%%%%%%%%%%

%%%%%%%%%%%%%%%%%%%%%%%%%%%%%%%%%%%%%%%%%%%%%%%%%%%%%%%%%%%%%%%%%%%%%%%
\begin{figure}
\begin{center}
   \includegraphics[width=0.48\textwidth]
{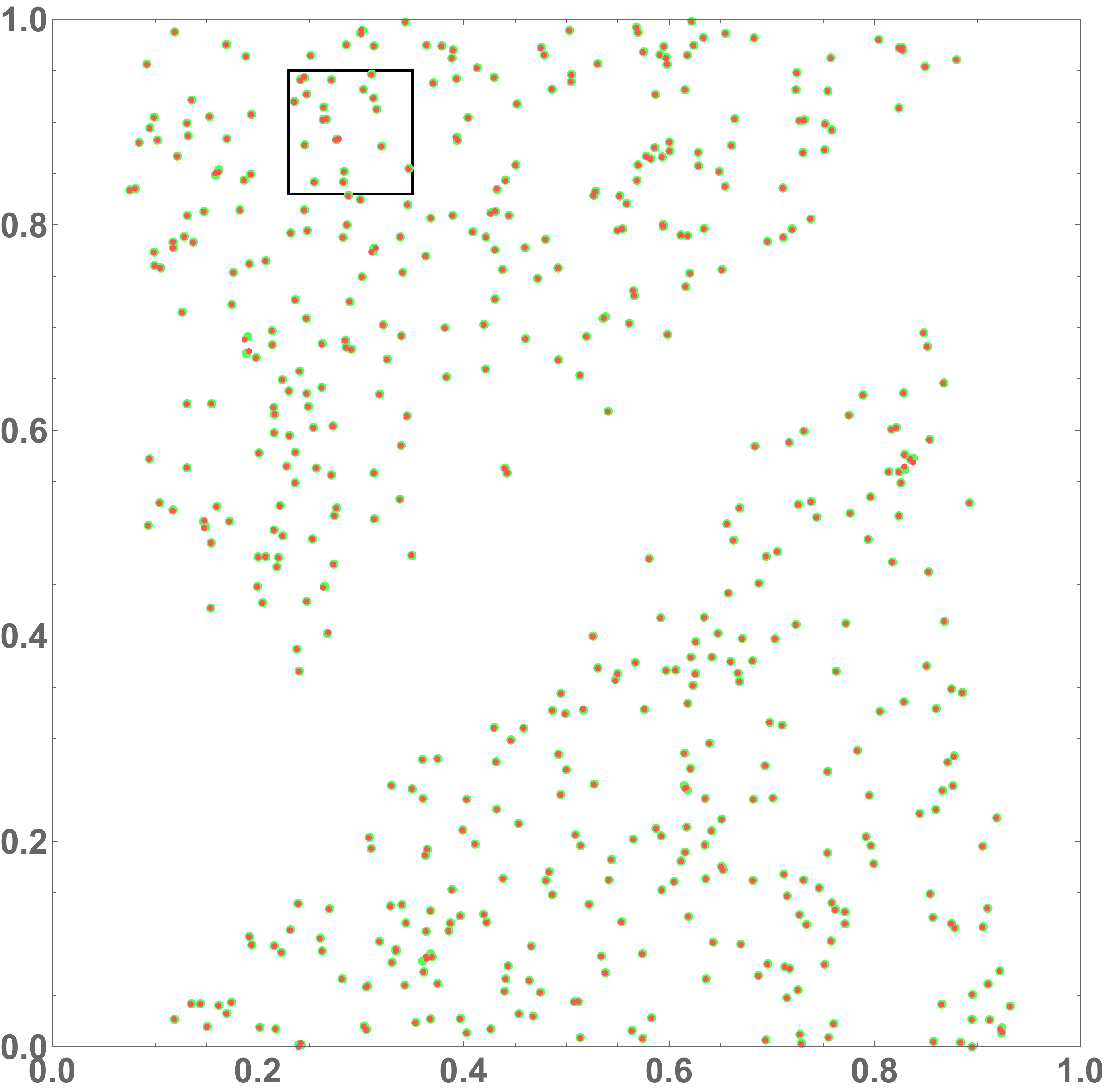}
   \includegraphics[width=0.48\textwidth]
{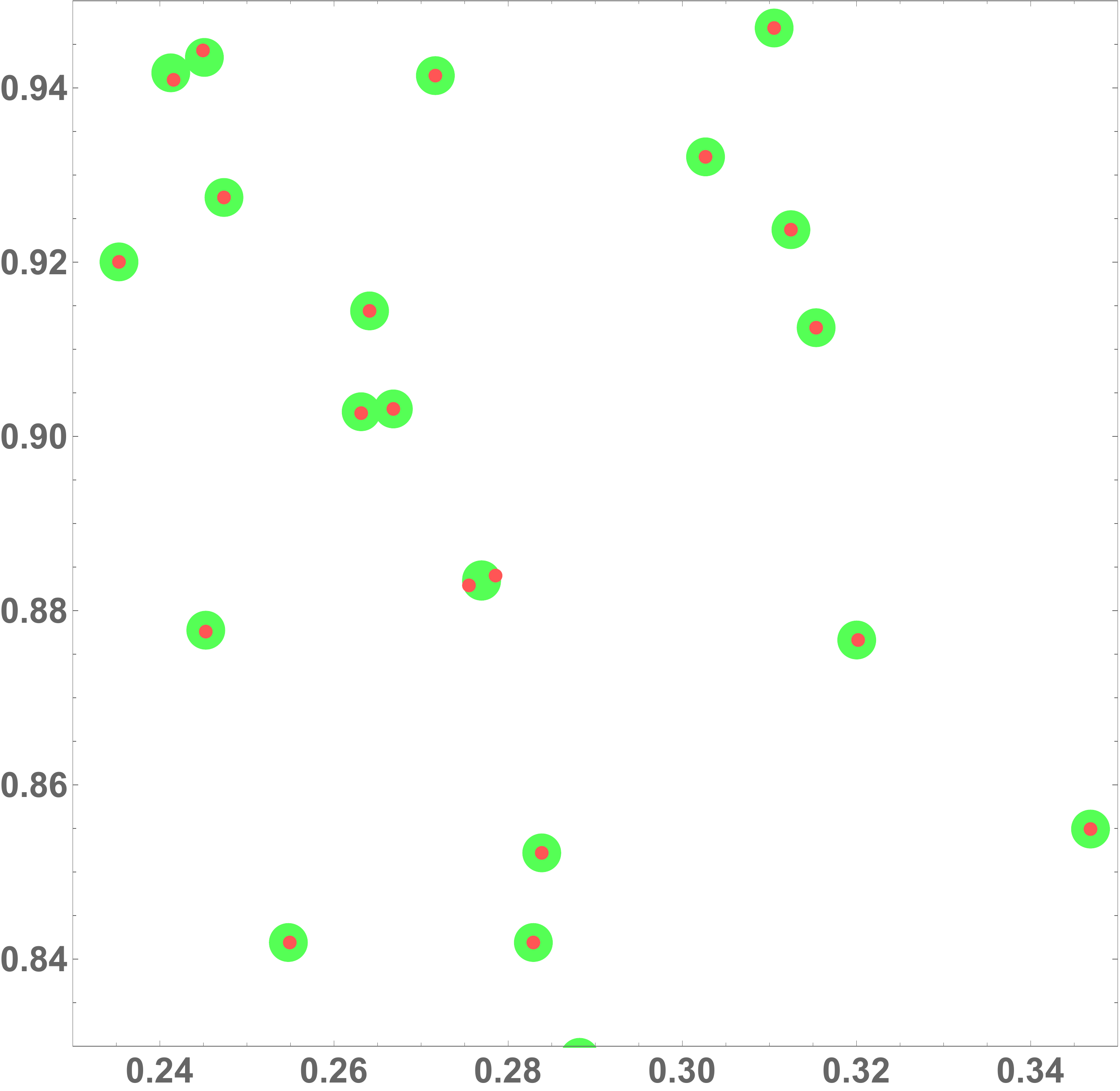}
\end{center}
\caption[]{\label{fig:AKSs13TwoBlocks}
(Color online)
(left)
Hamiltonian coordinate-\-momentum representation of the two \twots\
$\Xx_1, \Xx_2$ of \reffig{fig:AKSs13TwoBlock}. This Hamiltonian
representation is explained in \refappe{sect:HamiltonCatLatt}.
(right)
An enlargement of the square in (left).  The small (red)  circles indicate
points  $(q^{(1)}_z, p^{(1)}_z)$  of   $\Xx_1$. The large  (green)
circles correspond to the  points $(q^{(2)}_z, p^{(2)}_z)$  of   $\Xx_2$.
The solutions pair nearly perfectly, except for the points
from the encounter domains $z\in \R_1\cup \R_2$,  where some small
deviations can be  observed.
    }
\end{figure}
%%%%%%%%%%%%%%%%%%%%%%%%%%%%%%%%%%%%%%%%%%%%%%%%%%%%%%%%%%%%%%%%%%%%%%%

\subsection{Full shadowing}
\label{sect:fullShade}

%%%%%%%%%%%%%%%%%%%%%%%%%%%%%%%%%%%%%%%%%%%%%%%%%%%%%%%%%%%%%%%%%%%%%%%
\begin{figure}
    \begin{center}
\includegraphics[width=0.48\textwidth]{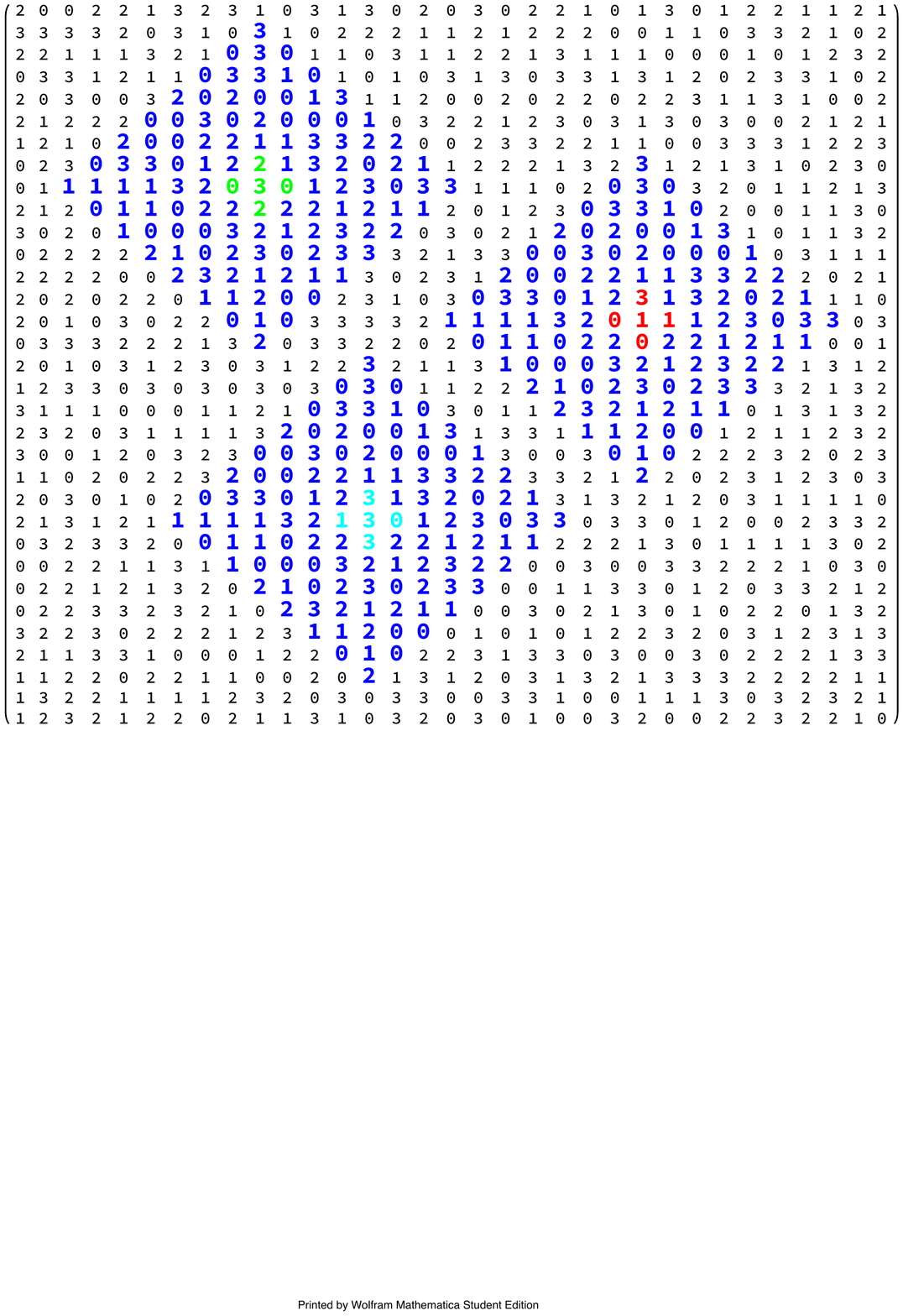}
\includegraphics[width=0.48\textwidth]{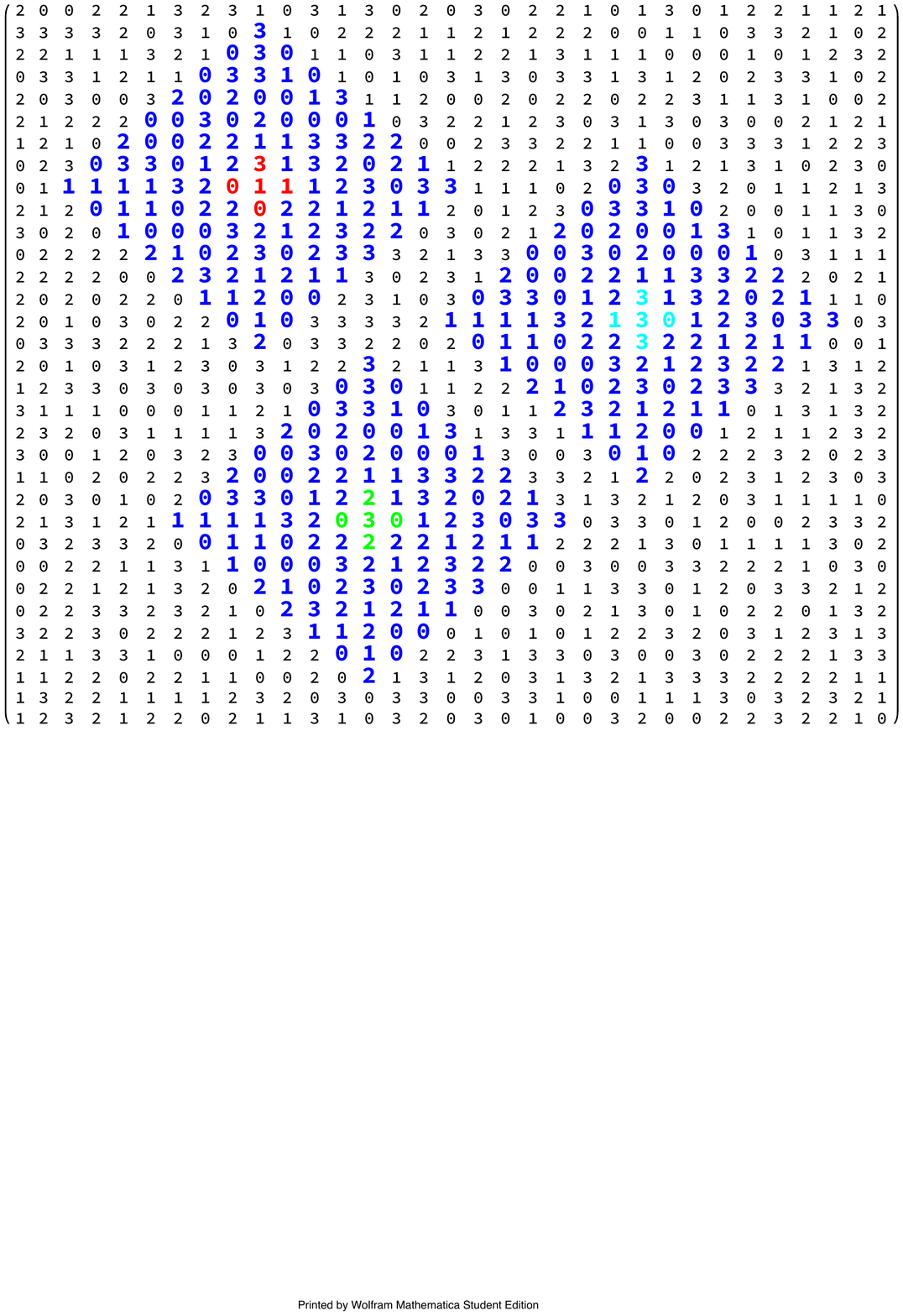}~\hspace{0.3cm}%
\includegraphics[width=0.48\textwidth]{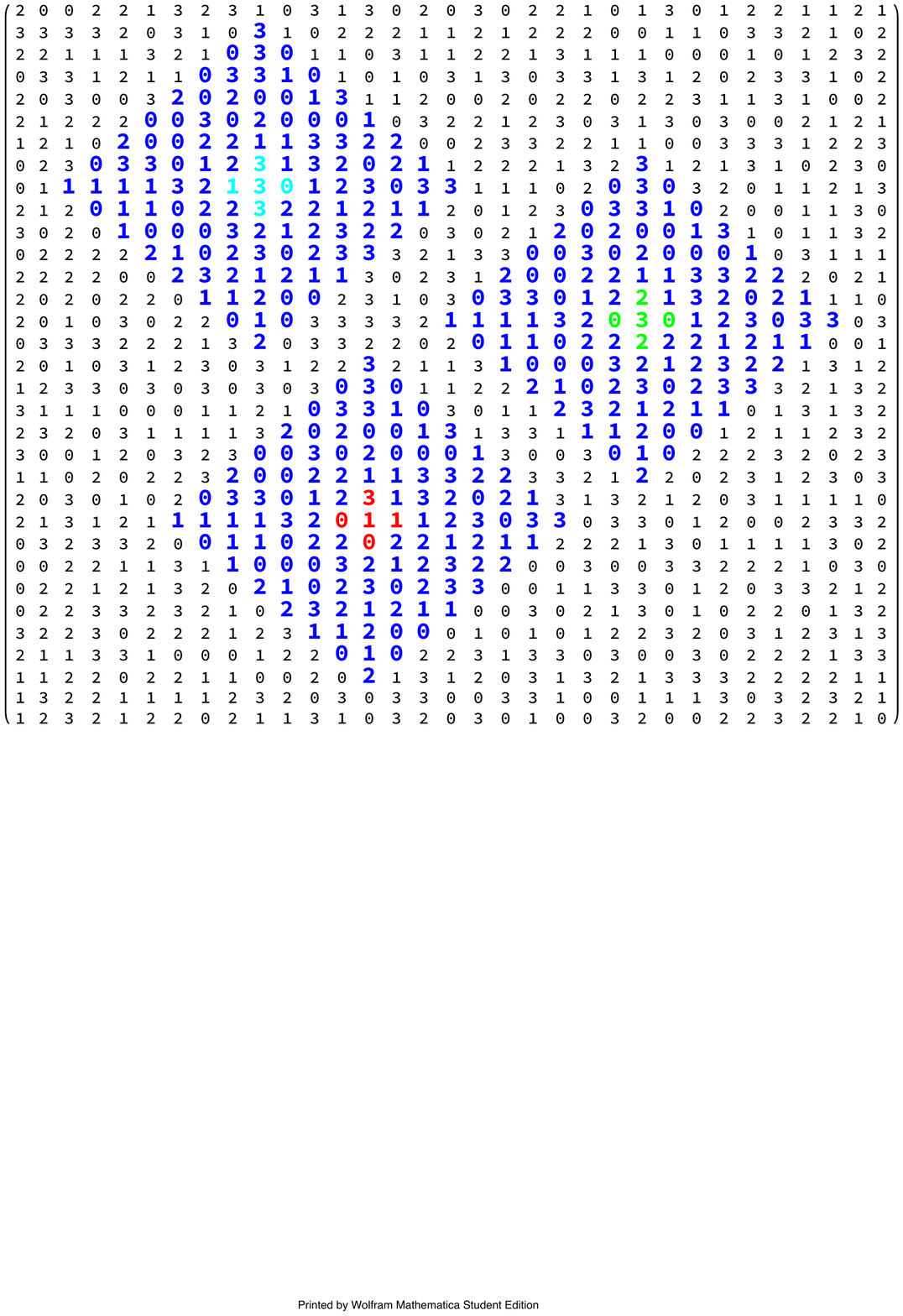}
    \end{center}
\caption[]{
(Color online)
Symbol \brick s $\Mm_1, \Mm_2, \Mm_3$ of three $[33\times33]$ \twots\
$\Xx_1, \Xx_2, \Xx_3$, with the 3-encounter diamond-shaped
domains $\R_1, \R_2, \R_3$  indicated in blue (bold). The symbols are
drawn randomly from the interior alphabet $\Ai$ for ${s=7/2}$.
$\Mm_1\to\Mm_2\to\Mm_3$ are related by the cyclic permutation of the
symbol \brick s  within  the interior domains (red, green, light  blue).
Any distinct $[4\!\times\!4]$ symbol {\brick} $\Mm^{[4\times 4]}$ appears
the same number of times in each \twot\ (or not at all).
        }
\label{fig:AKSs13TwoBlock2}
\end{figure}
%%%%%%%%%%%%%%%%%%%%%%%%%%%%%%%%%%%%%%%%%%%%%%%%%%%%%%%%%%%%%%%%%%%%%%%

%%%%%%%%%%%%%%%%%%%%%%%%%%%%%%%%%%%%%%%%%%%%%%%%%%%%%%%%%%%%%%%%%%%%%%%
\begin{figure}
\begin{center}
\includegraphics[height=0.48\textwidth]
{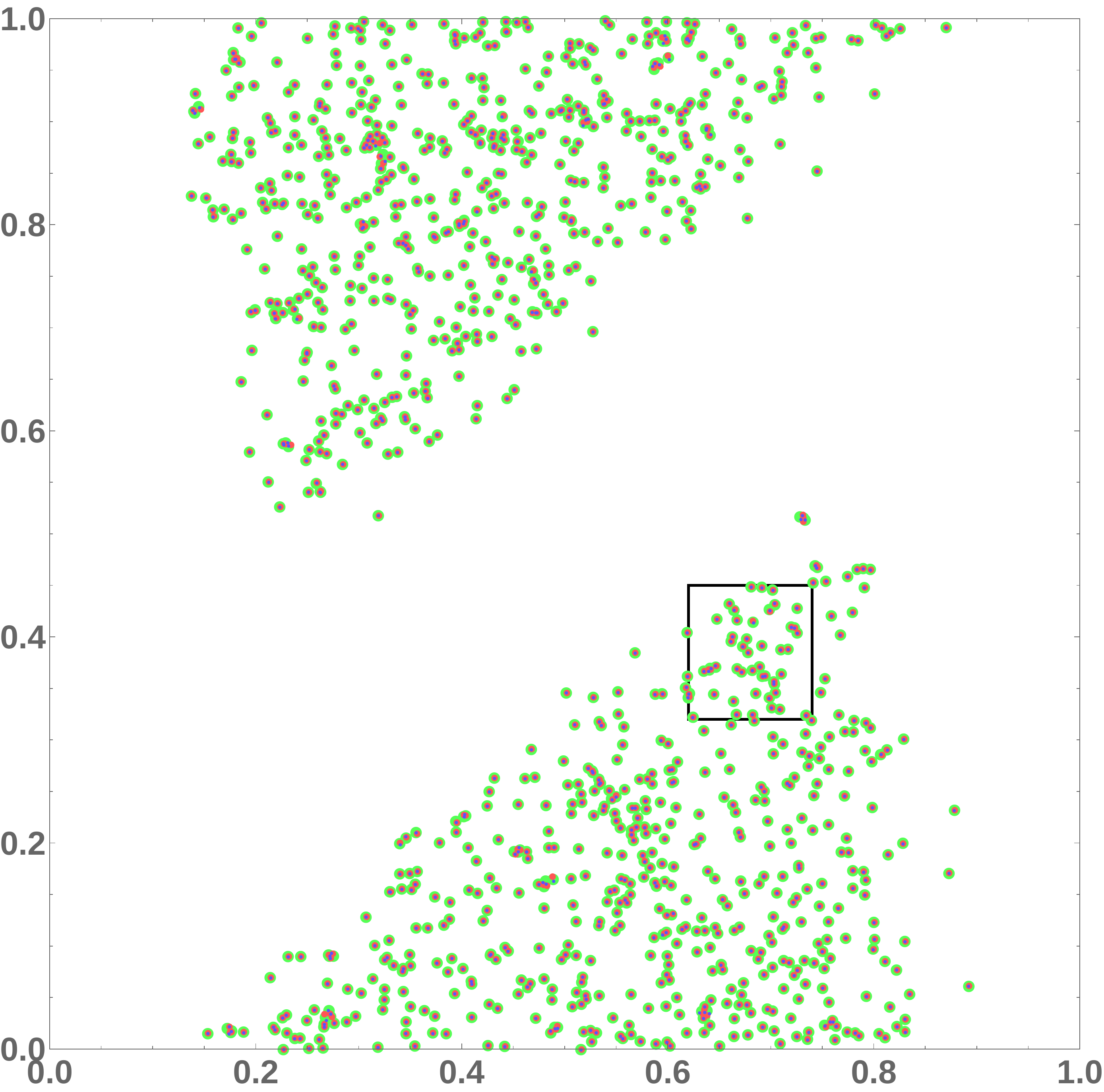}
   \includegraphics[height=0.48\textwidth]
{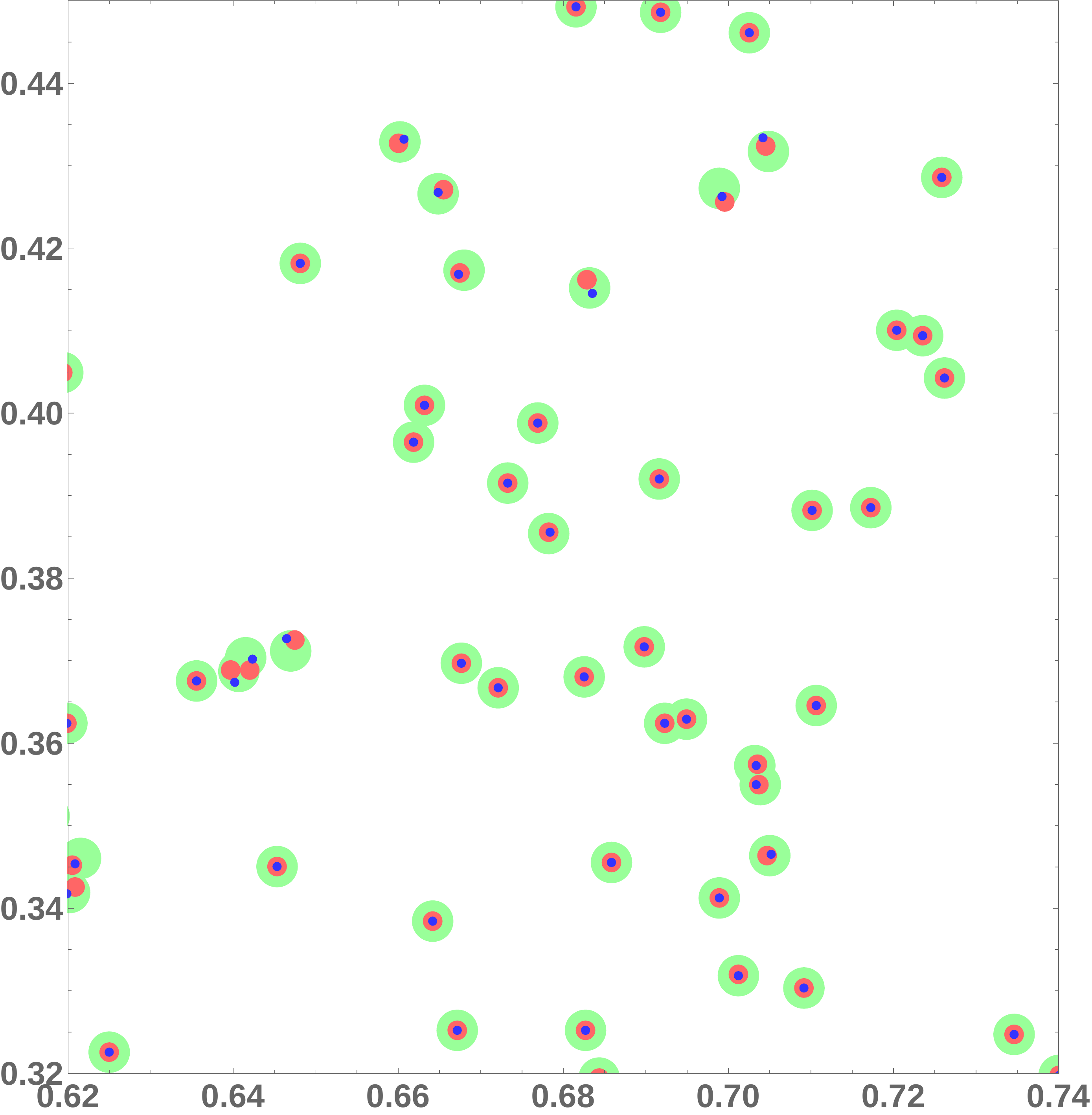}
\end{center}
\caption[]{\label{fig:AKSs13TwoBlocks1}
(Color online)
(left) Hamiltonian coordinate-\-momentum representation $(q^{(i)}_z,p^{(i)}_z)$,
$i=1,2,3$
of the three \twots\  $\Xx_i$,
$i=1,2,3$  of \reffig{fig:AKSs13TwoBlock2},
shown as centers of the green, red and blue circles. (right)
An enlargement of the (left) \statesp.  Note that the three \twots\ $\Xx_i$
form almost perfect triplets except for the points
from the encounter domains, $z\in \R_1\cup\R_2\cup\R_3$  where some
deviations can be  observed.
        }
\end{figure}
%%%%%%%%%%%%%%%%%%%%%%%%%%%%%%%%%%%%%%%%%%%%%%%%%%%%%%%%%%%%%%%%%%%%%%%

The \twots\ $\Xx$ and $\Xx'$ of the above example shadow each other only
partially - outside of the subdomain $\R$ their points are not paired. On the
other hand, it turns out to be possible to find different \twots\
solutions of \refeq{CoupledCats} which shadow each other at every point
of $\ZLT$. Such solutions, referred as partners,  play an important role
in the semiclassical treatment of the corresponding quantum {problem}, since
their action differences are small, see
\refrefs{SieRic01,MuHeBrHaAl04,GutOsi13a,GutOsi13,GutOsi15}. We briefly
recall here the construction of partner solutions in  $d=2$ setting,
following \refref{GutOsi15}.

Let $\{\R_1, \dots, \R_{m}\},\; \R_j\subset \R$, be ${m}$ non-overlapping
domains obtained  by spacetime shifts $(n,t)\to(n+n_i,t+t_i,)$,
$i=2,\dots {m}$ of a given domain $\R_1$,
\[
\R_i=\Spshift^{n_i}\Tshift^{t_i}\,\R_1
\,,\qquad
\R_i\cap\R_j=\emptyset \mbox{ for } i\neq j
\,.
\]
Examples of such domains (referred to as ``${m}$-encounters'')
are given in \reffigs{fig:AKSs13TwoBlock}{fig:AKSs13TwoBlock2}. Assume
that the domain $\R_1$  has  an annular shape  with a non-empty interior
$A_1$, and an exterior $C_1=\ZLT \setminus{A}_1\cup\R_1$. We define the
width  $\ell$ of ``annulus'' $\R_1$   by determining the largest
$[\ell\!\times\!\ell]$ square domain $\R^{[\ell\times\ell]}$ that can fit
into $\R_1$ i.e., $\R^{[\ell\times\ell]}$ that has no simultaneous
intersections with both the hole and the exterior. In other words, $\ell$
is the maximum integer such that either
$A_1\cap\R^{[\ell\times \ell]}=\emptyset$, or
$C_1\cap\R^{[\ell\times \ell]}=\emptyset$,
for any translation of $\R^{[\ell\times \ell]}$. All encounter domains
$\R_i$ have the same width $\ell$, as they are translations of each
other.

Now,  let \brick\
\(
\Mm\equiv\Mm_1=\{\Ssym{z}\in\Ai|z\in\R\}
\)
be a  $[L\!\times\!T]$ symbolic representation of a \catlatt\
\refeq{CoupledCats} \twot\ state $\Xx\equiv\Xx_1$, such that it contains the same
{\brick} of symbols over each of the   subdomains
$\R_1,\dots,\R_{m}$,
\[
\Mm_{\R_1}=\Mm_{\R_2}= \dots = \Mm_{\R_{m}}
\,,
\]
where $\Mm_{\R_i}$ stands for the restriction of $\Mm$ to the domain
$\R_i$. Provided that the $m$  interior domains $A_1,\dots A_m$ have
different symbolic content, $\Mm_{A_i}\neq\Mm_{A_j}$ for $i\neq{j}$, we
can generate $m!$ distinct {\brick s} $\Mm_a, a=1,\dots m!$ by permuting
symbol \brick s $\Mm_{A_i}$ over the {domains} $A_1,\dots A_m$. This leads to
${m}!$ distinct \twots\
$\Xx_1, \dots \Xx_{{m}!}$
whose symbol {\brick s} $\Mm_{1}, \dots \Mm_{{m}!}$ share the following
property: each distinct $[\ell\!\times\!\ell]$ square symbol {\brick}
appears one and the same number of times in all $\Mm_{a}, \, a=1,\dots,
m!$, with  $\ell$ being  the width of the encounter domains $\R_i$. In
other words, if a   $[\ell\!\times\!\ell]$ symbol {\brick}
$\Mm^{[\ell\times\ell]}$
appears in $\Mm_1$ a number of times (or zero times),  it must appear the
same number of times in  the encoding $\Mm_{a}$ of any other \twot\
$\Xx_a, a\neq1$.

By the shadowing property, this in turn implies that all $\Xx_1, \dots
\Xx_{{m}!}$  pass through approximately the same points of the \statesp\
but in a different {\spt} `order'. The degree of their closeness is
controlled by the parameter $\ell$. The larger $\ell$ is, the closer two
different $\Xx_a$,  $\Xx_b$ come  to each other in the \statesp. In
\reffigs{fig:AKSs13TwoBlocks}{fig:AKSs13TwoBlocks1} we illustrate these
pairings for families of \twot\ solutions which symbolic representations
are shown in \reffigs{fig:AKSs13TwoBlock}{fig:AKSs13TwoBlock2},
respectively.

\section{Summary and discussion}
\label{sect:summary}

In this paper, we have analyzed the {\catlatt} \refeq{LinearConn}
linear encoding. We now summarize our main findings.

The finite alphabet of symbols \A\ encoding system's dynamics has been shown
to split into interior \Ai\ and exterior \Ae\ parts, where only symbolic
\brick s containing external symbols require non-trivial grammar
rules. \Brick s  composed of only interior symbols are {\admissible}
and   attain one and the same measure for a given {domain \R}.
Furthermore, the measure of a general {\brick} factorizes into product of a
constant factor $d_\R$ and the geometric one  $|\Pol_\R|$ which  can be
interpreted as a volume of certain type of polytope  in the Euclidean space
whose dimension is determined by the length of the boundary of \R. While
$d_\R$ is fixed by  \R, $|\Pol_\R|$ depends on the symbolic content and attains
maximum value $1$ for \brick s  of interior symbols. In addition, it has been
shown that a  local  {\brick}  of symbols determines approximate positions of
the corresponding \statesp\ points within an error
decreasing exponentially with  the size of the {\brick}.

The  number  of letters, ${2s-3}$, of the interior alphabet  \Ai\ grows
linearly with the increasing stretching parameter $s$, while the exterior
alphabet \Ae\ always consists of the 6 letters. In
the limit of large $s$,  the \brick s of a finite size affected by the
exterior letter pruning rules can be neglected,  as their total measure
tends to zero.   This implies that as $s$ grows, the dynamics of the
{\catlatt} map approaches a Bernoulli process. In particular, by
\refeq{SpatiotemporalEntropy} the ratio between the metric   entropy
$h_\Msr$ of the {\catlatt} and the topological entropy $\log ({2s-3})$  of
the full $({2s\!-\!3})$-shift   converges  to $1$, as
${s}\to\infty$.

As an application of these results we have constructed several examples
of partner \twots\ composed of interior symbols which visit the same
regions in the \statesp, but in different {\spt} ``orders''. Such
\twots\ shadowing each other are expected to play an important role in
the semiclassical treatment of the corresponding quantum models.

\subsection{Discussion and future directions}

Remarkably, as far as the linear encoding is concerned, the above results
hold both for the {cat map} and its coupled lattice generalization. In
both cases, the  proofs rely only upon ellipticity of the operator $\Box$
and the linearity of the equations. It is very plausible  that the same
results hold for the lattices $\integers^d$  of an arbitrary dimension $d$.
Indeed, the companion paper\rf{CL18} makes claim that its
\po\ theory formulation applies to \catlatt\ in any dimension.
Furthermore, the restriction to
the integer valued matrices in the definitions of maps appears unnecessary. Cat
map is a smooth version of the sawtooth map, defined by the
same equation \refeq{OneCat}, but for a real (not necessarily integer) value
of $s$. The linear encoding for the {saw map} has been analyzed in
\refref{PerViv} and its extension  to a coupled $\integers^d$ model along the
lines of the present paper seems to be straightforward. Also,  in the current
paper we stuck to  the Laplacian  form of $\Box$.   Again this seems to be
too restrictive and extension to other elliptic  operators of higher order
should be possible. Such operators necessarily appear within the models
with higher range of interactions.

A physically necessary extension of  current setting would be addition of an
external periodic potential ${V}$ to \refeq{LinearConn}, rendering this a
nonlinear problem,
 \begin{equation}
 (\Box -{d(s-2)}+ {V}'(\ssp_z)) \ssp_z = \Ssym{z}, \qquad z\in \integers^{d}
 \,. \label{LinearConnPerturbed}
\end{equation}
As long as the perturbation ${V}$ is sufficiently weak, this lattice map can
be conjugated to the linear {\catlatt}, with ${V}=0$.
This approach has been used in \refref{GutOsi15} to construct partner
{\twots} for perturbed cat map lattices.
On the other hand, for a sufficiently strong perturbation, such a conjugation
to linear system is no longer possible. Also, let us note that the lattice
models like  \refeq{LinearConnPerturbed} can be seen  as discretized versions
of PDEs arising from the Hamiltonian field theories.
In this respect, it would be of  interest  to study whether our results  can
be extended to the continuous, PDE setting.

Finally, there are many intriguing open questions about the quantum
properties of {\catlatt}s. Starting with the work of Hannay and
Berry\rf{HanBer80}, quantum cat maps have provided deep insights into
``single particle'' quantum chaos, see e.g. \refref{Keating91, Dana02,
Creagh94, ValSar99, KurRud00}. In the same spirit, quantization of the
{\catlatt} could serve as an inspiring model  for ``many-body'' quantum
chaos\rf{GutOsi15, AWGBG16, AWGBG17, AGBWG18}. One of its  most striking
features is the  dynamical self-duality, with spatial and temporal
evolutions being completely equivalent. Due to their remarkable features,
quantum lattice models with   such properties are currently attracting
much interest. Such self-dual models  exhibit properties of
``maximally-chaotic''  quantum systems\rf{BeKoPr18, AWGG16, BWAGG19}, and
yet turn out to be exactly solvable at the thermodynamical limit, with
correlations between local operators\rf{BeKoPr19-4}, the local-operator
entanglement\rf{BeKoPr19-9, GBAWG20, ClaLam20}, and  the time evolution of
the entanglement entropies\rf{BeKoPr19-1, GopLam19} exactly calculable.
It would be interesting to investigate whether similar results can be
established for the {\catlatt} map.

%%%%%%%%%%%%%%%%%%%%%%%%%%%%%%%%%%%%%%%%%%%%%%%%%%%%%%%%%%%%%%%%%%%%%%%%%%
\ack
Work of B.~G. and P.~C. was supported by the family of late G. Robinson,
Jr..  B.~G. acknowledges  support from  the Israel Science
Foundation through grant No.~2089/19.

%To obtain a simple heading of `Appendix' use the code
%\verb"\section*{Appendix}". If it contains numbered equations, figures or
%tables the command \verb"\setcounter{section}{1}" must follow it.
%%%%%%%%%%%%%%%%%%%%%%%%%%%%%%%%%%%%%%%%%%%%%%%%%%%%%%%%%%%%%%%%%%%%%%%%%%
\appendix

\section{Lattice Green's functions}
\label{sect:Green}

\subsection{Green's  function for 1\dmn\ lattice}

Consider the {cat map} equation \refeq{OneCat} with a delta function
source term
\begin{equation}
 (-\Box+s-2) \gd_{ij}=\delta_{ij}, \qquad i,j \in \integers^1
\,.
\label{GreenFun0}
\end{equation}
The corresponding free, infinite lattice Green's function $\gd_{ij}=\gd_{i-j, 0}$
is given by\rf{varcyc,PerViv}
\beq
\gd_{t 0}=\frac{1}{\pi}\int_{0}^{\pi}\frac{\cos(tx)}{s-2\cos x } dx
   = \ExpaEig^{-|t|}/(\ExpaEig-\ExpaEig^{-1})
\,,
\ee{GreenFun00}
with $s=\ExpaEig+\ExpaEig^{-1}$, $\ExpaEig>1$, as may be verified by
substitution.

\medskip

\paragraph{Dirichlet boundary conditions:}
In this case the Green's function $\gd$  satisfies
\refeq{GreenFun0}, but,  in addition, is subject to the  Dirichlet
boundary conditions:
\[ \gd_{0j} = \gd_{i0}= \gd_{\ell +1,j} = \gd_{i,\ell+1}=0
\,.
\]

It is possible to determine $\gd$ in two different ways.
The first one  is to use  the fact that $ \gd_{ij}=(\D^{-1})_{ij}$, where
$ \D$ is tridiagonal $[\ell\!\times\!\ell]$ matrix
\[\D= \left(\begin{array}{ccccccc}
s & -1 & 0 & 0 &\dots &0&0 \\
-1 & s & -1 & 0 &\dots &0&0 \\
0 & -1 & s & -1  &\dots &0 & 0 \\
\vdots & \vdots &\vdots & \vdots & \ddots &\vdots &\vdots\\
0 & 0 & \dots &  \dots &\dots &-1 & s  \end{array} \right )
\,.
\]
Since $\D$ is of a tridiagonal form,  its inverse can be  found explicitly.

An alternative  way to evaluate   $\gd_{ij}$   is to use {the free
Green's function \refeq{GreenFun00}} and take anti-periodic sum (similar
method can be used for  periodic and Neumann  boundary conditions)
\[   \gd_{ij}=\sum_{n=-\infty}^{\infty} \gd_{i,j+2 n(\ell+1)}- \gd_{i,-j+2 n(\ell+1)}. \]
This approach has an advantage of being  easily extendable  to
$\integers^2$ case. After substituting $\gd$ and taking the sum one
obtains
\begin{equation}
 \gd_{ij}=  \begin{array}{ll}
        \frac{U_{i-1}(s/2)U_{\ell-j}(s/2)}{U_{\ell}(s/2)} \qquad &\mbox{for } i\leq j\\
        \frac{U_{j-1}(s/2)U_{\ell-i}(s/2)}{U_{\ell}(s/2)} \qquad &\mbox{for } i> j .
        \end{array}
\,,
\label{BGtempCatGF}
\end{equation}
where \(
U_n(s/2)=\frac{\sinh (n+1)\Lyap}{\sinh \Lyap}
\,
\)
are Chebyshev polynomials of the second kind, and  $e^\Lyap=\ExpaEig$.
Note that the Green's function is strictly positive for both boundary
conditions.

\subsection{Green's  function for 2\dmn\ square lattice}
\label{sect:Green2D}

The free Green's function
$\gd(z,z')\equiv \gd(z-z',0)\equiv \gd_{z z'}$
solves  the equation
\begin{equation}
 (-\Box+{2s}-4)\gd_{z z'}=\delta_{zz'}
\,,\qquad
  z=(n,t) \in \integers^2
\,.
\label{GreenFun000}
\end{equation}
The solution is given by the double integral\rf{Martin06}
\beq
 \gd_{z0}=\frac{1}{{2}\pi^2}\int_{0}^{\pi}\int_{0}^{\pi}
           \frac{\cos(nx)\cos(ty)}{{s-\cos x -\cos y}} dx dy
 \,,
\ee{GreenFun1}
which, in turn, can be recast into single integral form,
\bea
 \gd_{z0} &=& \frac{1}{2\pi^3}\int_{-\infty}^{+\infty}d\eta\int_{0}^{\pi}\int_{0}^{\pi}
   \frac{\cos(nx)\cos(ty)}{({s}-2\cos x -i\eta)({s} -2\cos y+i\eta) } dx dy
    \continue
    &=&
 \frac{1}{2\pi}\int_{-\infty}^{+\infty}d\eta\,
 \frac{\mathcal{L}(\eta)^{-n}\mathcal{L}^*(\eta)^{-t}}{|\mathcal{L}(\eta)-\mathcal{L}(\eta)^{-1}|^2 }
\,,
\label{GreenFun2}
\eea
where
\beq
\mathcal{L}(\eta)+\mathcal{L}(\eta)^{-1}= {s}+i\eta, \qquad  | \mathcal{L}(\eta)|>1
\,. \ee{Lfunc}

The above equation can be thought as the integral over a product of two
$\integers^1$ functions:
\beq
 \gd_{z0} =
 \frac{1}{2\pi}\int_{-\infty}^{+\infty}d\eta\,
                 \gd_{n 0}({s}+i\eta) \gd_{t 0}({s}-i\eta)
\,.
\ee{GreenFun3}
An alternative representation is given by modified Bessel
functions $I_n(x)$ of the first kind\rf{Martin06}:
\beq
\gd_{z0} =\int_{0}^{+\infty}d\eta\,e^{-{s\eta/2}} I_n(\eta) I_t (\eta)
\,,
\ee{GreenFun4}
which demonstrates that $ \gd_{z z'}$ is
positive for all $z=(n,t)$.
The representation (\ref{GreenFun4})
enables explicit evaluation of the $n=t$ diagonal elements
in terms of a Legendre function,
\[
 \gd_{z0}=\frac{1}{2\pi i}Q_{n-1/2}({s^2/2}-1 ),\qquad
        {s^2/2}-1 >1, \quad z=(n,n)
 \,.
\]

\paragraph{Dirichlet boundary conditions.}
Consider next the Green's function $\gd_{zz'}$ which  satisfies
\refeq{GreenFun000} within the rectangular domain
\(
\R=\{ (n,t) \in
\integers^2 |1 \leq n\leq \ell_1 , 1 \leq t\leq \ell_2    \}
\)
and vanishes at its boundary $\partial \R$, see \reffig{fig:block2x2}(a).
By applying the same method as in the case of 1\dmn\ lattices we  get
\begin{eqnarray*}
\gd_{zz'}
&=&\sum_{j_1,j_2=-\infty}^{+\infty}\!\!
\gd_{n-n'+2j_1(\ell_1+1), t-t'+2j_2(\ell_2+1)} +
\gd_{n+n'+2j_1(\ell_1+1),t+t'+2j_2(\ell_2+1)}\\
& &\qquad -\gd_{n-n'+2j_1(\ell_1+1), t+t'+2j_2(\ell_2+1)}
  -\gd_{n+n'+2j_1(\ell_1+1),t-t'+2j_2(\ell_2+1)}
\,,
\end{eqnarray*}
where $\gd_{zz'}$ is the free Green's function \refeq{GreenFun1}.
Substituting  \refeq{GreenFun3} yields
the \spt\ Green's function as a convolution of the two 1\dmn\
Green's functions \refeq{BGtempCatGF}
\begin{equation}
 \gd_{z z'}  =\frac{1}{2\pi}\int_{-\infty}^{+\infty}d\eta\,
              \gd_{nn'}({s}+i\eta)  \gd_{tt'}({s}-i\eta)
\,.
\label{DirGreenFun1}
\end{equation}

\paragraph{Exponential decay of Green's function.}
From \refeq{DirGreenFun1} we have a bound on the magnitude of {the}
Green's function,
\bea
    &&|\gd_{z z'}|  \leq \frac{1}{2\pi}\int_{-\infty}^{+\infty}d\eta\,
              |\gd_{nn'}({s}+i\eta)|\,|\gd_{tt'}({s}-i\eta)|=
    \continue
    &=&
\int_{-\infty}^{+\infty} \frac{d\eta}{2\pi}\,
 \left( \frac{|\mathcal{L}|^{-|n-n'+1|}|\mathcal{L}|^{-|t-t'+1|} }{|\mathcal{L}-\mathcal{L}^{-1}|^2 }\right)\left(\frac{ \K_{n}K_{\ell_1 -n'+1}\K_{t}\K_{\ell_2 -t'+1} }{ \K_{\ell_1} \K_{\ell_2}}\right)
\,,
\label{GreenFunIneq}
\eea
where  $\mathcal{L}(\eta)$ is the root of the  equation (\ref{Lfunc})
with the largest absolute value, and
\[ %beq
 \K_{j}(\eta) =|1-\mathcal{L}(\eta)^{-2j}|, \qquad j=1,2,\dots \,.
\] %ee{KDef}
We now show that the first factor in the integrand of  (\ref{GreenFunIneq})
decays  exponentially with increasing $n-n'$, $t-t'$, while the second
one is bounded by a constant, and consequently
the Green's function $\gd_{zz'}$ decays exponentially
with increasing \spt\ distance between lattice points $z$ and $z'$.

By  (\ref{Lfunc}) we have
\[
  |\mathcal{L}(\eta)-\mathcal{L}(\eta)^{-1}|^2 =|({s}+i\eta)^2-4|
  \,.
\]
A lower bound on $|\mathcal{L}(\eta)|$ follows from the observation that
for ${s>2}$  the  minimum of
$|\mathcal{L}(\eta)|$ is achieved at $\eta=0$. Indeed, from  the identity
\[
\left(\frac{{s}}{|\mathcal{L}(\eta)|+|\mathcal{L}(\eta)|^{-1}}\right)^2
+ \left(\frac{\eta}{|\mathcal{L}(\eta)|-|\mathcal{L}(\eta)|^{-1}}\right)^2=1
\]
we obtain
\[
\frac{s/2}{|\mathcal{L}(\eta)|+|\mathcal{L}(\eta)|^{-1}}\leq 1,
\]
which, in turn, implies
\begin{equation}
|\mathcal{L}(\eta)|\geq \mathcal{L}(0)= e^\nu >1, \qquad  \cosh \nu ={s/2}
\,.
\label{LBound}
\end{equation}
The lower and upper bounds  on functions $\K_{j}(\eta)$ follow,
 \beq
 2>1 + |\mathcal{L}(\eta)^{-2j}| \geq \K_{j}(\eta) \geq 1 - |\mathcal{L}(\eta)^{-2j}| > 1- e^{-2\nu}.
\ee{KBound}
Inserting  (\ref{KBound}), (\ref{LBound}) into  (\ref{GreenFunIneq}) yields
 \beq
 |\gd_{zz'}|< C\exp(\nu|n-n'|+\nu|t-t'|)
\,,
\ee{GreenFunIneqFinal}
where
\begin{equation}
C=\left(\frac{2}{\sinh \nu}\right)^2 \int_{-\infty}^{+\infty}
        \frac{d\eta}{2\pi}\,|({s}+i\eta)^2-4|
\,.
\end{equation}

\subsection{Lattice Green's identity}
\label{sect:Green2Dident}

Consider
    \begin{equation}
   (-\Box+{(s-2)d})\,\ssp(z)= \m(z)
\,,
\label{CatMapContinues}
\end{equation}
where $\ssp(z)$,  $\m(z)$  are $C^2$ functions of continuous coordinates
$z\in \mathbb{R}^d$. For ${s<2}$ this is the inhomogeneous Helmholtz
equation, whose general solution is a sum of complex exponentials. For
${s>2}$, the case studied here, the equation is known as the screened
Poisson equation\rf{FetWal03}, or Yukawa equation, whose general solution
is a sum of exponentials.

Let $g(z,z')$, $z,z' \in \R$ be the corresponding
Green's function  on  a  bounded, simply connected domain $\R\subset
\mathbb{R}^d$,
\begin{equation}
 (-\Box+{(s-2)d})\,g(z,z')=\delta^{(d)}(z-z'),
\,
\label{GreenFunContinues}
\end{equation}
satisfying some boundary condition  (e.g., periodic, Dirichlet or Neumann) at
$\partial \R$.
The Green's function identity allows us to connect the values of  $\ssp_{z}$
inside of $\R$ with the ones attained at the boundary:
\begin{eqnarray}
 x(z) &=& \int_{\R} g(z,z')\m(z') dz'\nonumber \\
 &-& \int_{\partial \R} \nabla_n\,g(z,z'')x(z'')\,d z''
  +  \int_{\partial \R} \nabla_n\,x(z'') g(z,z'')\,d z''
\,.
 \label{GreensTheor}
\end{eqnarray}
The analogous theorem holds in the discrete setting as well.  For the
sake of simplicity, we will  restrict our considerations  to  $d=1,2$.

\noindent\emph{1\dmn\ lattice.}
Let $\gd_{tt'}$ be a Green's function on $\integers^1$  satisfying
\refeq{GreenFun0} and some boundary condition at the end points $0$,
$\ell+1$.   To prove {Green's theorem} for a solution $\ssp_{t}$ of
\refeq{OneCat}  we  multiply  each  side of  this equation by
$\gd_{tt'}$  and sum up over the index $t$ running  from $1$ to $\ell$.
In a similar way, the two sides of     \refeq{GreenFun0} can be
multiplied with  $x_t$ and summed up over the same interval. After
subtraction of two equations, we obtain
\begin{eqnarray*}
 \ssp_{t}&=&\sum_{t'=1}^{\ell}\Ssym{t'}\gd_{t't} -\ssp_{\ell} \gd_{\ell+1,t}+\ssp_0 \gd_{1t} -\ssp_1 \gd_{0t}+\ssp_{\ell+1} \gd_{\ell t}, \\
 &=&\sum_{t'=1}^{\ell}\Ssym{t'}\gd_{t't} - \ssp_{\ell} \partial_n \gd_{\ell t}-\ssp_1 \partial_n \gd_{1t}   +\partial \ssp_1 \gd_{1t} +\partial\ssp_{\ell} \gd_{\ell t},
\end{eqnarray*}
with $\partial_n \phi_{\ell} := \phi_{\ell+1}-\phi_{\ell} $,
$\partial_n \phi_{1} := \phi_{0}-\phi_{1} $
being the normal derivatives  at the two boundary points. This equation is
of the exactly same form as \refeq{GreensTheor}.  For the Green's function
$\gd$ with the Dirichlet boundary condition it  simplifies further  to:
\begin{equation}
 \ssp_{t}=\sum_{t'=1}^{\ell}\Ssym{t'}\gd_{t't}
          +\ssp_0 \gd_{1t}
          +\ssp_{\ell+1} \gd_{\ell t}
\,.
\label{GreensTheor1DLattice}
\end{equation}

\noindent\emph{2\dmn\ lattice.}
Let $\ssp_{z}$ be a solution of  \refeq{CoupledCats} within   a
domain  $\R$.  After multiplication of two sides of
\refeq{CoupledCats} with {the} Green's function  satisfying \refeq{GreenFun000}
and summing up over the domain $\R$  we get
\[
\sum_{z' \in\R}\gd_{zz'}\left(\sum_{i=1}^4 \ssp_{z'+e_i}-{2s}\ssp_{z'}\right)
  =\sum_{z' \in\R}\gd_{zz'}\Ssym{z'}
  \,,
\]
with $e_{1,2}=(0,\pm 1), e_{3,4}=(\pm 1,0)$  being the four vectors connecting the neighboring sites
of the lattice. In the same way, we obtain from \refeq{GreenFun000}
\[
\sum_{z' \in\R}\ssp_{z'}\left(\sum_{i=1}^4 \gd_{z+e_i,z'}-{2s}\gd_{zz'}\right)
=\sum_{z' \in\R}\delta_{zz'}\ssp_{z'}
=-\ssp_{z}
\,.
\]
The subtraction of two equations yields
\begin{equation}
 x_z = \sum_{z' \in\R} \gd_{zz'}\Ssym{z'}
 -  \sum_{z'' \in\partial \R}\nabla_n \gd_{zz''}\ssp_{z''}
  + \sum_{z'' \in\partial \R} \nabla_n \ssp_{z''} \gd_{zz''}
\,,
\label{GreensTheor2DLattice}
\end{equation}
with  $\nabla_n \phi_z:= \sum_{e_i\notin \R}\phi_{z+e_i} -\phi_{z}$.
For a  Green's function  satisfying Dirichlet  boundary conditions,
\refeq{GreensTheor2DLattice}  can be simplified  further, as illustrated by:

\paragraph{Example: Rectangular domain.}
Given a rectangular domain $\R=\R^{[\ell_1\times\ell_2]}$, let
$\gd_{zz''}$ be the corresponding Dirichlet Green's function vanishing
at the  boundary $\partial\R$. Since  $\gd_{zz''}=0$ for all
$z''\in\partial \R$, equation~\refeq{GreensTheor2DLattice}  can be
written down in the following form
\begin{equation}
 x_z = \sum_{z' \in\R} \gd_{zz'}\Ssym{z'}
 +  \sum_{z'' \in \partial\R}\gd_{z\bar{z}''}\ssp_{z''}
\,,
\end{equation}
where $\bar{z}'' \in  \R$ is a    point of  $\R$,    adjacent   to $z''
\in \partial\R$, see  \reffig{fig:block2x2}(a). Note that  the
rectangular corners are associated with    two  points     of the
boundary ${\partial\R}$.  Otherwise, the relationship between
$\bar{z}''$ and $z''$ is one-to-one.

\section{\catLatt, Hamiltonian formulation}
\label{sect:HamiltonCatLatt}

The Hamiltonian setup of the \catlatt\ \refeq{CoupledCats} is discussed in
detail in \refref{GutOsi15}. In this paper, we use it to generate
{\spt}ly chaotic patterns by time evolution of random initial
conditions on a cylinder infinite in time direction, but $L$-periodic in the
space direction.

In one spatial dimension the momentum
field at the lattice site ($n$'th ``particle'') $q_n$ is given by
\(
p_{nt}= \ssp_{nt} - \ssp_{n,t-1}\,,
\)
and the Hamiltonian cat map  \refeq{eq:CatMap} at each lattice site is
coupled to its nearest neighbors by
 \bea
 x_{n,t+1}&=& p_{nt}-x_{n+1,t}+a\,x_{nt}-x_{n-1,t}-m^x_{n,t+1}
\continue
 p_{n,t+1}&=&
   bp_{nt}+(ab-1)\,x_{nt}-{b}\left(x_{n+1,t}+x_{n-1,t}\right)- m^p_{n,t+1}
 \;,
\label{eqmotion}
\eea
where $(x_{nt},p_{nt})$ are the coordinate and momentum of the $n$'th
``particle'' at the discrete time $t$, and $m^x_{nt},  m^p_{nt}$ are the
corresponding winding numbers necessary to bring $(x_{n,t+1},p_{n,t+1})$
to the unit interval.
As for the {cat map}, integers $a$ and $b$ are arbitrary; the Lagrangian
form \refeq{eq:CatMapNewton2} of the map only depends on their sum
${2s}=a+b$. Hamiltonian winding numbers\rf{GutOsi15} are connected
to the Lagrangian ones by
$m_{nt}=-b\,m^x_{nt}+m^x_{n,t+1}-m^p_{nt}$.

In the Hamiltonian setup, the $d=2$ \catlatt\ is thus viewed as a
$\integers^1$ chain of  linearly coupled cat maps acting on the \statesp\
$V$, a direct product of the  2\dmn\  tori $V=\otimes_{n}\mathbb{T}_n^2$,
${n\in\integers^1}$. Each torus $\mathbb{T}_n^2$ is equipped with the
\statesp\ coordinate pair $(x_{n},p_{n}) \in (0,1]\times(0,1]$
corresponding to the position and momentum  of the $n$'th ``particle''.
The law of time evolution \refeq{eqmotion} is time as well as space
translation invariant under shifts $\Tshift$ and $\Spshift$, respectively.
Along  with the infinite chain, in
\reffigs{fig:AKSs13TwoBlocks}{fig:AKSs13TwoBlocks1} we use the spatial
finite setup,
with $L$
Hamiltonian cat maps coupled cyclically, and \refeq{eqmotion} subject to
the periodic boundary conditions $(x_{n},p_{n}) = (x_{n+L},p_{n+L})$.
This  defines a linear map
\(
V_{\scriptscriptstyle L}\to V_{\scriptscriptstyle L}
\)
on the $2L$\dmn\  \statesp\ $V_{\scriptscriptstyle L}=\otimes_{n=1}^L
\mathbb{T}_n^2$:
\beq
Z_{t+1}=\B_{\scriptscriptstyle L} Z_{t}\,\;\; \mod 1
\,, \qquad
Z_{t}
= (\ssp_{1,t},p_{1,t}, \dots  \ssp_{{\scriptscriptstyle{L}},t},
   p_{{\scriptscriptstyle{L}},t})^T
\,,
\ee{LperHamiltonian}
where $Z_{t}\in{V_{\scriptscriptstyle L}}$,
and $\B_{\scriptscriptstyle L}$ is the $[2L\!\times\!2L]$  matrix
\[
 \B_{\scriptscriptstyle L}= \left(\begin{array}{cccccc}
      A     & B     &     0  & \dots  & 0 & B\\
      B     & A     &     B      &\dots  &  0 & 0\\
       0 & B     &    A      &\dots  &  0 & 0\\
      \vdots&\vdots &   \vdots & \ddots &\vdots  &\vdots \\
       0 &  0 &    0  & \dots  &A  & B\\
      B     &  0 &    0  & \dots  &B  & A\\
     \end{array} \right)
 \,,\qquad\begin{array}{l}
A= \left(\begin{array}{cc} a& 1\\
ab-1 & b
\end{array} \right)
 \\
B=-\left(\begin{array}{cc} 1& 0\\
b & 0
\end{array} \right)
          \end{array}
\,.
\]
The spectrum of the Lyapunov exponents (linear stability of the Hamiltonian
\catlatt\ \refeq{eqmotion}, posed as $t=0$ initial problem, evolving in time)
is given by the $\B_{\scriptscriptstyle L}$ eigenvalues (see eq.~(3.6) of
\refref{GutOsi15}):
\beq
 \ExpaEig_k+\ExpaEig_k^{-1}={2s}-2\cos(2\pi k/L), \qquad k=1,\dots L
\,,
\ee{qudreq}
${2s}=a+b \in \integers$.
Accordingly, the map  is fully hyperbolic iff ${|s|>2}$, with all Lyapunov
exponents $\Lyap_k^{\pm} = \pm\log|\ExpaEig_k|$ paired as $\Lyap_k^+=
-\Lyap_k^-$, $\Lyap_k^+ >0$, for all $k$. Since the matrix
$\B_{\scriptscriptstyle L}$ is symplectic,  the map
\refeq{LperHamiltonian}  preserves the measure $d\Msr_{\scriptscriptstyle
L}=\prod_{i=1}^L dx_i dp_i$ on $V_{\scriptscriptstyle L}$. In the limit
$L\to\infty$, $\Msr_{\scriptscriptstyle L}$ induces the corresponding
measure $\Msr$  on $V$, invariant under both discrete time evolution and
discrete spatial lattice translations.

\section{Metric entropy}
\label{sect:metricEntropy}
 \begin{figure}	
 	\centering
	\includegraphics[width=0.85\textwidth]{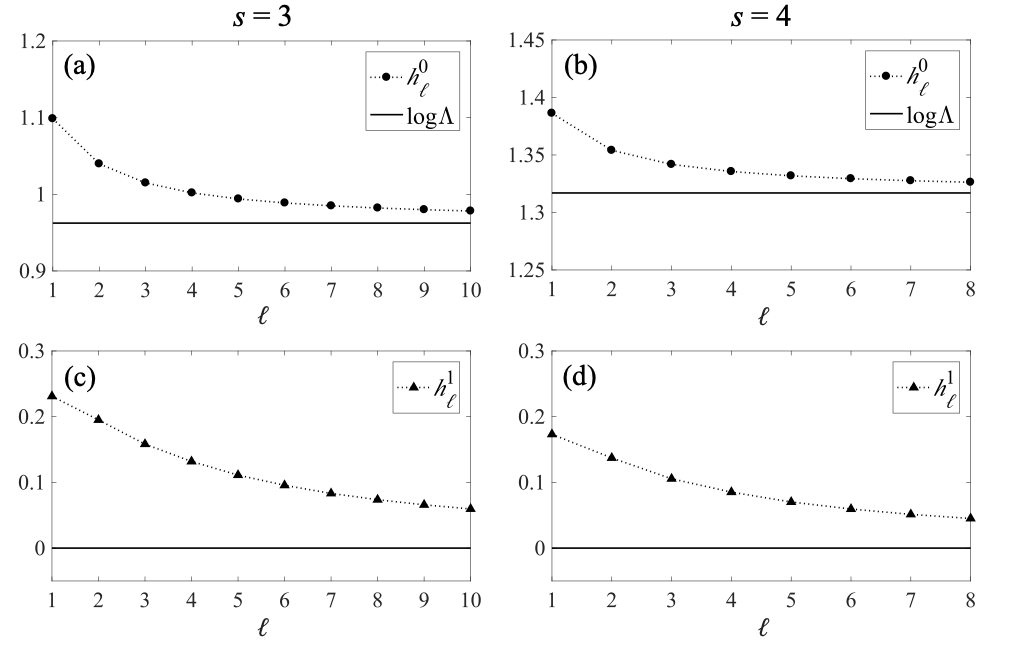}
        \hspace{0.1\textwidth}
	\caption{\label{fig:RJentropyPlot}
Geometric, $h_\ell^1$, and interior, $h_\ell^0$, parts of entropy vs. the
length $\ell$ of
    a cat map symbol \brick\
for
(a), (c) $s=3$, and
(b), (d) $s=4$.
The interior
    part of entropy
$h_\ell^0$ converges to the exact value of the metric entropy, which is
$\ln(3+\sqrt{5})/2$ for $s=3$ and $\ln(2+\sqrt{3})$ for $s=4$,
respectively.  The geometric part $h_\ell^1$ converges to $0$.
    }
\end{figure}

Since the Lyapunov exponents of {\catlatt} are known explicitly for any
finite lattice, its  metric entropy is known exactly, through the Pesin
entropy formula\rf{Pesin77}. On the other hand, the metric entropy can be
represented through  sums of symbol {\brick s} $\Mm_\R$ measures  in the
limit of growing domain size $|\R|$. As we show below, this connection
can be used to extract the asymptotics of the Jacobian $d(\R)$ which, in
turn, provides the maximum  of measures for symbol {\brick s} in a given
domain $\R$.

\subsection{Cat map entropy}
\label{sect:catEntropy}

Given  measures  $\Msr({b})$ of {\brick s} $ {b}$ of an arbitrary finite
length $\cl{b}=\ell$ one can estimate observables of the dynamical system
with increasing  precision as $\ell$ grows. In the following we apply
this  to estimation of the metric entropy of the cat map.

The metric entropy  of the cat map with respect to measure $\Msr$ can be
represented as the limit
\begin{equation}
h_\Msr =\lim_{{\ell}\to\infty} {h}_\ell,
\label{EntropyEigenvalue}
\end{equation}
where
\[
h_\ell = -\frac{1}{\ell} \sum_{|{b}|=\ell}{\Msr({b}) \log{\Msr({b})}}
\,.
\]
By using the decomposition \refeq{FreqDecomp}, $h_\ell$  can be split as
$h_\ell=h_\ell^{1}+h_\ell^{0}$ into ``geometric'' and ``internal'' parts:
\[
h_\ell^{1}=
-\frac{\sum_{|{b}|=\ell}{|\Pol_{{b}}| \log{|\Pol_{{b}}|}}}{\ell \sum_{|{b}|=\ell}{|\Pol_{{b}}| }},
\qquad
h_\ell^{0}=-\frac{1}{\ell}\log d_\ell
          = \frac{1}{\ell}\log U_{\ell}({s}/{2})
\,.
\]
The ``interior'' part $h_\ell^{0}$ converges to $\log \ExpaEig$ with the
rate $O(1/\ell)$. Since $h_\Msr=\log \ExpaEig$, the cat  map  metric
entropy\rf{Sinai59},  the geometrical part must vanish,
$\lim_{{\ell}\to\infty}h_\ell^{1}=0$. We {illustrate} this numerically for
several values of $s$ in \reffig{fig:RJentropyPlot}. In other words,
metric entropy  is determined solely by the asymptotics of the Jacobian
$d_\ell$.

\subsection{\catLatt\ entropy}
\label{sect:catLattEntropy}

By the linearity of the \catlatt\ automorphism, every \catlatt\
$[\speriod{}\!\times\!\period{}]$ {\twot} has the same spectrum of  the
Lyapunov exponents  $\Lyap_k$, $k=1,\dots,2\speriod{}$, given by the
eigenvalues \refeq{qudreq} $\ExpaEig_k =e^{\Lyap_k}$ of the matrix
$\B_{\scriptscriptstyle L}$.

Accordingly, the map  is fully hyperbolic iff ${2s}=a+b>4$.    In this
case all solutions   of \refeq{qudreq} are paired such that $\Lyap_k^+=
-\Lyap_k^-$ and  $\Lyap_k^+ >0$ for all $k$.  The metric entropy  of
$\Phi_{\speriod{}}$ for a finite $\speriod{}$ is given by the sum of all
positive exponents:
\[
h(\Phi_{\speriod{}})= \sum_{k=1}^\speriod{}  \Lyap_k^+
\,.
\]
 For the   infinite lattice the corresponding  {\spt} entropy of
 $\Phi$ with respect to $\Msr$ is given by  the limit
$
   h_\Msr(\Phi) =\lim_{L\to\infty}\frac{1}{L} {h}(\Phi_{\scriptscriptstyle L})
$.
Alternative, more convenient way to evaluate $h_\Msr(\Phi)$ is to
use numbers $\mathcal{N}_{\speriod{}\period{}}$ of  periodic tori with
spatial-temporal periods $\speriod{},\period{}$, see \refref{GutOsi15,CL18}.
For the linear hyperbolic automorphisms the topological and metric entropies
coincide, leading to
\bea
h_\Msr(\Phi) &=&\lim_{\speriod{},\period{}\to\infty}\frac{1}{LT} \log \mathcal{N}_{\speriod{}\period{}}
\continue
&=&\lim_{\speriod{},\period{}\to\infty}\frac{1}{\speriod{}\period{}} \log \mbox{det} (I-\B_{\speriod{}}^\period{})
=  \lim_{\speriod{},\period{}\to\infty}\frac{1}{\speriod{}\period{}} \log \mbox{det} (I-\B_{\period{}}^\speriod{})
\,.
\label{TopologicalEntropy}
\eea
This yields a closed formula:
\beq
h_\Msr(\Phi)=\frac{1}{\pi^2} \int_0^{\pi}\int_0^{\pi}\,dx\,dy
\log({2s}-4+4\sin^2 x +4\sin^2 y)
 \,.
\ee{SpatiotemporalEntropy}

Recall that for the (temporal) cat map the constant $d(\R)$  was given
explicitly  by  \refeq{SingleCatJacobian}. In the case of
{\catlatt} we were unable to derive any explicit formula for $d(\R)$.
On the other hand, by the same argument as in the {cat map} case, in the
large domain \R\ limit the asymptotics of $d(\R)$ is determined  by the
metric entropy,
\[
-\lim_{|\R|\to\infty} \frac{1}{|\R|}\log d(\R) =  h_\Msr
\,,
\]
where $h_\Msr$ is the {\spt}  metric entropy \refeq{SpatiotemporalEntropy}
of the {\catlatt}. Therefore, asymptotically we expect
\beq
d(\R) \sim
e^{-|\R| h_\Msr}
\,.
\ee{AsymJacobian}

%%%%%%%%%%%%%%%%%%%%%%%%%%%%%%%%%%%%%%%%%%%%%%%%%%%%%%%%%%%%%%%%%%%%%%%%%%
\printbibliography[
heading=bibintoc,
title={References}
				  ] %, type=online]  % if not using default "Bibliography"
%%%%%%%%%%%%%%%%%%%%%%%%%%%%%%%%%%%%%%%%%%%%%%%%%%%%%%%%%%%%%%%%%%%%%%%%%%
\end{document}